\documentclass[]{article}

% Set the typeface to Times Roman
\usepackage{times}
\usepackage{url,tabularx,array,enumerate,amsmath}
\usepackage{graphicx}
\usepackage[english]{babel}
\usepackage{natbib}
\usepackage{amsthm}
\usepackage{enumerate}
\usepackage{algorithm}
\usepackage{algpseudocode}
\usepackage{caption}
\usepackage{subcaption}
\usepackage[multiple]{footmisc}
\usepackage{thmtools,thm-restate}
\usepackage{verbatim}
\usepackage{xcolor}
\usepackage{ijcai17}

\newtheorem{mydef}{Definition}
\newtheorem{theorem}{Theorem}

\newtheorem{lemma}{Lemma}

\newtheorem{corollary}{Corollary}

\newtheorem{supptheorem}{Supplemental Theorem}

\newtheorem{elemma}{Conditional Edge Lemma}
\newcommand{\E}{{\rm I\kern-.3em E}}
\theoremstyle{definition}

\begin{comment}

\end{comment}

\newcommand{\indep}{\rotatebox[origin=c]{90}{$\models$}}
\newcommand{\notindep}{\not\!\perp\!\!\!\perp}

\title{Identification and Model Testing in Linear Structural Equation Models using Auxiliary Variables}

\author{Bryant Chen\\
IBM Research\\
\url{bryant.chen@ibm.com}
\And
Daniel Kumor\\
Purdue University\\
\url{dkumor@purdue.edu}
\And
Elias Bareinboim\\
Purdue University\\
\url{eb@purdue.edu}}
% \usepackage[accepted]{icml2017}

% \icmltitlerunning{Identification and Model Testing using AVs}

\begin{document}
\maketitle
% \twocolumn[
% \icmltitle{Identification and Model Testing in Linear Structural Equation Models using Auxiliary Variables}

% %\title{Identification and Model Testing in Linear Structural Causal Models using Auxiliary Variables}

% %\icmlsetsymbol{equal}{*}

% \begin{icmlauthorlist}
% \icmlauthor{Bryant Chen}{IBM}
% \icmlauthor{Daniel Kumor}{Purdue}
% \icmlauthor{Elias Bareinboim}{Purdue}
% \end{icmlauthorlist}

% \icmlaffiliation{IBM}{IBM Research, San Jose, California, USA}
% \icmlaffiliation{Purdue}{Purdue University, West Lafeyette, Indiana, USA}

% \icmlcorrespondingauthor{Bryant Chen}{bryant.chen@ibm.com}
% \icmlcorrespondingauthor{Daniel Kumor}{dkumor@purdue.edu}

% % You may provide any keywords that you 
% % find helpful for describing your paper; these are used to populate 
% % the "keywords" metadata in the PDF but will not be shown in the document
% \icmlkeywords{}

% \vskip 0.3in
% ]

% \printAffiliationsAndNotice{} 

\begin{abstract}
%In this paper, we extend the theory on a new type of variable, called auxiliary, that are used to incorporate acquired knowledge into a structural equation model.  More specifically, we graphically characterize conditional independence between auxiliary and model variables, allowing them to enhance existing identification and model testing methods that rely on conditional independence conditions (e.g. the single-door criterion, conditional instrumental variables, generalized instrumental sets, d-separation, and more).  We use this result to derive graphical conditions for when auxiliary variables qualify as conditional instrumental variables or can be utilized in generalized instrumental sets.  We then give a bootstrapping identification algorithm that utilizes these ``auxiliary instrumental sets'' to identify as many coefficients as possible in a linear structural equation model.  As new coefficients are identified, this algorithm incorporates the knowledge of their values to help in the identification of other coefficients.  We demonstrate that this method is able to identify strictly more coefficients and models than the existing state of the art methods. Lastly, we discuss additional applications of our results to the problems of model testing and identification of causal effects when incorporating limited experimental data (also called z-identification).

We developed a novel approach to identification and model testing in linear structural equation models (SEMs) based on auxiliary variables (AVs), which generalizes a widely-used family of methods known as instrumental variables. The identification problem is concerned with the conditions under which causal parameters can be uniquely estimated from an observational, non-causal covariance matrix. In this paper, we provide an algorithm for the identification of causal parameters in linear structural models that subsumes previous state-of-the-art methods. In other words, our algorithm identifies strictly more coefficients and models than methods previously known in the literature. Our algorithm builds on a graph-theoretic characterization of conditional independence relations between auxiliary and model variables, which is developed in this paper. Further, we leverage this new characterization for allowing identification when limited experimental data or new substantive knowledge about the domain is available. Lastly, we develop a new procedure for model testing using AVs.
\end{abstract}

%\include{theorems}
% \nipsfinalcopy is no longer used
%\maketitle

\section{Introduction}

The problem of estimating causal effects is one of the fundamental problems in the data-driven sciences.  In order to estimate a causal effect, the desired effect must be \emph{identified} or uniquely expressible in terms of the probability distribution over the available data.  Causal effects are identified by design in randomized control trials, but in many applications, such experiments are not possible.  When only observational data is available, determining whether a causal effect is identified requires modeling the underlying causal structure, which is generally done using \emph{structural equation models} (SEMs) (also called \emph{structural causal models}) \citep{pearl:09,bareinboim2016causal}.  

A structural equation model consists of a set of equations that describe the underlying data-generating process for a set of variables.  
While SEMs, in their most general, non-parametric form do not require any assumptions about the form of these functions, in many fields, including machine learning, psychology, and the social sciences, linear SEMs are used.  A linear SEM consists of a set of equations of the form, $X = \Lambda X+ U$, where $X = [x_1 , ... , x_n]^t$ is a vector containing the model variables, $\Lambda$ is a matrix containing the \emph{coefficients} of the model, and $\Lambda_{ij}$ represents the direct effect of $x_i$ on $x_j$, and $U = [u_1 , ..., u_n]^{t}$ is a vector of normally distributed error terms, which represents omitted or latent variables.\footnote{Instrumental and auxiliary variables can also be used when normality is not assumed, but to simplify the proofs in the paper, we will, as is commonly done by empirical researchers, assume normality.}  The matrix $\Lambda$ contains zeroes on the diagonal, and $\Lambda_{ij} = 0$ whenever $x_i$ is not a cause of $x_j$.  The covariance matrix of $X$ will be denoted by $\Sigma$ and the covariance matrix over the error terms, U, by $\Omega$.  In this paper, we will restrict our attention to \emph{semi-Markovian} models \citep{pearl:09}, models where the rows of $\Lambda$ can be arranged so that it is lower triangular, and the corresponding graph is acyclic.  

When modeling using SEMs, researchers typically specify the model by setting certain entries of $\Lambda$ and $\Omega$ to zero (i.e. exclusion and independence restrictions), while leaving the rest of the entries as free parameters to be estimated from data\footnote{There are a number of algorithms for discovering the model structure from data\citep{spirtes2000causation,shimizu:etal06,pearl:09,zhang:09,mooij:etal16}. However, it is only in very rare instances that these methods are able to uniquely determine the model structure. As a result, model specification generally utilizes knowledge about the domain under study.}. Restricting a particular entry $\Lambda_{ij}$ to zero reflects the assumption that $Y_i$ has no direct effect on $Y_j$. Similarly, restricting $\Omega_{ij}$ to zero reflects the assumption that there are no unobserved common causes of both $Y_i$ and $Y_j$. Once the parameters are estimated, causal effects (as well as counterfactual quantities) can be computed from the structural coefficients directly \citep{pearl:09,chen:pea14}. However, in order to be estimable from data, a parameter must first be identified. In some cases, the modeling assumptions are not strong enough, and there are multiple, often infinite, values for the parameter that are consistent with the observed data. As a result, two fundamental problems in SEMs are to identify and estimate the model parameters and to test the underlying assumptions that enable identification.

%However, depending on the underlying causal structure, there may be multiple values for a particular parameter that is consistent with the data. In such cases, we say the parameter is \emph{not identified}, and it cannot be estimated. As a result, two fundamental problems in SEMs are to identify and estimate the structural parameters and to test the validity of the model structure. 

The problem of identification has been studied extensively by econometricians and social scientists \citep{fisher:66,bowden:tur84,bekker1994identification,rigdon1995necessary} and more recently by the AI and statistics communities using graphical methods \citep{spirtes:etal98,tian:07a,tian2009parameter,brito:pea02a,brito:pea02c,brito:pea06,bareinboim2016causal}.  To our knowledge, the most general, efficient algorithm for model identification is the g-HT algorithm given by \citet{chen:16} combined with ancestor decomposition \citep{drton:wei16}. This method generalizes the half-trek algorithm of \citet{foygel:12} and utilizes ancestor decomposition, which expands on an idea by \citet{tian:05} where the model is decomposed into simpler sub-models. Graphical methods have also been applied to the problem of testing the causal assumptions embedded in an SEM.  For example, d-separation \citep{pearl:09} and overidentification \citep{pearl:04,chen:etal14} provide the means to discover testable implications of the model, which can be used to test it against data. 

Despite decades of attention and work from diverse fields, the identification problem\footnote{To be precise, we are referring to the problem of identification almost everywhere \citep{brito:pea02b}, also called generic identification \citep{foygel:12}.} has still not been efficiently solved\footnote{An exhaustive procedure can be obtained using Gr\"obner bases methods \citep{foygel:12}. However, these methods are computationally intractable for anything but the smallest of graphs.}.  There are identifiable parameters and models that none of the above methods are able to identify.  Similarly, there are testable implications of SEMs that the above methods are unable to detect.  One promising avenue to aid in both tasks are \emph{auxiliary variables} \citep{chen:etal15}.  Each of the aforementioned methods for identification and model testing only utilizes restrictions on the entries of $\Lambda$ and $\Omega$ to zero.  Auxiliary variables can be used to incorporate knowledge of non-zero coefficient values into existing methods for identification and model testing.  These coefficient values could be obtained, for example, from a previously conducted randomized experiment, from substantive understanding of the domain, or even from another identification technique. The intuition behind auxiliary variables is simple: if the coefficient from variable $w$ to $z$, $\beta$, is known, then we would like to remove the direct effect of $w$ on $z$ by subtracting it from $z$. This removal eliminates confounding paths through $w$ and is performed by creating a variable $z^* = z-\beta w$, which is used as a proxy for $z$.  In many cases, $z^*$ allows the identification of parameters or testable implications using existing methods when $z$ could not.

\citet{chen:etal15} demonstrated how auxiliary variables could be utilized in simple instrumental sets (instrumental sets that do not utilize conditioning to block spurious paths) \citep{brito:pea02a,zander:etal15} and proved that any model identifiable using the g-HT algorithm is also identifiable using auxiliary simple instrumental sets. 

Since auxiliary variables allow knowledge of non-zero coefficient values to be incorporated into existing methods for identification, they are also directly applicable to the problem of z-identification \citep{bareinboim:pea12}, in which partial experimental data is available. Additionally, the cancellation of paths that results from adding an AV may result in conditional independence constraints between the AV and other variables that can be used to test the model. %Both of these potential applications were mentioned by \citet{chen:etal15}, but they did not show how to systematically address them.

%Since auxiliary variables allow knowledge of non-zero coefficient values to be incorporated into existing methods for identification, they are also directly applicable to the problem of z-identification \citep{bareinboim:pea12}, in which partial experimental data is available. For example, suppose we are interested in the causal effect of $x$ on $y$, it cannot be identified from the model, and $x$ cannot be randomized (due to cost or ethical reasons). However, another variable, $z$, can be randomized. AVs allow us to utilize information gained from the surrogate experiment on $z$ to help determine whether the effect of $x$ on $y$ is now identified. 

%However, AVs also help to identify causal effects, even when external information is not available by iteratively identifying coefficients and using them to generate AVs. \citet{chen:etal15} demonstrated how auxiliary variables could be utilized in simple instrumental sets (instrumental sets that do not utilize conditioning to block spurious paths) and give an identification algorithm that iteratively identifies coefficients using auxiliary simple instrumental sets, where coefficients identified in previous iterations are utilized to generate new auxiliary instrumental sets. Additionally, they prove that this algorithm is able to identify any coefficient that the state-of-the-art g-HT algorithm \citep{chen:16} is able to identify. 

In this paper, we generalize the results of \citet{chen:etal15} and demonstrate how auxiliary variables can be utilized in generalized instrumental sets, which allow for conditioning to block spurious paths. We prove that, unlike auxiliary simple instrumental sets, this generalization \emph{strictly} subsumes the g-HT algorithm. Additionally, we introduce quasi-instrumental sets, which utilize auxiliary variables to identify coefficients when partial experimental data is available. Quasi-instrumental sets are incorporated into our identification algorithm, allowing it to better address the problem of z-identification. To our knowledge, this algorithm is the first systematic method for tackling z-identification in linear systems. We also demonstrate how auxiliary instrumental sets and quasi-instrumental sets can be used to derive over-identifying constraints, which can be used to test the model specification against data. Moreover, we prove that these overidentifying constraints subsume conditional independence constraints among auxiliary variables. Lastly, we discuss related work, showing how auxiliary IVs are able to unite a variety of disparate methods under a single framework.

%\citet{chen:etal15} graphically characterize when auxiliary variables would be independent of model variables and demonstrate how they can be utilized in simple instrumental sets, a generalization of instrumental variables (IVs) that do not utilize conditioning to block spurious paths.  They prove that iteratively identifying coefficients using auxiliary variables and simple instrumental sets, where coefficients identified in previous iterations are utilized to generate new auxiliary variables, allows the identification of any model that the state-of-the-art g-HT algorithm \citep{chen:16} is able to identify.  However, they do not demonstrate that this method is able to identify more models than the g-HT algorithm.  More importantly, they do not characterize when auxiliary variables would be conditionally independent of model variables, a considerably more difficult task due to the recursive nature of expressing conditional covariances in terms of the model parameters \citep{pearl2013linear,brito:pea02a}.  As a result, it was not clear how to utilize auxiliary variables with the majority of identification and model testing methods, which rely on the ability to detect conditional independences implied by the model (e.g. single-door criterion, conditional instruments, generalized instrumental sets, d-separation etc.).

\section{Preliminaries}
\label{sec:prelim}

The causal graph or path diagram of an SEM is a graph, $G=(V,D,B)$, where $V$ are nodes or vertices, $D$ directed edges, and $B$ bidirected edges.  The nodes represent model variables.  Directed eges encode the direction of causality, and for each coefficient $\Lambda_{ij}\neq 0$, an edge is drawn from $x_i$ to $x_j$.  Each directed edge, therefore, is associated with a coefficient in the SEM, which we will often refer to as its structural coefficient.  Additionally, when it is clear from context, we may abuse notation slightly and use coefficients and directed edges interchangeably.  The error terms, $u_i$, are not shown explicitly in the graph.  However, a bidirected edge between two nodes indicates that their corresponding error terms may be statistically dependent while the lack of a bidirected edge indicates that the error terms are independent.  

We will use standard graph terminology with $Pa(y)$ denoting the parents of $y$, $Anc(y)$ denoting the ancestors of $Y$, $De(y)$ denoting the descendants of $y$, and $Sib(y)$ denoting the siblings of $y$, the variables that are connected to $y$ via a bidirected edge.  $He(E)$ denotes the heads of a set of directed edges, $E$, while $Ta(E)$ denotes the tails.  Additionally, for a node $v$, the set of edges for which $He(E)=v$ is denoted $Inc(v)$.  Lastly, we will utilize d-separation \citep{pearl:09}.

We will use $\sigma (x, y|W)$ to denote the partial covariance between two random variables, $x$ and $y$, given a set of variables, $W$, and $\sigma (x,y|W)_G$ as the partial covariance between random variables $x$ and $y$ given $W$ implied by the graph $G$.  We will assume without loss of generality that the model variables have been standardized to mean 0 and variance 1.

\begin{mydef}
For a given unblocked (given the empty set) path, $\pi$, from $x$ to $y$, Left($\pi$) is the set of nodes, if any, that has a directed edge leaving it in the direction of $x$ in addition to $x$.  Right($\pi$) is the set of nodes, if any, that has a directed edge leaving it in the direction of $y$ in addition to $y$.  
\end{mydef}

For example, consider the path $\pi = x\leftarrow v^L_1 \leftarrow ... \leftarrow v^L_k \leftarrow v^T \rightarrow v^R_j \rightarrow ... \rightarrow v^R_1 \rightarrow y$.  In this case, Left($\pi$) $= \cup_{i=1}^{k} v^L_i \cup \{x, v^T\}$ and Right($\pi$) $= \cup_{i=1}^{j} v^R_i \cup \{y, v^T\}$.  $v^T$ is a member of both Right$(\pi)$ and Left($\pi)$.  

\begin{mydef}
A set of paths, ${\pi_1, ..., \pi_n}$, has \emph{no sided intersection} if for all $\pi_i , \pi_j \in \{\pi_1, ..., \pi_n\}$ such that $\pi_i \neq \pi_j$, Left$(\pi_i)\cap$Left$(\pi_j)$=Right$(\pi_i)\cap$Right$(\pi_j)=\emptyset$.
\end{mydef}

Wright's rules \citep{wright:21} allow us to equate the model-implied covariance, $\sigma (x,y)_M$, between any pair of variables, $x$ and $y$, to the sum of products of parameters along unblocked paths between $x$ and $y$.\footnote{Wright's rules characterize the relationship between the covariance matrix and model parameters.  Therefore, any question about identification using the covariance matrix can be decided by studying the solutions for this system of equations.  However, since these equations are polynomials and not linear, it can be very difficult to analyze identification of models using Wright's rules.}  Let $\Pi= \{\pi_1 , \pi_2 ,... , \pi_k\}$ denote the unblocked paths between $x$ and $y$, and let $p_i$ be the product of structural coefficients along path $\pi_i$.  Then the covariance between variables $x$ and $y$ is $\sum_i p_i$.
%We will denote the expression that Wright's rules gives for $\sigma(x,y)$ in graph $G$, $W_G (x,y)$.  

\begin{figure*}
\centering
\begin{subfigure}[t]{.22\textwidth}
\caption{}
\label{fig:EC-IV}
\includegraphics[width=\textwidth]{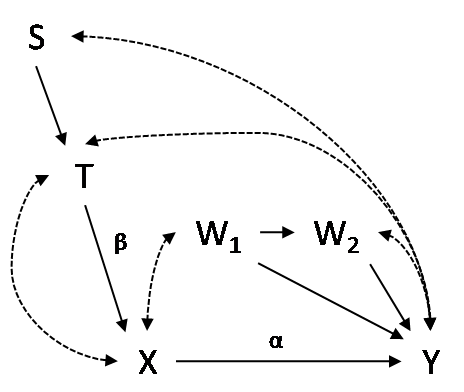}
\end{subfigure}
\begin{subfigure}[t]{.225\textwidth}
\caption{}
\label{fig:EC-IV2}
\includegraphics[width=\textwidth]{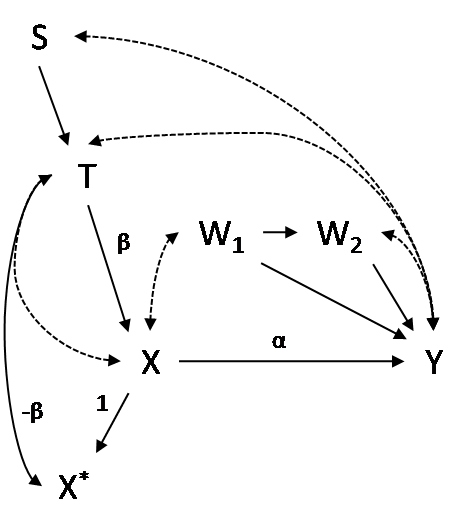}
\end{subfigure}
\begin{subfigure}[t]{.225\textwidth}
\caption{}
\label{fig:nodesc}
\includegraphics[width=\textwidth]{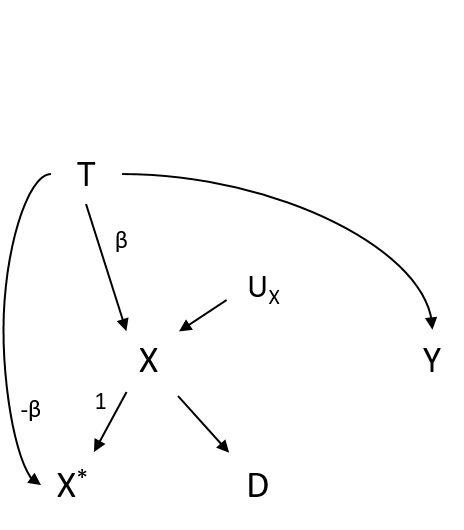}
\end{subfigure}
\caption{(a) $\alpha$ is not identified using IVs (b) $\alpha$ is identified using $x^*$ as an auxiliary IV given $w_1$ (c) conditioning on descendants of $x$ induces correlation between $x^*$ and $y$}
\end{figure*}

Lastly, we define auxiliary variables and the augmented graph.  

\begin{mydef}[Auxiliary Variable]
\label{def:auxvar}
Given a linear SEM with graph $G$ and a set of edges $E$ whose coefficient values are known, an \emph{auxiliary variable} is a variable, $z^* = z-\sum_i e_it_i$, where $\{e_1, ..., e_k\}\subseteq E \cap Inc(z)$ and $t_i = Ta(e_i)$ for all $i\in \{1,..., k\}$.   
\end{mydef}

If not otherwise specified, $z^*$ refers to the auxiliary variable, $z-c_1 t_1 -... -c_l t_l$, where $\{c_1, ..., c_l\}$ are the coefficients of $E\cap Inc(z)$ and $E$ is the set of directed edges whose coefficient values are known.  In other words, $z^*$ is the auxiliary variable for $z$ where as many known coefficients are subtracted out as possible.  \citet{chen:etal15} demonstrated that the covariance between any auxiliary variables and model variables can be computed using Wright's rules on the \emph{augmented graph}, defined below.  

\begin{mydef}
\label{def:aug}
\citep{chen:etal15} Let $M$ be a linear SEM with graph $G$ and a set of directed edges $E$ such that their coefficient values are known. The \emph{$E$-augmented} model, $M^{E+}$, includes all variables and structural equations of $M$ in addition to new auxiliary variables, $y^*_1 , ... y^*_k$, one for each variable in $He(E)=\{y_1 , ... , y_k\}$  such that the structural equation for $y^*_i$ is $y^*_i = y_i - \Lambda_{X_i y_i} T^t_i$, where $X_i=Ta(E)\cap Pa(y_i)$, for all $i\in \{1 ,... , k\}$.  The corresponding \emph{augmented graph} is denoted $G^{E+}$.  

%The \emph{$E$-augmented} model, $M^{E+}$, is constructed from $M$ and the values of $E$.  It includes all variables and structural equations of $M$ in addition to new variables, $y^*_1 , ... y^*_k$, one for each variable in $He(E)=\{y_1 , ... , y_k\}$  such that the structural equation for $y^*_i$ is $y^*_i = y_i - \Lambda_{X_i y_i} T^t_i$, where $X_i=Ta(E)\cap Pa(y_i)$, for all $i\in \{1 ,... , k\}$.  The corresponding graph is denoted $G^{E+}$.  
\end{mydef}

For example, consider Figure \ref{fig:EC-IV}.  If the value of $\beta$ is known, we can generate an auxiliary variable $x^* = x - \beta t$.  The $\beta$-augmented graph $G^{\beta+}$ is depicted in Figure \ref{fig:EC-IV2}.  In some cases, $x^*$ allows the identification of coefficients and testable implications using existing methods when $x$ could not, due to the fact that the back-door paths from $x$ to $y$ that go through $\beta$ cancel with the back-door paths from $x^*$ to $y$ that go through $-\beta$.  This can be seen by expressing the covariance of $x^*$ and $y$ in terms of the model parameters using Wright's rules.  

\section{Auxiliary and Quasi-Instrumental Sets}
\label{sec:q-IS}

Two, perhaps the most common, methods for estimating causal effects are OLS regression and two-stage least-squares (2SLS) regression.  Both of these methods assume that the underlying causal relationships between variables are linear, in addition to other causal assumptions that guarantee identification.  The \emph{single-door criterion} \citep{pearl:09} graphically characterizes when the assumptions sufficient to estimate a causal effect using regression are satisfied in a linear SEM.  Similarly, \citet{brito:pea02a} gave a graphical characterization for when a variable $z$ qualifies as an IV so that 2SLS regression provides a consistent estimate of the causal effect.  In this section, we give a graphical criterion for when AVs can be utilized in generalized instrumental sets, which extends both the single-door criterion and IVs. Additionally, we introduce quasi-instrumental sets, which utilize AVs to better address the problem of z-identification.

First, we give a simple graphical criterion for when an AV would be conditionally independent of another variable, which will allow us to incorporate AVs into instrumental sets, as well as other identification and model testing methods that require the ability to detect conditional independence in the graph.

\begin{theorem}
\label{thm:sep}
Given a linear SEM with graph $G$, where $E\subseteq Inc(z)$ is a set of edges whose coefficient values are known, if $W\cup \{y\}$ does not contain descendants of $z$ and $G_{E-}$ represents the graph $G$ with the edges for $E$ removed, then $(z^*\indep y | W)_{G^{E+}}$ if and only if $(z\indep y | W)_{G_{E-}}$.\footnote{The theorem disallows descendants of the generating variable in the conditioning set. At first glance, this may appear to limit the ability to block biasing paths among AVs. However, we conjecture that if $z$ cannot be separated from $y$ in $G$, then $z^*$ will almost surely not be independent of $y$ given $W$, if $W$ contains descendants of $z$. To illustrate, consider the example shown in Figure \ref{fig:nodesc}. $x^* = x - \beta t$ is independent of $y$, as can be verified using Wright's rules, but $x^*$ is not independent of $y$ given $d$! An intuitive explanation for this surprising result is that conditioning on $d$, a descendant of $x$, in Figure \ref{fig:nodesc} induces correlation between the error term of $x$ and $t$, since $x$ acts as a ``virtual collider''.  As a result, we have a ``virtual path'' from $x^*$ to $y$, $x^*\leftarrow x \leftarrow u_x \leftrightarrow t \rightarrow y$. See \citet[p. 339]{pearl:09} for a detailed discussion of virtual colliders.}
\end{theorem}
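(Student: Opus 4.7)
The plan is to observe that, as functions of the underlying error vector $U$, $z^*$ in the augmented model $M^{E+}$ is literally identical to $z$ in the reduced model $M_{E-}$ (the SEM with graph $G_{E-}$), and that $y$ and every $w \in W$ have the same error representation in both models. Once these identifications are established, the joint distribution of $(z^*, y, W)$ in $M^{E+}$ equals that of $(z, y, W)$ in $M_{E-}$, so the two conditional independence statements are literally the same statement about the same distribution, and the biconditional follows immediately.

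For the first identification, I would split the structural equation for $z$ in $M$ according to whether each parent's incoming edge lies in $E$: $z = \sum_{i : e_i \in E} e_i t_i + \sum_j \lambda_j t_j + u_z$. Subtracting the known part gives $z^* = z - \sum_i e_i t_i = \sum_j \lambda_j t_j + u_z$, which is precisely the structural equation for $z$ in $M_{E-}$. To promote this to an equality of random variables, the $t_j$'s and $u_z$ must have the same meaning in both models: $u_z$ is exogenous, and every $t_j$ is a parent of $z$, so acyclicity combined with the edge $t_j \to z$ forbids $t_j$ from being a descendant of $z$. Hence the structural equation for each $t_j$, and inductively the entire sub-SEM expressing $t_j$ in terms of errors, is untouched when the incoming-to-$z$ edges in $E$ are deleted.

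For the second identification, the only structural equation that differs between $M$ and $M_{E-}$ is the one for $z$. Therefore any variable whose ancestor set in $G$ excludes $z$ has the same equation and, inductively down the DAG, the same representation as a function of errors in both models. The hypothesis that $W \cup \{y\}$ contains no descendants of $z$ is exactly the statement that $z$ is not an ancestor of any $w \in W \cup \{y\}$, so $y$ and each $w \in W$ have the same error representation in $M^{E+}$ and $M_{E-}$. Combining with the first step, $(z^*, y, W)$ in $M^{E+}$ and $(z, y, W)$ in $M_{E-}$ agree as random vectors, so they share every conditional independence relation.

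The main obstacle is confirming that no variable on which the recursive argument relies is secretly a descendant of $z$. This resolves in three parts: the $t_i$'s by acyclicity combined with $t_i \in Pa(z)$; the target variables $y$ and $W$ by the theorem's hypothesis; and every intermediate variable arising when unrolling $y$ and $W$ into functions of errors by the DAG-level fact that in an acyclic graph, every ancestor of a non-descendant of $z$ is itself a non-descendant of $z$. Without the no-descendants hypothesis this last step fails, which matches the ``virtual-collider'' counterexample in Figure \ref{fig:nodesc} referenced in the theorem's footnote.
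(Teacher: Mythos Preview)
Your argument is correct, and it takes a genuinely different route from the paper's proof. The paper establishes the partial-covariance identity $\sigma_{z^*y\cdot W}^{G^{E+}} = \sigma_{zy\cdot W}^{G_{E-}}$ by first proving three auxiliary ``Conditional Edge Lemmas'' (CEL~1--3): CEL~1 decomposes a conditional covariance $\sigma_{xy\cdot W}$ linearly over the parents of $x$ and its error term; CEL~2 and CEL~3 show that conditional covariances involving non-descendants of $He(E)$ are unchanged when the edges $E$ are deleted. With these in hand, the paper expands $\sigma_{z^*y\cdot W}$, cancels the $E$-terms, and transfers the surviving summands to $G_{E-}$. Your proof bypasses all of this by working one level up: rather than comparing second-order statistics, you identify the random vectors $(z^*,y,W)$ in $M^{E+}$ and $(z,y,W)$ in $M_{E-}$ as the \emph{same} measurable function of the shared error vector $U$, so equality of the full joint law (and hence of every conditional independence) is immediate.

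What each buys: your approach is shorter, requires no side lemmas, and makes transparent exactly where the no-descendants hypothesis enters (it is precisely what guarantees that the recursive unrolling of $y$ and $W$ never touches the altered equation for $z$). The paper's approach, by contrast, isolates reusable covariance-level tools; CEL~1--3 are invoked again in the proofs of Theorem~3 (quasi-IV identifiability) and in the auxiliary-IV development, where one needs to manipulate individual conditional covariances inside a linear system rather than just assert a distributional equivalence. So the paper's investment in the lemmas pays off downstream, whereas your argument is the cleaner stand-alone proof of this particular statement.
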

\begin{proof}
Proofs for all theorems and lemmas can be found in the Appendix.
\end{proof}

%\begin{theorem}
%Given a linear SEM with graph $G$ with $(z\notindep y|W)_{G}$, $(z^* \indep y^*|W)_{G^{E_z \cup E_y+}}$ if and only if $(z\indep y|W)_{G_{E_z\cup E_y}}$, where $E_z\subseteq Inc(z)$ is the set of $z$'s edges whose coefficient values are konwn, $E_y\subseteq Inc(y)$ is the set of $y$'s edges whose coefficient values are known, and $W$ does not contain any descendants of $z$ or $y$.\footnote{The necessity of Theorem \ref{thm:sep} holds for almost all parameterizations of the model.}
%\end{theorem}

%\section{Auxiliary Instrumental Sets}

Next, we demonstrate how AVs can be incorporated into generalized instrumental sets, defined below.

\begin{theorem}
\label{thm:IS}
\citep{brito:pea02a} Given a linear model with graph $G$, the coefficients for a set of edges $E = \{(x_1, y), ..., (x_k, y)\}$ are identified if there exists triplets $(z_1, W_1, p_1), ..., (z_k, W_k, p_k)$ such that for $i=1, ..., k$,
\begin{enumerate}[(i)]
\item $(z_i \indep y |W_i)_{G_{E-}}$, where $W$ does not contain any descendants of $y$ and $G_{E-}$ is the graph obtained by deleting the edges, $E$ from $G$,
\item $p_i$ is a path between $z_i$ and $x_i$ that is not blocked by $W_i$, and
\item the set of paths, $\{p_1, ..., p_k\}$ has no sided intersection.\footnote{\citet{brito:pea02a} provided an alternative statement of condition (iii). A proof that the two statement are, in fact, equivalent is given in the Appendix.}
\end{enumerate}
If the above conditions are satisfied, we say that $Z$ is a \emph{generalized instrumental set} for $E$ or simply an \emph{instrumental set} for $E$.\footnote{Note that when $k=1$, $z_1$ is a conditional IV for $(x_1, y)$.  Further, if $z_1 = x_1$, then $x_1$ satisfies the single-door criterion for $(x_1 , y)$. The converse is not true, however (see appendix \ref{sec:cIVvsgIS}).}
\end{theorem}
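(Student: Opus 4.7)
The plan is to reduce identification of the coefficients $\{\Lambda_{x_j y}\}_{j=1}^k$ to nonsingularity of a $k\times k$ matrix of path polynomials. First, I would apply the partial-covariance extension of Wright's rules to expand $\sigma_i := \sigma(z_i, y \mid W_i)$ as a sum of parameter products along unblocked (given $W_i$) paths from $z_i$ to $y$. By condition (i), every such path must use at least one edge of $E$, since otherwise it survives in $G_{E-}$ and contradicts $(z_i \indep y \mid W_i)_{G_{E-}}$. Because $y$ is the path's endpoint and each edge in $E$ is incident to $y$, the $E$-edge used is unique and appears as the final edge $x_j \to y$ of the path, with the prefix being an unblocked path from $z_i$ to $x_j$ given $W_i$ (appending $x_j \to y$ never creates a collider at $x_j$, since that edge has a tail at $x_j$). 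Factoring out $\Lambda_{x_j y}$ yields the linear system
\[
\sigma_i \;=\; \sum_{j=1}^{k} c_{ij}\,\Lambda_{x_j y}, \qquad i = 1,\ldots,k,
\]
where $c_{ij}$ is a polynomial in the remaining structural parameters, namely the sum of parameter products over unblocked paths from $z_i$ to $x_j$ given $W_i$.

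Identification then reduces to showing that the matrix $C = (c_{ij})$ has nonvanishing determinant as a polynomial in the model's other parameters, for then $(\Lambda_{x_1 y}, \ldots, \Lambda_{x_k y})^\top = C^{-1}(\sigma_1, \ldots, \sigma_k)^\top$ is generically determined by the observed covariance matrix. To exhibit a nonzero monomial in $\det C$, I would use condition (ii), which supplies for each $i$ an unblocked path $p_i$ from $z_i$ to $x_i$: the monomial $m := \prod_i \prod_{e \in p_i} \Lambda_e$ arises inside $\prod_i c_{ii}$, the identity-permutation term of the Leibniz expansion of $\det C$. I would then need to show that $m$ cannot cancel, i.e., that no other tuple of paths $(q_1, \ldots, q_k)$, with $q_i$ from $z_i$ to $x_{\pi(i)}$ for some permutation $\pi$, contributes the same monomial unless $(q_i) = (p_i)$ and $\pi$ is the identity. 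Since the structural parameters may be treated as algebraically independent indeterminates, this reduces to a purely combinatorial claim about edge multisets of path tuples.

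The main obstacle is this non-cancellation step, which is precisely where condition (iii) must be invoked. I would attack it by a switching argument: supposing that the edge multisets of $(q_1,\ldots,q_k)$ and $(p_1,\ldots,p_k)$ coincide but the tuples differ, I would locate an edge shared between some $p_i$ and $q_{i'}$ with $i \neq i'$ and trace back to a vertex $v$ at which $p_i$ and $p_{i'}$, viewed as subpaths of the treks $p_i \cdot (x_i \to y)$ and $p_{i'} \cdot (x_{i'} \to y)$, leave $v$ on the same side — either both to the Left or both to the Right — directly contradicting the no-sided-intersection hypothesis. The delicate piece is the careful bookkeeping of edge orientations needed to guarantee that every rerouting of the edge multiset forces a same-sided shared vertex; once this combinatorial lemma is in place, the rest of the proof is a straightforward application of Wright's rules and elementary linear algebra, with generic identification following because $\det C \not\equiv 0$ means it is nonzero off a measure-zero algebraic subset of the parameter space.
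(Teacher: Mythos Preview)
Your overall plan---set up a $k\times k$ linear system and show its determinant is a nonzero polynomial by exhibiting a surviving monomial coming from the paths $p_1,\ldots,p_k$---is the same skeleton as the Brito--Pearl argument the paper cites and adapts in its Appendix. However, there is a genuine gap in your setup: Wright's path-summation formula does \emph{not} extend to partial covariances in the way you assert. The quantity $\sigma(z_i,y\mid W_i)$ is not the sum of parameter products over $(W_i)$-unblocked paths from $z_i$ to $y$; conditioning introduces rational, not polynomial, dependence on the structural parameters, so your description of $c_{ij}$ as ``the sum of parameter products over unblocked paths from $z_i$ to $x_j$ given $W_i$'' is incorrect. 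The paper handles this in two ways. One is the Brito--Pearl \emph{Partial Correlation Lemma}, which writes $\rho_{z_i y\cdot W_i}$ as a ratio $\phi/(\psi\psi)$ where $\phi$ is linear in the correlations $\rho_{\,\cdot\, y}$ with polynomial coefficients, and then runs the determinant argument on the numerators. The other is the paper's Conditional Edge Lemma~1, which yields $\sigma_{z_i y\cdot W_i}=\sum_j \alpha_j\,\sigma_{z_i x_j\cdot W_i}+\sigma_{z_i u_y\cdot W_i}$ directly, so that the matrix entries are the observable partial covariances $\sigma_{z_i x_j\cdot W_i}$ rather than path-sum polynomials. Either way, your monomial-in-$\det C$ story has to be rebuilt on top of one of these devices; as written it rests on a false identity.

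Your non-cancellation step also diverges from the paper's route. Rather than a direct switching argument on edge multisets, Brito--Pearl first \emph{normalize} the instrumental set (the paper's Lemma~\ref{lemma:nosharing}): whenever $p_i$ and $p_j$ share a vertex $V\neq Z_i$, replace $(Z_i,W_i,p_i)$ by $(V,W_i,p_i[V\sim Y])$ and verify that conditions (i)--(iii) still hold for the new triple. After this reduction, paths can only meet at the $Z_i$'s, and a short sequence of lemmas (no alternate $Z_i$--$Y$ path through $X_i\to Y$, no $Z_i$--$W_i$ path, no $Z_i$--$Y$ path through $X_j\to Y$ with $j<i$, each built solely from edges of $p_1,\ldots,p_i$) forces an essentially triangular contribution in the determinant that isolates the $\prod_i p_i$ term. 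Your proposed rerouting argument might be made to work, but the ``careful bookkeeping'' you flag is exactly what the normalization step is designed to avoid; without it, proving that every alternative path tuple with the same edge multiset induces a same-sided intersection is the crux, and you have not supplied that lemma.
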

%For example, $\{z_1 , z_2\}$ is an instrumental set in both Figure \ref{fig:IS} and \ref{fig:IS2}.  Note that neither $z_1$ nor $z_2$ qualify as IVs for $\alpha$ or $\gamma$.   
In some cases, a variable $z$ may not satisfy condition (i) above but an auxiliary variable $z^*$ does.  For example, in Figure \ref{fig:EC-IV}, we cannot identify $\alpha$ using Theorem \ref{thm:IS}.  Blocking the path $x\leftarrow t \leftrightarrow y$ by conditioning on $t$ opens the path, $x\leftrightarrow t \leftrightarrow y$.  Moreover, we cannot use $t$ or $s$ in an instrumental set due to the edges $t\leftrightarrow y$ and $s\leftrightarrow y$.  However, $s$ is an IV for $\beta$, allowing us to generate an AV, $x^* = x-\beta\cdot t_1$, as in Figure \ref{fig:EC-IV2}.  Now, $\alpha$ can be identified using $x^*$ as an auxiliary instrument given $w_1$.

Theorem \ref{thm:sep} tells us when (i) of Theorem \ref{thm:IS} can be satisfied using an AV, $z_i^*$. We simply check whether $z_i$ can be separated from $y$ in $G_{E\cup E_z-}$, where $E_z \subseteq Inc(z_i)$ is the set of $z_i$'s edges whose coefficient values are known. When an instrumental set includes AVs, we call the set an \emph{auxiliary instrumental set} or \emph{auxiliary IV set} for short. 

Figure \ref{fig:EC-IV} also demonstrates the importance of extending the simple auxiliary instrumental sets introduced by \citet{chen:etal15} to allow for conditioning. $\alpha$ can only be identified if we block the paths $x\leftrightarrow w_1 \rightarrow y$ and $x\leftrightarrow w_1 \rightarrow w_2 \rightarrow y$ by conditioning on $w_1$.

\begin{figure}
\centering
\begin{subfigure}[t]{.17\textwidth}
\caption{}
\label{fig:auxY}
\includegraphics[width=\textwidth]{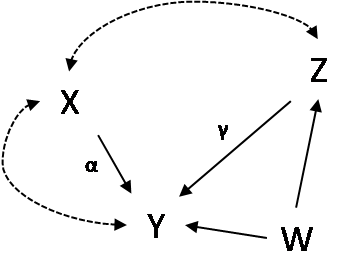}
\end{subfigure}
\begin{subfigure}[t]{.192\textwidth}
\caption{}
\label{fig:auxY2}
\includegraphics[width=\textwidth]{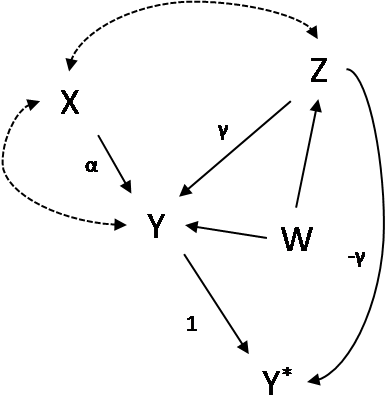}
\end{subfigure}
\caption{(a) $\alpha$ is not identified using IVs (b) $\alpha$ is identified using $Z$ as a quasi-IV after adding auxiliary variable $Y^*$}
\end{figure}

When knowledge of coefficient values are known a priori, it may be helpful to generate an AV from the outcome variable $y$. For example, in Figure \ref{fig:auxY}, $\alpha$ cannot be identified. However, suppose that it is possible to run a surrogate experiment and randomize $z$. This experiment would allow us to estimate $\gamma$ and generate the AV, $Y^* = Y - \gamma Z$. Now, $z$ is not technically an instrument for $\alpha$, but it can be shown that $\alpha = \frac{r_{Y*Z.W}}{r_{XZ}}$. \citet{chen:etal15} called such variables \emph{quasi-instrumental variables} or \emph{quasi-IVs} for short.

Interestingly, while quasi-IVs are valuable for the problem of z-identification, they do no better than instrumental sets when applied to the standard identification problem, where no external knowledge of coefficient values is available. For example, consider again Figure \ref{fig:auxY}. In order to use $z$ as a quasi-IV for $\alpha$, we would first have to identify $\gamma$ using an IV. If such a variable existed, say $z'$, then we could have simply identified $\{\alpha, \gamma\}$ using the IV set $\{z,z'\}$.

%\begin{figure*}
%\centering
%\begin{subfigure}[b]{.14\textwidth}
%\caption{}
%\label{fig:de_aux1}
%\includegraphics[width=\textwidth]{}
%\end{subfigure}
%\begin{subfigure}[b]{.18\textwidth}
%\caption{}
%\label{fig:de_aux2}
%\includegraphics[width=\textwidth]{}
%\end{subfigure}
%\begin{subfigure}[b]{.11\textwidth}
%\caption{}
%\label{fig:de_aux3}
%\includegraphics[width=\textwidth]{}
%\end{subfigure}
%\caption{In each of the above examples, $\alpha$ is identified using $z^*$ as an auxiliary IV after identification of $\beta$.}
%\end{figure*}

\begin{figure*}
\centering
\begin{subfigure}[t]{.33\textwidth}
\caption{}
\label{fig:iter-aux1}
\includegraphics[width=\textwidth]{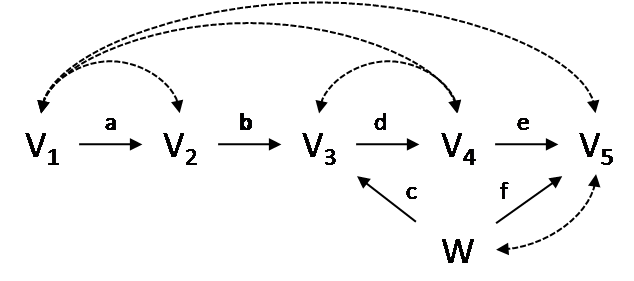}
\end{subfigure}
\begin{subfigure}[t]{.33\textwidth}
\caption{}
\label{fig:iter-aux2}
\includegraphics[width=\textwidth]{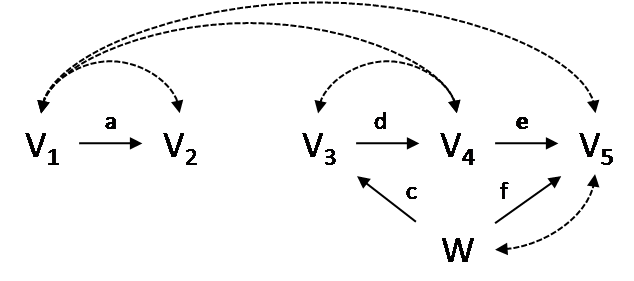}
\end{subfigure}
\begin{subfigure}[t]{.33\textwidth}
\caption{}
\label{fig:iter-aux3}
\includegraphics[width=\textwidth]{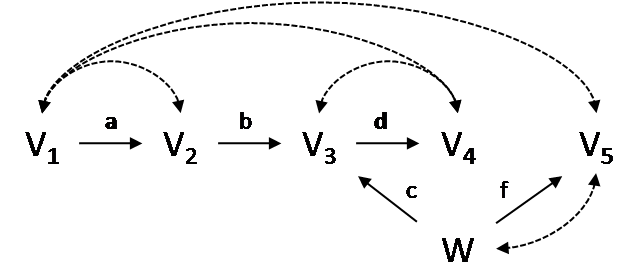}
\end{subfigure}
\caption{(a) $b$ is identified using either $v_2$ or $v_1$ as an instrument and $c$ is identified using $w$ as an instrument (b) $e$ is identified using $v_3^*$ as an auxiliary instrument given (c) $a$ and $d$ are identified using $v_5^*$ as an auxiliary instrument}
\end{figure*}

Next, we formally define \emph{quasi-instrumental sets} or \emph{quasi-IV sets} for short. Note that auxiliary IV sets are also quasi-IV sets.

%DANIEL WHAT DO YOU MEAN BY CONDITIONED AND UNCONDITIONED GRAPH?
%  ---> The path must be unblocked both with and without conditioning
\begin{restatable}{mydef}{aqiv}
\label{def:aqiv}
Given a linear SEM with graph G, a set of edges $E_K$ whose coefficient values are known, and a set of structural coefficients $\alpha = \{\alpha_1,\alpha_2,...,\alpha_k\}$, the set $Z=\{z_1,...,z_k\}$ is a quasi-instrumental set if there exist triples $(z_1,W_1,p_1),...,(z_k,W_j,p_k)$ such that:

\begin{enumerate}[(i)]
\item For $i=1,...,k$, either:
\begin{enumerate}
\item the elements of $W_i$ are non-descendants of $y$, and $(z_i\indep y|W_i)_{G_{E\cup E_y}}$ where $E_y = E_K \cap Inc(y)$.
\item the elements of $W_i$ are non-descendants of $z_i$ and $y$, and $(z_i\indep y|W_i)_{G_{E\cup E_{zy}}}$ where $E_{zy}= E_K \cap (Inc(z)\cup Inc(y))$.
\end{enumerate}
\item for $i=1,...,k$, $p_i$ is a path between $z_i$ and $x_i$ that is not blocked by $W_i$, where $x_i=Ta(\alpha_i)$, and
\item the set of paths $\{p_1,...,p_k\}$ has no sided intersection
\end{enumerate}
\end{restatable}

\begin{restatable}{theorem}{aqivid}
\label{thm:aqivid}
If $Z^*$ is a quasi-instrumental set for $E$, then $E$ is identifiable.
\end{restatable}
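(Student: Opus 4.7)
The plan is to mimic the classical identification-by-inversion argument used for generalized instrumental sets, carried out inside the $E_K$-augmented graph $G^{E_K+}$. For each index $i$, I would first translate condition (i) of Definition \ref{def:aqiv} into a d-separation involving auxiliary variables. In case (a) I form the outcome-side auxiliary $y^* = y - \sum_{e\in E_y} c_e\,Ta(e)$ and invoke Theorem \ref{thm:sep} to convert $(z_i\indep y\mid W_i)_{G_{(E\cup E_y)-}}$ into d-separation of $z_i$ from $y^*$ given $W_i$ in $G^{E_K+}$ with the target edges in $E$ additionally removed. In case (b) I additionally form $z_i^*$ by subtracting the known incoming edges of $z_i$ and apply Theorem \ref{thm:sep} a second time. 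The non-descendancy restrictions on $W_i$ in (i)(a) and (i)(b) are exactly what Theorem \ref{thm:sep} demands.

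Next, applying Wright's rules to $\sigma(z_i^*, y^*\mid W_i)$ in the augmented graph (with $z_i^*$ taken to be $z_i$ when case (a) is in force), every unblocked $W_i$-conditioned path from $z_i^*$ to $y^*$ must, by the preceding separations, traverse some target edge in $E$. Grouping paths by the edge $(x_j,y)\in E$ through which they enter $y$ yields the linear equation
\[
\sigma(z_i^*, y^*\mid W_i) \;=\; \sum_{j=1}^{k} \alpha_j\, q_{ij},
\]
where $q_{ij}$ is a polynomial in the remaining free parameters equal to the Wright sum over $W_i$-unblocked paths from $z_i^*$ to $x_j$ that use no edge of $E$. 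By condition (ii), the path $p_i$ contributes to $q_{ii}$, so each diagonal entry is a non-zero polynomial.

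The crux is to show $\det Q \not\equiv 0$, where $Q = (q_{ij})$. Expanding the determinant over permutations produces signed sums indexed by $k$-tuples of paths $(r_1,\dots,r_k)$ with $r_i$ connecting $z_i^*$ to $x_{\pi(i)}$, and the no-sided-intersection hypothesis on $\{p_1,\dots,p_k\}$ is precisely the combinatorial condition used by Brito and Pearl in their proof of Theorem \ref{thm:IS} to isolate the identity-permutation monomial $\prod_i p_i$ from all others. Once $\det Q \not\equiv 0$ is established, Cramer's rule expresses each $\alpha_j$ as a rational function of $\Sigma$ and the known values in $E_K$, yielding identifiability. I expect this determinant step to be the principal obstacle: one must verify that passing to the augmented graph, with rows of mixed type (a)/(b), does not spoil the Brito--Pearl argument. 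The auxiliary cancellations only delete paths and introduce no new ones, so each $q_{ij}$ is a sub-sum of the corresponding Wright sum in the original graph; the no-sided-intersection property of $\{p_i\}$ therefore transfers intact, while the non-descendancy constraints in (i) block the virtual-collider pathologies that Theorem \ref{thm:sep} warns about.
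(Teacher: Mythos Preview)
Your high-level architecture matches the paper's: build a linear system of the form $A\alpha=b$ from the partial covariances $\sigma(z_i^*,y^*\mid W_i)$, then argue $A$ has full rank by adapting Brito--Pearl's no-sided-intersection determinant analysis. The paper does exactly this, deriving the system algebraically and deferring the rank statement to a modified version of Brito--Pearl's proof (their Supplemental Theorem~\ref{thm:auxid}).

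Where your proposal goes wrong is in the derivation of the linear system and the description of its entries. You invoke ``Wright's rules'' for the \emph{conditional} covariance $\sigma(z_i^*,y^*\mid W_i)$ and claim that $q_{ij}$ is ``a polynomial \ldots equal to the Wright sum over $W_i$-unblocked paths from $z_i^*$ to $x_j$.'' Wright's rules give path decompositions only for \emph{marginal} covariances; a partial covariance is in general a rational function of the parameters, not a sum over open paths. The paper avoids this pitfall by working directly with conditional covariances via its Conditional Edge Lemmas: CEL~\ref{thm:conditionalGraphParents} expands $\sigma_{xy\cdot W}$ linearly in the structural coefficients of $x$'s parents, and CEL~\ref{thm:conditionalEdgeRemove}--\ref{thm:conditionalErrorRemove} let one move the residual terms into $G_{E\cup E_y-}$, where the quasi-IV separation kills them. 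This yields the same equation you wrote, but with the correct identification $q_{ij}=\sigma(z_i^*,x_j\mid W_i)$, a partial covariance rather than a path sum.

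This distinction is not cosmetic for the determinant step. Brito--Pearl's argument that $\det Q\not\equiv0$ does not proceed by ``isolating the identity-permutation monomial'' in a Wright expansion of conditional covariances; it routes through the Partial Correlation Lemma, which expresses $\rho_{z_iy\cdot W_i}$ as a ratio whose numerator is linear in the $\rho_{z_iy},\rho_{w_jy}$ with structured coefficients, and then carefully tracks which edge-products survive. The paper's modification of that proof (Lemmas~\ref{lemma:auxnosharing} onward) has to re-verify the path-intersection lemmas when some $z_i$ are auxiliary variables, precisely because the augmented edges repeat parameters from elsewhere in the graph. Your final paragraph gestures at this (``passing to the augmented graph \ldots does not spoil the Brito--Pearl argument'') but the justification you offer---that auxiliary cancellations ``only delete paths''---is tied to the incorrect path-sum picture and does not substitute for redoing Brito--Pearl's lemmas in the augmented setting.
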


%If $Z$ contains variables $A\subseteq Z$ that generates auxiliary variables $A^*$ such that $Z^* = Z\setminus A \cup A^*$ is a quasi-IS, then we say that $Z$ \emph{generates} a quasi-IS. 
Lastly, the following corollary provides a simple graphical condition for when a single variable or AV qualifies as a quasi-IV.

\begin{corollary}
\label{cor:aux}
Given a linear SEM with graph $G$, $z^*$ is a quasi-IV for $\alpha$ given $W$ if $W$ does not contain any descendants of $z$, and $z$ is an IV for $\alpha$ given $W$ in $G_{E_z \cup E_y-}$, where $E_z\subseteq Inc(z)$ and $E_y\subseteq Inc(y)$ are sets of edges whose coefficient values are known.

%Given a linear SEM with graph $G$ and a set of edges, $E=\{(t_1 , z), ..., (t_k, z)\}\subseteq Inc(z)$, whose coefficient values are known, $\alpha$, the coefficient from $x$ to $y$, is identified if $z$ is an IV given $W$ for $\alpha$ in $G_{E}$ and $W$ does not contain any descendants of $z$\footnote{Note that unlike instrumental variables, auxiliary instruments cannot condition on descendants of $Z$ to block paths between $Z$ and $Y$.}.  In this case, $z^*=z-t_1 c_1 - ... - t_k c_k$, where $c_i$ is the coefficient for $(t_i, z)$, is an auxiliary IV given $W$ for $\alpha$ and $\alpha = \frac{r_{yz^*.W}}{r_{xz^*.W}}$.
\end{corollary}

Auxiliary and quasi-IV sets enable a bootstrapping procedure whereby complex models can be identified by iteratively identifying coefficients and using them to generate new auxiliary variables.  For example, consider Figure \ref{fig:iter-aux1}.  First, we are able to identify $b$ and $c$ using IVs, but no other coefficients.  Once $b$ is identified, Corollary \ref{cor:aux} tells us that $e$ is identified using $v_3^*$ since $v_3$ is an IV for $e$ when the edge for $b$ is removed (see Figure \ref{fig:iter-aux2}).  Now, the identification of $e$ allows us to identify $a$ and $d$ using $v_5^*$, since $v_5$ is an IV for $a$ and $d$ when the edge for $e$ is removed (see Figure \ref{fig:iter-aux3}). This general strategy is the basis for our identification, z-identification, and model testing algorithm, described next.

\section{Identification and z-Identification Algorithm}
\label{sec:alg}

In this section, we construct an identification algorithm that operationalizes the bootstrapping approach described in Section \ref{sec:q-IS}.  First, we describe how to algorithmically find a quasi-instrumental set for a set of coefficients $E$, given a set of known coefficients, $\mathrm{IDEdges}$.  

The problem of finding generalized instrumental sets was addressed by \citet{zander:etal16}.  They provided an algorithm, $\mathrm{TestGeneralIVs}$, that determines whether a given set $Z$ is a generalized instrumental set for a set of edges, $E$, that runs in polynomial time if we bound the size of the coefficient set to be identified.  More specifically, their algorithm has a running time of $O((k!)^2 n^k)$, where $n$ is the number of variables in the graph and $k=|E|$.\footnote{\citet{zander:etal16} also give an algorithm that tests whether $Z$ is a \emph{simple conditional instrumental sets} in $O(nm)$ time.  A simple conditional instrumental set is a generalized instrumental set where $W_1 = W_2 = ... = W_k$}  

Our method, $\mathrm{TestQIS}$, given in the Appendix, generalizes $\mathrm{TestGeneralIVs}$, for quasi-IV sets. $\mathrm{FindQIS}$, also given in the Appendix, searches for a quasi-IV set by checking all subsets of $Z\subseteq (Anc(z_i)\cup Anc(y))$ using $\mathrm{TestQIS}$. It returns a quasi-IV set, as well as its conditioning sets, if one exists.

In some cases an instrumental set may not exist for $C$, but one exists for $C^{'}$, where $C\subset C^{'}$.  Conversely, there may not be an instrumental set for $C^{'}$, but there is one for $C\subset C^{'}$.  As a result, we may have to check all possible subsets of a variable's coefficients in order to determine whether a given subset is identifiable using auxiliary instrumental sets.
\begin{comment}
This search can be simplified somewhat by noting that if $E$ is a \emph{subgraph edge set} (defined below) with no instrumental set, then there is no superset $E^{'}$ with an instrumental set.  

\citep{chen:etal14} For an arbitrary variable, $V$, let $Pa_1 , Pa_2 , ... , Pa_k$ be the unique partition of Pa(V) such that any two parents are placed in the same subset, $Pa_i$, whenever they are connected by an unblocked path.  A \emph{connected edge set} with head $V$ is a set of directed edges from $Pa_i$ to $V$ for some $i\in \{1, 2, ..., k\}$.  
\end{mydef}
\end{comment}

The ID algorithm, called $qID$ utilizes $\mathrm{FindQIS}$ to identify as many coefficients as possible in a given model with graph $G$. It iterates through each variable and attempts to identify its parents using $\mathrm{FindQIS}$. If it is unable to identify the parents, it then attempts to identify subsets of the parents. After the algorithm has attempted to identify each subset for each variable, it again attempts to identify each unidentified subset, since each newly identified coefficient may enable the identification of previously unidentifiable coefficients. This process is repeated until all coefficients have been identified or no new coefficients have been identified in the last iteration. The algorithm is polynomial if the degree of each node in the graph is bounded.

Our algorithm identifies the model depicted in Figure \ref{fig:model_subsume} in the following way.  First, let us assume that the variables are arbitrarily ordered, so the sets of their incident edges can be: $(\{a\}, \{b,c,f\}, \{d\}, \{e\})$.  Now, the first edge to be identified would be $a$ using $w_1$ as an IV.  There is no auxiliary IV set for $\{b,c,f\}$, and we would attempt to find one for its subsets.  We find that $\{b\}$ is identified using $\{x\}$ as an IV set with conditioning set $\{w_1\}$.  Now, $\{d\}$ is identified using $y^*=y-bx$, and $e$ is identified using $t^*_2$.  In the second iteration, we return to $\{b, c, f\}$ and find that it is now identified using the auxiliary IV set, $\{x, w_1, t^*_3\}$.

In contrast, Figure \ref{fig:model_subsume} is not identified using simple instrumental sets and auxiliary variables.  We cannot identify $b$ without conditioning on $w_1$, which means that the only coefficients identified using auxiliary simple instrumental sets is $a$.  Since \citet{chen:etal15} showed that any coefficient identified using the generalized half-trek criterion (g-HTC) can be identified using auxiliary variables and simple instrumental sets, we know that $qID$ is able to identify coefficients and models that the g-HT algorithm is not.

\begin{algorithm}[H]
\caption{qID($G,\Sigma, \mathrm{IDEdges}$)}
\label{alg:ID}
\begin{algorithmic}
\State {\bfseries Initialize:} $\mathrm{EdgeSets}\leftarrow$ $Inc(v)$ for $v\in V$
\Repeat
	\ForAll{$ES$ in $\mathrm{EdgeSets}$ such that \\ \indent \indent$ES\not\subseteq \mathrm{IDEdges}$}
		\State $y\leftarrow He(ES)$
			\ForAll{$E\subseteq ES$ such that $E\not\subseteq \mathrm{IDEdges}$}
			\State $(Z,W)\leftarrow \mathrm{FindQIS}(G,ES,\mathrm{IDEdges})$
				\If{$(Z, W)\neq \perp$}					
					\State Identify $E$ using $Z^*$ as an auxiliary \\ \indent\indent\indent\indent instrumental set in $G^{(\mathrm{IDEdges}\cap Inc(Z))+}$
					\State $\mathrm{IDEdges}\leftarrow \mathrm{IDEdges} \cup E$
				\EndIf
			\EndFor
	\EndFor
\Until{All coefficients have been identified or no coefficients have been identified in the last iteration}\\
\end{algorithmic}
\end{algorithm}

Moreover, $qID$ will identify any coefficients that are identifiable using auxiliary variables and simple instrumental sets, giving us the following theorem.

\begin{theorem}
\label{thm:g-HT-subsumed}
Given an arbitrary linear causal model, if a set of coefficients is identifiable using the g-HT algorithm, then it is identifiable using $qID$.  Additionally, there are models that are not identified using the g-HT algorithm, but identified using $qID$.
\end{theorem}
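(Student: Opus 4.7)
The plan is to establish the two claims separately by routing through the intermediate class of auxiliary \emph{simple} instrumental sets. For the containment direction, I would invoke the result of \citet{chen:etal15} showing that every coefficient identifiable by the g-HT algorithm is identifiable using auxiliary variables together with simple instrumental sets (the special case where the conditioning sets $W_i$ are empty). It then suffices to show that anything identifiable via iterated auxiliary simple IV sets is also identifiable by $qID$. This reduction is natural because a simple instrumental set is exactly the $W_i = \emptyset$ case of a generalized instrumental set, and Corollary \ref{cor:aux} guarantees that the auxiliary version of such a set already qualifies as a quasi-instrumental set in the sense of Definition \ref{def:aqiv}.

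To close the containment, I would argue that $qID$ realizes every identification that iterated auxiliary simple IV sets realize. The subroutine $\mathrm{FindQIS}$ enumerates candidate instruments over subsets of $Anc(z_i)\cup Anc(y)$ and calls $\mathrm{TestQIS}$ on each, so in particular it tests the configuration with empty $W_i$; therefore whenever an auxiliary simple IV set exists for a (subset of a) connected edge set, $qID$'s inner loop will find it. Because $\mathrm{IDEdges}$ grows monotonically and $qID$ re-iterates until it stabilises, a short induction on the round number shows that $qID$'s final $\mathrm{IDEdges}$ dominates the set of edges reachable by the iterated auxiliary simple IV bootstrap. Chaining with the \citet{chen:etal15} reduction then gives the desired containment.

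For the strict part, I would use Figure \ref{fig:model_subsume} as a witness. The paragraphs preceding the theorem already trace through $qID$'s execution on this graph and identify every coefficient, so only the failure of g-HT needs to be argued. By the \citet{chen:etal15} result, this reduces to showing that no bootstrap via auxiliary simple IV sets can identify even the edge $b$: the back-door paths $x \leftrightarrow w_1 \rightarrow y$ and $x \leftrightarrow w_1 \rightarrow w_2 \rightarrow y$ can only be blocked by conditioning on $w_1$, and simple IV sets (auxiliary or not) forbid nonempty conditioning. Without $b$, the subsequent steps that depend on knowing $b$ are also unavailable, so the whole bootstrap halts and g-HT fails on the model.

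The principal technical obstacle is the dominance argument in the second paragraph: a clean bookkeeping of the expanding $\mathrm{IDEdges}$ set is required so that the quasi-IV iteration never lags behind the auxiliary simple IV iteration round by round. This is essentially a monotonicity plus search-space inclusion claim, but it must be stated carefully because $qID$'s iteration order is not the same as the one implicit in the \citet{chen:etal15} reduction. The strict-containment direction, by contrast, reduces to verifying a single worked example.
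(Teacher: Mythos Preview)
Your proposal is correct and follows essentially the same route as the paper: both arguments invoke the \citet{chen:etal15} equivalence between g-HT and auxiliary simple instrumental sets, observe that simple IV sets are the $W_i=\emptyset$ special case of the quasi-IV sets searched by $qID$, and use Figure~\ref{fig:model_subsume} (where $b$ requires conditioning on $w_1$) as the separating witness. Your treatment is in fact slightly more careful than the paper's, since you flag the monotonicity and search-space-inclusion bookkeeping needed to ensure the $qID$ iteration dominates the auxiliary-simple-IV bootstrap round by round; the paper simply asserts this dominance without spelling it out.
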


\section{Deriving Testable Implications using AVs}

Theorem \ref{thm:sep} also enables us to derive new vanishing partial correlation constraints that can be used to test the model. For example, in Figure \ref{fig:overID}, $\alpha$ can be identified using $z_1$ as an instrument. Once $\alpha$ is identified, we can generate the AV $y^*=y-\alpha x=y-\frac{\sigma(y, z_1)}{\sigma(x, z_1)}x$, and Theorem \ref{thm:sep} tells us that the correlation of $z_2$ and $y^*$ should vanish. As a result, we can test the model specification by verifying that this constraint holds in the data.

\begin{figure}
\centering
\begin{subfigure}[b]{.185\textwidth}
\caption{}
\label{fig:overID}
\includegraphics[width=\textwidth]{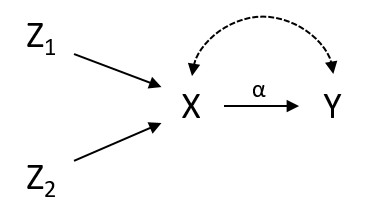}
\end{subfigure}
\begin{subfigure}[b]{.208\textwidth}
\caption{}
\label{fig:model_subsume}
\includegraphics[width=\textwidth]{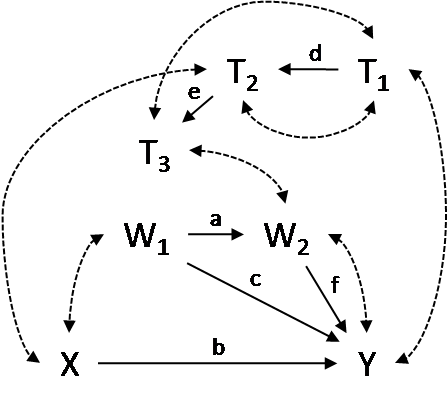}
\end{subfigure}
\caption{(a) $\sigma(z_2, y^*)=0$, where $y^*=y-\frac{\sigma(y,z_1)}{\sigma(x,z_1)}$, and, equivalently, $\alpha$ is overidentified using $z_1$ and $z_2$ as IVs (b) the model is identified using auxiliary instrumental sets, but not the g-HT algorithm}
\end{figure}

Theorem \ref{thm:sep} also tells us that the correlation between $z_1$ and $y^*$ should also vanish. However, upon closer inspection, we find that this implication does not actually constrain the covariance matrix:
\begin{align*}
\sigma(z_1, y^*) =& \sigma(z_1, y-\alpha x) \\
=& \sigma(z_1, y) - \frac{\sigma(y,z_1)}{\sigma(x, z_1)}\sigma(z_1, x) =0.
\end{align*}

In other words, our ``testable implication" that $\sigma(z_1, y^*)=0$ is equivalent to stating $\sigma(z_1, y)-\sigma(z_1,y)=0$--a tautology! In contrast, 
\begin{equation*}
\sigma(z_2, y^*) = \sigma(z_2, y) - \frac{\sigma(z_1, y)}{\sigma(x, z_1)}\sigma(z_2, x) =0
\end{equation*}
does provide a true testable implication.

\citet{shpitser2009testing} noticed a similar phenomenon when deriving dormant independences in non-parametric models, and their explanation applies to conditional independence constraints among AVs as well. The idea is the following: When the model implies that two variables are conditionally independent, it relies on the modeled assumption that there is no edge between those variables. As a result, verifying that the constraint holds in data represents a test that this assumption is valid. However, unlike conditional independence constraints between model variables, conditional independence constraints among AVs rely upon the absence of certain edges in order to identify the coefficients necessary to generate the AV. The key point is that this identification cannot rely on the same lack of edge whose existence we are trying to test!

In the above example, we identified $\alpha$ using $z_1$ as an IV. $\sigma(z_2, y^*)=0$ follows from the lack of edge between $z_2$ and $y$. However, even if this edge did exist, $z^*$ still equals $z-\frac{\sigma(y,z_1)}{\sigma(x,z_1)}x$. In contrast, $\sigma(z_1, y^*)=0$ follows from the lack of edge between $z_1$ and $y$. The existence of this edge would disallow $z_1$ as an instrument and $z^* = z-\alpha x \neq z-\frac{\sigma(y,z_1)}{\sigma(x,z_1)}x$.

Another way to derive the constraint $\sigma(z_2, y^*)=0$ is via overidentification. $\alpha$ can be identified using either $z_1$ or $z_2$ and equating the corresponding expressions yields the constraint $\frac{\sigma(y,z_1)}{\sigma(x,z_1)}=\frac{\sigma(y,z_2)}{\sigma(x,z_2)}$, which is clearly equivalent to the previous constraint $\sigma(z_2, y^*)=0$. In fact, we show (Theorem \ref{thm:aCI-subsumed}) that whenever a variable $z$ cannot be separated from another variable $y$, but $z^*$ can be, the resulting AV conditional independence, if it is non-vacuous, is equivalent to an overidentifying constraint that can be derived using quasi-IVs. As a result, all non-vacuous AV conditional independences are captured by overidentifying constraints derived using quasi-IVs!

First, we give a sufficient condition for when a set of edges $\alpha$ is overidentified. 

\begin{restatable}{theorem}{overid}
\label{thm:overid}
Let $Z$ be a quasi-IV set for structural coefficients $\alpha=\{\alpha_1,...,\alpha_k\}$ and $E$ be a set of known edges. If there exists a node $s$ satisfying the conditions listed below, then $\alpha$ is overidentified and we obtain the constraint .
\begin{enumerate}[(i)]
\item $s\notin Z$
\item There exists an unblocked path between $s$ and $y$ including an edge in $\alpha$
\item There exists a conditioning set $W$ that does not block the path $p$, such that either:
\begin{enumerate}
\item the elements of $W$ are non-descendants of $y$, and $(s\indep y|W)_{G_{\alpha\cup E_y}-}$, where $E_y = E \cap Inc(y))$
\item the elements of $W$ are non-descendants of $s$ and $y$, and $(s\indep y|W)_{G_{\alpha \cup E_{s}\cup E_y-}}$ where $E_s = E \cap Inc(s)$.
\end{enumerate}
\end{enumerate}
\end{restatable}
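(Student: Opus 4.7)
The plan is to exploit $Z$ to identify $\alpha$, build an auxiliary variable from the identified values, and then convert condition (iii) into a vanishing partial correlation among observable quantities via Theorem~\ref{thm:sep}. Since $Z$ is a quasi-IV set for $\alpha$, Theorem~\ref{thm:aqivid} yields, for each $i$, an explicit rational expression $\hat\alpha_i(\Sigma)$ identifying $\alpha_i$ from the observed covariances and the known coefficients of $E$. Using these values together with the coefficients of $E_y = E \cap Inc(y)$, define the auxiliary variable
\[
y^{*} \;=\; y \;-\; \sum_i \hat\alpha_i(\Sigma)\, x_i \;-\; \sum_{e \in E_y} \lambda_e \, Ta(e),
\]
which subtracts every edge in $\alpha \cup E_y \subseteq Inc(y)$. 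In case (iii)(b), additionally set $s^{*} = s - \sum_{e \in E_s} \lambda_e \, Ta(e)$. The overidentifying constraint will take the form $\sigma(s, y^{*} \mid W) = 0$ in case (a) and $\sigma(s^{*}, y^{*} \mid W) = 0$ in case (b); substituting the expressions $\hat\alpha_i(\Sigma)$ inside $y^{*}$ renders each equation a polynomial relation purely in $\Sigma$ and the coefficients of $E$, and that is the advertised constraint on $\alpha$.

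The derivation of these vanishing partial correlations is immediate from Theorem~\ref{thm:sep}. In case (a), apply the theorem with $z := y$, $E := \alpha \cup E_y$, and target variable $s$: since $\alpha \cup E_y \subseteq Inc(y)$ and $W$ contains no descendants of $y$ by condition (iii)(a), the independence $(s \indep y \mid W)_{G_{\alpha \cup E_y -}}$ transfers to $(s \indep y^{*} \mid W)_{G^{(\alpha \cup E_y)+}}$, i.e.\ $\sigma(s, y^{*} \mid W) = 0$. Case (b) follows by an additional, symmetric application with $z := s$ and $E := E_s$, yielding $\sigma(s^{*}, y^{*} \mid W) = 0$. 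Non-vacuousness of the resulting constraint is then ensured by condition (ii): the unblocked path $p$ from $s$ to $y$ passes through some edge $\alpha_j \in \alpha$, so by Wright's rules it contributes a term proportional to $\alpha_j$ in $\sigma(s, y \mid W)$ that is cancelled in $\sigma(s, y^{*} \mid W)$ only through the explicit subtraction of $\hat\alpha_j\, x_j$. Without such a path, the subtraction would cancel nothing and the ``constraint'' would reduce to the kind of tautology exhibited by the $\sigma(z_1, y^{*}) = 0$ example in Section~4.

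The main obstacle is case (b), where Theorem~\ref{thm:sep} must in effect be applied to auxiliary variables on both endpoints while the theorem as stated treats only one. The cleanest remedy is a short companion lemma extending Theorem~\ref{thm:sep} to two disjoint edge sets incident to distinct vertices (exploiting that $E_s \subseteq Inc(s)$ and $\alpha \cup E_y \subseteq Inc(y)$ share no endpoint), or equivalently to iterate the theorem: first augment $y$ to $y^{*}$, then reapply it to upgrade $s$ to $s^{*}$ against the already-augmented $y^{*}$. A secondary technical point is verifying the ``no descendants'' side conditions of Theorem~\ref{thm:sep}: when $s$ (resp.\ $y$) plays the role of the target variable, it must not be a descendant of $y$ (resp.\ $s$). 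These should follow from the direction of the edges in $\alpha$ (which point into $y$) together with the path $p$ and acyclicity, but the verification deserves to be made explicit.
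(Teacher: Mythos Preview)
Your approach is correct and yields exactly the same constraint as the paper, but the paper arrives there by a shorter, purely linear-algebraic route. Rather than forming the full auxiliary variable $y^{*}$ with $\alpha$ subtracted and then invoking Theorem~\ref{thm:sep}, the paper simply recalls that the proof of Theorem~\ref{thm:aqivid} produces, for each $z_i$, a linear equation $\sigma_{z_i y^{*}\cdot W_i}=\sum_j \alpha_j\,\sigma_{z_i x_j\cdot W_i}$ (with $y^{*}$ here having only $E_y$ subtracted), assembles these into $A\alpha=b$, and observes that the conditions on $s$ license one more equation of the same shape, $a_s\alpha=b_s$. Substituting the unique solution $\alpha=A^{-1}b$ gives the constraint $a_sA^{-1}b=b_s$. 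Your vanishing partial correlation $\sigma(s,y^{*}\mid W)=0$ (or $\sigma(s^{*},y^{*}\mid W)=0$) is precisely $b_s-a_s\hat\alpha=0$ after expanding, so the two are identical.

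What the paper's framing buys is that it reuses the Conditional Edge Lemmas already invoked in Theorem~\ref{thm:aqivid} directly, so it never needs Theorem~\ref{thm:sep} and therefore never confronts the descendant-side conditions you flagged (in particular the two-endpoint issue in case~(b) and whether $s\in De(y)$ or $y\in De(s)$). Your route is a bit more conceptual---it makes explicit that the constraint is an AV conditional independence---and your discussion of non-vacuousness via condition~(ii) adds content the paper leaves implicit. But if you want the cleanest write-up, mirror the paper: derive the extra equation for $s$ exactly as in the proof of Theorem~\ref{thm:aqivid} and substitute.
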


%In general, if there exists a quasi-instrumental set, Theorem \ref{thm:overid} can generate a constraint for every extra variable that satisfies Definition \ref{def:aqiv}, with the modification that $|Z|=1$, and $|\alpha|=|E|$.

The above theorem can be used to derive an overidentifying constraint for every variable that satisfies (i)-(iii) above. It can also be applied when $\alpha$ is known a priori, yielding a \emph{z-overidentifying constraint}. In this case, $Z=\emptyset$ would be a quasi-IV set that trivially identifies $\alpha$. 

The following theorem states that non-vacuous AV conditional independence constraints are subsumed by quasi-IV overidentifying and z-overidentifying constraints.

\begin{theorem}
\label{thm:aCI-subsumed}
Let $z^*=z-e_1 t_1 - ... - e_k t_k$ and suppose there does not exist $W$ such that $(z\indep y|W)_G$. There exists $W$ such that $W\cap De(z) = \emptyset$ and $(z^*\indep y|W)$ is non-vacuous if and only if $y$ satisfies the conditions of Theorem \ref{thm:overid} for $E=\{e_1 ,... , e_k\}$.
\end{theorem}

The above theorem also applies when $y$ is an AV, called $y^*$. In this case, we simply replace $(z\indep y|W)_G$ with $(z\indep y^*|W)_{G^{E_y+}}$, where $E_y \subseteq Inc(y)$ is a set of edges whose coefficient values are known. 

Algorithm \ref{alg:overid} uses quasi-IV sets to output overidentifiying constraints in a graph given an optional set of identified edges. It uses $\mathrm{isEIV}$, which is a slightly modified version of $\mathrm{FindQIS}$ that tests whether $w$ fits the conditions of Theorem \ref{thm:aCI-subsumed}. Details of $\mathrm{isEIV}$ can be found in the Appendix.

\begin{algorithm}[H]
\caption{Finds overidentifying constraints for $G$ }
\label{alg:overid}
\begin{algorithmic}
\Function{constraintFinder}{G,$\Sigma$,IDEdges}
    \ForAll{$ES\in$ Edge Sets of $G$}
      \State $(Z,W) \gets$ \Call{FindQIS}{ES,G,IDEdges}
      \If{$(Z,W)\ne \bot$}
        \ForAll{$w \in V\setminus Z\cup \{He(ES)\}$}
          \If{\Call{IsEIV}{w,ES,G,IDEdges}}
            \State Add constraint $a_wA^{-1}b=b_w$
          \EndIf
        \EndFor
      \EndIf
    \EndFor
\EndFunction
\end{algorithmic}
\end{algorithm}

\section{Discussion and Related Work}
\label{sec:rw}
In this section, we discuss how (single-variable) auxiliary IVs encompass a number of previous identification methods developed in economics \citep{hausman:83}, computer science \citep{chan:kuroki10}, and epidemiology \citep{shardell:15}. 

\citet{hausman:83} showed that if the equation for a given variable, $z = \beta_1 p_1 + ... + \beta_k p_k + u_z$, is identified, then the error term $u_z$ can be estimated and used as an instrument for other coefficients. In this case, the auxiliary variable $z^* = z - \beta_1 p_1 - ... - \beta_k p_k$ is equal to the error term $u_z$. As a result, whenever the error term is estimable and can be used as an IV, we can also generate an auxiliary instrument.  However, there are times when only some of the coefficients in an equation are identifiable, and as a result, the error term cannot be used as an instrument, but we can nevertheless generate an auxiliary instrument. As a result, auxiliary IVs strictly subsume error term IVs.

\citet{chan:kuroki10} gave sufficient conditions for when a descendant of $x$ and a descendant of $y$ could be used in analogous manner to IVs to identify the effect of $x$ on $y$. In the context of AVs, this method is equivalent to generating an auxiliary instrument from the descendant by subtracting the total effect of $x$ on the descendant or the total effect of $y$ on the descendant (depending on whether the variable is a descendant of $x$ or $y$). In this paper, we generated AVs by subtracting out direct effects, but clearly the work can be extended to subtracting out total effects. The benefit of AVs over these descendant IVs is that they can be generated from a variety of variables, not just descendants of $x$ and $y$. Additionally, descendants of $x$ or $y$ can generate AVs from other total or direct effects, not just the effect of $x$ or $y$ on the descendant. 

The notion of ``subtracting out a direct effect'' in order to turn a variable into an instrument was also noted by \citet{shardell:15} when attemping to identify the total effect of $x$ on $y$. It was noticed that in certain cases, the violation of the independence restriction of a potential instrument $z$ (i.e. $z$ is not independent of the error term of $y$) could be remedied by identifying, using ordinary least squares regression, and then subtracting out the necessary direct effects on $y$.  AVs generalize and operationalize this notion so that it can be used on arbitrary sets of known coefficient values and be utilized in conjunction with existing graphical methods for identification and enumeration of testable implications. 

Additionally, as we have alluded to earlier, the highly algebraic, state-of-the-art g-HTC can also be understood in terms of auxiliary instruments. Identification using the g-HTC is equivalent to identification using auxiliary simple instrumental sets.

In summary, auxiliary instruments are not only the basis for the most general identification algorithm yet devised, but they also unify disparate identification methods under a single framework. Moreover, AVs are directly applicable to the tasks of z-identification and model testing. Finally, they can, in principle, enhance any method for identification, model testing, or other tasks that relies on graphical separation. 

\section{Conclusion}

In this paper, we graphically characterized conditional independence among AVs, allowing us to demonstrate how they can help generalized instrumental sets in the problem of identification. We provided an algorithm that identifies more models than the g-HT algorithm, subsuming the state-of-the-art for identification in linear models. Additionally, we introduced quasi-IV sets, and constructed an algorithm that utilizes them to attack the problem of z-identification. Finally, we proved that AV conditional independences are subsumed by overidentifying constraints and gave an algorithm for deriving overidentifying constraints.

\section{Acknowledgments}

We would like to thank Judea Pearl, Mathias Drton, Thomas Richardson, and Luca Weihs for helpful discussions. This research was supported in parts by grants from NSF \#IIS-1302448, \#IIS-1527490, ONR \#N00014-13-1-0153, \#N00014-13-1-0153, and DARPA ARO W911NF-16-1-0579.

\appendix
\section{Proof That AVs Work}
\setcounter{theorem}{0}
\begin{restatable}{theorem}{thmOne}
\label{thm:theoremOne}
Given a linear SEM with graph $G$, where $E\subseteq Inc(z)$ is a set of edges whose coefficient values are known, if $W\cup \{y\}$ does not contain descendants of $z$, then $(z^*\indep y | W)_{G^{E+}}$ if and only if $(z\indep y | W)_{G_{E-}}$.
Furthermore, $\sigma_{z^*y\cdot W}^{G^{E+}} = \sigma_{zy\cdot W}^{G_{E-}}$.
\end{restatable}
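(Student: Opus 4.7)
The plan is to compute the partial covariance $\sigma_{z^* y \cdot W}^{G^{E+}}$ directly in terms of model quantities by exploiting the structural definition of $z^*$, then apply the three Conditional Edge Lemmas to rewrite the resulting expression as $\sigma_{zy \cdot W}^{G_{E-}}$. The equivalence of the two d-separation statements will then follow from the standard correspondence between d-separation and vanishing partial covariance in linear Gaussian SEMs.

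First I would partition the parents of $z$ into two groups: $T = \{t_1,\dots,t_k\} = Ta(E)$, whose incoming edges to $z$ are the ones in $E$ with (known) coefficients $e_1,\dots,e_k$, and $Q = \{q_1,\dots,q_m\}$, the remaining parents, with coefficients $\gamma_1,\dots,\gamma_m$. By Definition~\ref{def:auxvar}, $z^* = z - \sum_i e_i t_i = \sum_j \gamma_j q_j + u_z$. Since the augmentation $G^{E+}$ only introduces fresh auxiliary vertices and changes no structural equation of any original variable, covariances among original variables (and the error term $u_z$) in $G^{E+}$ coincide with those in $G$. Applying the linearity of covariance and then CEL~\ref{thm:conditionalGraphParents} to the equation for $z^*$ (viewed as if $z^*$ had parents $Q$ with error $u_z$) gives
\begin{equation*}
\sigma_{z^* y \cdot W}^{G^{E+}} \;=\; \sum_j \gamma_j \,\sigma_{q_j y \cdot W}^{G} \;+\; \sigma_{u_z y \cdot W}^{G}.
\end{equation*}

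Next I would move each term from $G$ to $G_{E-}$. The hypothesis is that $W \cup \{y\}$ contains no descendants of $z$; since $G$ is acyclic, no parent $q_j$ of $z$ is a descendant of $z$ either. Since $He(E)=\{z\}$, the sets $W \cup \{y, q_j\}$ therefore satisfy the non-descendancy premise of CEL~\ref{thm:conditionalEdgeRemove}, yielding $\sigma_{q_j y \cdot W}^{G} = \sigma_{q_j y \cdot W}^{G_{E-}}$. Likewise, $W \cup \{y\}$ satisfies the premise of CEL~\ref{thm:conditionalErrorRemove}, giving $\sigma_{u_z y \cdot W}^{G} = \sigma_{u_z y \cdot W}^{G_{E-}}$. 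Substituting these in and then applying CEL~\ref{thm:conditionalGraphParents} in reverse to $z$ in $G_{E-}$, where the parents of $z$ are exactly $Q$ (with the same coefficients $\gamma_j$ and error $u_z$), collapses the expression to $\sigma_{zy \cdot W}^{G_{E-}}$. This proves the covariance identity $\sigma_{z^* y \cdot W}^{G^{E+}} = \sigma_{zy \cdot W}^{G_{E-}}$.

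Finally, for the d-separation equivalence I would invoke the standard fact that in linear Gaussian SEMs a (conditional) independence $(a \indep b \mid C)$ holds generically in a graph $H$ if and only if $\sigma_{ab \cdot C}^{H}$ is identically zero as a polynomial in the free structural parameters, and this in turn holds if and only if $a$ and $b$ are d-separated by $C$ in $H$. Since the covariance polynomials on the two sides are identical, their d-separation statements are equivalent. The main obstacle I anticipate is not algebraic but bookkeeping: carefully checking that CEL~\ref{thm:conditionalEdgeRemove} can be applied to each parent $q_j$ (which requires noting $q_j \notin Desc(z)$ from acyclicity) and that the parents-of-$z$ set really does shrink from $T \cup Q$ in $G$ to $Q$ in $G_{E-}$, so that the forward and reverse applications of CEL~\ref{thm:conditionalGraphParents} line up exactly.
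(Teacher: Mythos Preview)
Your proposal is correct and follows essentially the same route as the paper: expand $\sigma_{z^*y\cdot W}$ via CEL~\ref{thm:conditionalGraphParents} into a sum over the non-$E$ parents plus the error term, transfer each term from $G^{E+}$ to $G_{E-}$ via CEL~\ref{thm:conditionalEdgeRemove} and CEL~\ref{thm:conditionalErrorRemove}, and then collapse back using CEL~\ref{thm:conditionalGraphParents} in $G_{E-}$. The only cosmetic difference is that the paper first writes $\sigma_{z^*y\cdot W}=\sigma_{zy\cdot W}-\sum_i e_i\sigma_{t_iy\cdot W}$ and then expands $\sigma_{zy\cdot W}$, whereas you substitute $z^*=\sum_j\gamma_j q_j+u_z$ directly; the cancellation and the subsequent use of the lemmas are identical.
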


\subsection{Notation}

The statement $(z\indep y | W)_{G_{E-}}$ is equivalent to saying: $\rho_{zy.W}=0$ in the graph with incoming
known edges removed.
Similarly, $(z^*\indep y | W)_{G^{E+}}$ is saying that $\rho_{z^*y.W}=0$ in the graph with added auxiliary variable.

Let $\Sigma$ be the covariance matrix containing covariances between z,y, and all elements of $W$.

Let $\Sigma^*$ be the equivalent matrix with $z$ replaced by $z^*$.

Finally, we will use the notation $\Sigma_{zW,yW}$ to represent the matrix with only the rows
corresponding to z and elements of $W$, and columns of $y$ and elements of $W$. That is,
the mentioned matrix has the $y$ row removed, and the $z$ column removed.

We will use the determinant formula for partial covariance.

$$
\sigma_{zy.W} = \frac{\det \Sigma_{zW,yW}}{\det \Sigma_{WW}}
$$

By the Gessel-Viennot-Lindstrom lemma as applied to mixed graphs (see t-separation paper),
we know that $\det \Sigma_{WW}\ne 0$, since there exist paths of length 0 from each $w\in W$
to itself that don't intersect.

Similarly, we have 

$$
\sigma_{z^*y.W} = \frac{\det \Sigma^*_{z^*W,yW}}{\det \Sigma^*_{WW}}
$$

Notice that $\Sigma^*_{WW}$ is just the covariance matrix of the weights in the unmodified graph,
meaning in the graph where we neither added the auxiliary variable nor deleted edges. This is because
none of the paths of the covariances go through the auxiliary variable, as it is a collider (remember that
the covariance matrix contains only unconditioned covariances). Same as above, we conclude that
this determinant is non-zero.

Therefore, the theorem's statement is effectively saying:

$$
\det \Sigma_{zW,yW} = 0\ \ \ \  \text{ iff }\ \ \ \ \det \Sigma^*_{z^*W,yW} = 0
$$

except when there are descendants of $z$ in $W$.

For clarity, the following notation will be used in the rest of this document:
\begin{itemize}
\item $\delta_{ab}$ represents all the \textbf{directed} paths from $a$ to $b$
\item $\gamma_{ab}=\sigma_{ab} - \delta_{ab}$ in a normalized model, meaning that $\gamma$
contains all up-paths, or all paths that start from an edge incoming to $a$, including paths starting with bidirected edges.
\end{itemize}

We will also be using $\gamma_{zy}^{(e)}$ as all back paths from $z$ to $y$ taking the AV edges $e$, 
and all back paths that do not take the AV edges as $\gamma^{(-e)}_{zy}$. Unless explicitly specified, these paths are assumed to be in graph $G^{E+}$.

\subsection{Proof}

Compare $\Sigma_{zW,yW}$ to $\Sigma^*_{z^*W,yW}$ in the case that $W$ are non-descendants of $Z$:

$$
\Sigma_{zW,yW} = \begin{bmatrix}
    \sigma_{zy} & \sigma_{zw_1} & \dots & \sigma_{zw_n}\\
    \sigma_{w_1y} & &  & \\
    \vdots &  & \Sigma_{W,W} & \\
    \sigma_{w_ny} &  &  & \\
\end{bmatrix}
$$

$$
\begin{aligned}
\Sigma^*_{z^*W,yW} &= \begin{bmatrix}
    \sigma^*_{z^*y} & \sigma^*_{z^*w_1} & \dots & \sigma^*_{z^*w_n}\\
    \sigma^*_{w_1y} & &  & \\
    \vdots &  & \Sigma^*_{W,W} & \\
    \sigma^*_{w_ny} &  &  & \\
\end{bmatrix}\\ &= \begin{bmatrix}
    \sigma^*_{zy} - \sum_i e_i\sigma^*_{p_iy} & \sigma^*_{zw_1} - \sum_i e_i\sigma^*_{p_iw_1} & \dots\\
    \sigma^*_{w_1y} & &   \\
    \vdots &  & \Sigma^*_{W,W}  \\
    \sigma^*_{w_ny} & & \\
\end{bmatrix}
\end{aligned}
$$

This was inserting the definition of auxiliary variable. We now use simple reasoning about paths
from wright's rules to conclude:

\begin{enumerate}
\item Since $W$ are non-descendants of $z$, $\Sigma^*_{W,W} = \Sigma_{W,W}$, since no paths
going through $z$ can go back to ancestors without crossing colliders.
\item Similarly, paths in $\sigma^*_{w_iy}$ do not cross any removed edges, and so is same for both graphs.
\item If $W$ are non-descendants of $z$, we have 
$$
\sigma^*_{zw_i} - \sum_i e_i\sigma^*_{p_iw_i} = \sigma_{zw_i}
$$
This can be seen by realizing that $\sigma_{zy}$ is the graph with the edges $e_i$ deleted. Since there are no paths passing through $e_i$ from the bottom (ie, $\sigma_{p_iw_i}$ does not have any paths through $e_i$), 
looking at wright's rules, we see that
we are simply removing all the paths through the deleted edges from $z$, meaning that no paths containing $e_i$ remain.
\item When y is a non-descendant of $z$, we have the same result. 
    $$
    \sigma^*_{zy} - \sum_i e_i\sigma^*_{p_iy} = \sigma_{zy}
    $$
\end{enumerate}

This is sufficient to prove the theorem as stated, since given the theorem's conditions, the two matrices are equal, meaning that $\sigma_{zy.W}^{G_{E-}}=\sigma_{z^*y.W}^{G^{E+}}$.

\subsubsection{What if $y$ is a descendant of $z$?}

The above theorem shows independencies behave as in the graph $G_{E-}$ when $y$ is not a descendant of $z$.
However, we use the AV $Z^*$ as an instrumental variable, which has $y$ as its descendant.
We therefore need to prove that the AV can be used as an instrumental variable even when $y$ is a descendant of $z$.

There are two differences from the above in this situation:

\begin{enumerate}
    \item $\sigma^*_{w_iy} = \sigma_{w_iy} + \gamma^{(e)}_{zw_i}\delta_{zy}$, since now the paths from $W$ to $y$ can cross removed edges,
    \item We also need to find the relationship between
    $$
\sigma^*_{zy} - \sum_i e_i\sigma^*_{p_iy}  \hspace{2em}\text{  and  }\hspace{2em} \sigma_{zy}
$$
\end{enumerate}

In the theorem's statement, the variance of $z$ in $G_{E-}$ was not specified. There are two possibilities. The first is having it be 1, and the other
is having $z$ in $G_{E-}$ retain the variance of $z^*$. Having the variance of $z$ be 1 causes some non-trivial changes in the graph, which require extra knowledge
of the values of directed paths to compute, so we specify that $z$ in $G_{E-}$ has the variance of $z^*$ exactly.

This means that to compare the two values in (2), we will need to expand both of them:

\vspace{1em}

\textbf{Expansion of $\sigma_{zy}$}

Whatever $\sigma_{z}^2$ is, we can decompose $\sigma_{zy}$ (ie: $\sigma_{zy}^{G_{E-}}$) using wright's rules for unnormalized models:

\begin{equation}
\label{basicdecomp}
\sigma_{zy} = \sigma_{z}^2 \delta_{zy} + \gamma_{zy}^{(-e)}
\end{equation}

First, notice that $\delta_{zy}$ is the same in $G^{E+}$ as in $G_{E-}$, since we did not remove any edges from descendants of $z$. The back-paths
are unaffected by the modified variance, but only the back-paths not taking edges $e_i$ are included (denoted as $-e$ in $\gamma_{zy}^{(-e)}$), 
since we are in the graph without these edges present.

Now, we want to compute $\sigma_z^2$. Since the graph was assumed to be normalized, we have $\E[x_ix_i] = 1 \forall x_i$ except for $z$ and its descendants. To get the variance of
$z$, we compute the variance of $z^*$ in $G^{E+}$ (denoted with $*$). This gives the variance of $z$ in the graph with the edges removed:
$$
\begin{aligned}
\sigma^{*2}_{z^*} &= \E^*[z^*z^*] = \E^*[(z - \sum_i e_i p_i)(z - \sum_i e_i p_i)]\\
&= \E^*[zz] - 2\sum_i e_i\E^*[zp_i] + \sum_i\sum_j e_ie_j \E^*[p_ip_j]\\
&= 1 - 2\sum_i e_i\sigma^*_{zp_i} + \sum_i\sum_j e_ie_j \sigma^*_{p_ip_j}\\
&= 1 - 2\sum_i e_i\left(\sigma_{zp_i} + \sum_j e_j \sigma_{p_ip_j}\right) + \sum_i\sum_j e_ie_j \sigma^*_{p_ip_j}\\
&= 1 - 2 \sum_i e_i \sigma_{zp_i} - \sum_i\sum_j e_ie_j \sigma_{p_ip_j}
\end{aligned}
$$

In the above, $\sigma^*_{zp_i}$ is just $\sigma_{zp_i}$ (from the graph with $e$ removed), plus all the paths that would have gone through $e$. 
The last step is because $\sigma^*_{p_ip_j} = \sigma_{p_ip_j}$, since paths between parents don't take any removed edges. Remember that $\sigma$ is in $G_{E-}$ and $\sigma^*$ is in $G^{E+}$

Substituting this as $\sigma^2_z$ in the decomposition of $\sigma_{zy}$ (eq \ref{basicdecomp}), we get:

\begin{equation}
    \label{AVexp1}
\sigma_{zy} = \delta_{zy}\left(1 - 2 \sum_i e_i \sigma_{zp_i} - \sum_i\sum_j e_ie_j \sigma_{p_ip_j}\right) + \gamma_{zy}^{(-e)}
\end{equation}

\vspace{1em}

\textbf{Expansion of $\sigma^*_{zy} - \sum_i e_i\sigma^*_{p_iy}$}

First look at  $\sigma^*_{zy}$ ($\sigma_{zy}$ in $G^{E+}$). Remember that $\gamma_{zy}^{(e)}$ is all back paths from $z$ to $y$ taking the AV edges $e$, 
and all back paths that do not take the AV edges as $\gamma^{-e}_{zy}$:

\begin{equation}
\sigma^*_{zy} = \delta_{zy} + \gamma_{zy}^{(e)} + \gamma_{zy}^{(-e)}
\end{equation}

Now, we take a closer look at the $\sum_i e_i\sigma_{p_iy}^*$ from the AV. Decompose $\sigma_{p_iy}^*$ as all the paths from $p_i$ through edge $e_j$, 
paths between $z$ and $p_i$ with edges $e$ removed, and finally
paths not going into $z$ at all:
$$
\sigma^*_{p_iy} = \delta_{zy} \sum_j e_j \sigma^*_{p_ip_j} + \delta_{zy}\sigma_{zp_i} + \frac{\gamma^{(e_i)}_{zy}}{e_i}
$$
...which makes:
$$
\sum_i e_i\sigma^*_{p_iy} = \delta_{zy}\sum_{i}\sum_{j} e_ie_j \sigma^*_{p_ip_j} + \delta_{zy}\sum_i e_i\sigma_{zp_i} + \gamma^{(e)}_{zy}
$$
giving us a result:

\begin{equation}
    \label{AVexp2}
\begin{aligned}
\sigma^*_{zy} - \sum_i e_i\sigma^*_{p_iy} &= \delta_{zy}\left(1 -\sum_i e_i\sigma_{zp_i} - \sum_{i}\sum_{j} e_ie_j \sigma_{p_ip_j} \right) + \gamma_{zy}^{(-e)}
\end{aligned}
\end{equation}

\vspace{1em}
\textbf{Final Result}

The two expansions in eq \ref{AVexp1} and \ref{AVexp2} can be put together, giving a single relation between them:
$$
\sigma^*_{zy} - \sum_i e_i\sigma^*_{p_iy} = \sigma_{zy} + \delta_{zy}\sum_i e_i \sigma_{zp_i}
$$

Plugging this into the two matrices:

$$
\det \Sigma_{zW,yW} = \det\begin{bmatrix}
    \sigma_{zy} & \sigma_{zw_1} & \dots & \sigma_{zw_n}\\
    \sigma_{w_1y} & &  & \\
    \vdots &  & \Sigma_{W,W} & \\
    \sigma_{w_ny} &  &  & \\
\end{bmatrix}
$$

$$
\begin{aligned}
\det \Sigma^*_{z^*W,yW} &= \det\begin{bmatrix}
    \sigma^*_{zy} - \sum_i e_i\sigma^*_{p_iy} & \sigma_{zw_1}  & \sigma_{zw_n} \\
    \sigma^*_{w_1y} &  & \\
    \vdots &  \Sigma_{W,W} & \\
    \sigma^*_{w_ny} &   & \\
\end{bmatrix} \\
&= \det\begin{bmatrix}
    \sigma_{zy} + \delta_{zy}\sum_i e_i \sigma_{zp_i} & \sigma_{zw_1}  & \sigma_{zw_n} \\
    \sigma_{w_1y} + \gamma^{(e)}_{zw_1}\delta_{zy} &  & \\
    \vdots &  & \Sigma_{W,W} \\
    \sigma_{w_ny} + \gamma^{(e)}_{zw_n}\delta_{zy} &  & \\
\end{bmatrix}
\end{aligned}
$$
\begin{equation}
= \det \Sigma_{zW,yW} + \delta_{zy}\det\begin{bmatrix}
    \sum_i e_i \sigma_{zp_i} & \sigma_{zw_1}  & \dots & \sigma_{zw_n} \\
    \gamma^{(e)}_{zw_1} & &  & \\
    \vdots &  & \Sigma_{W,W} & \\
    \gamma^{(e)}_{zw_n} &  &  & \\
\end{bmatrix}\label{AVfinal}\\ 
\end{equation}

If $\delta_{zy}=0$, we have the result for $y$ nondescendant of $z$.

\vspace{1em}
\textbf{AVs for IVs}

This last subsection is to ensure that IVs still work in this new situation. 

We have an IV, as defined by $z$ as the instrument, and $x\rightarrow y$ as the goal. By the requirements of IVs, we have:

$$
\begin{aligned}
\det \Sigma_{zW,yW} &= \det\begin{bmatrix}
        \sigma_{zy} & \sigma_{zw_1} & \dots & \sigma_{zw_n}\\
        \sigma_{w_1y} & &  & \\
        \vdots &  & \Sigma_{W,W} & \\
        \sigma_{w_ny} &  &  & \\
    \end{bmatrix} \\ &= \det\begin{bmatrix}
        \sigma^2_z\delta_{zx}\lambda + \gamma_{zx}\lambda + \gamma^{(-\lambda)}_{zy}  & \sigma_{zw_1} & \sigma_{zw_n}\\
        \sigma_{w_1x}\lambda + \sigma_{w_1y}^{(-\lambda)} &  & \\
        \vdots &  & \Sigma_{W,W} \\
        \sigma_{w_nx}\lambda + \sigma_{w_ny}^{(-\lambda)} &  & \\
    \end{bmatrix}\\ 
&= \lambda\det\Sigma_{zW,xW}  + \det\begin{bmatrix}
    \gamma^{(-\lambda)}_{zy}  & \sigma_{zw_1} &  \sigma_{zw_n}\\
    \sigma_{w_1y}^{(-\lambda)} & & \\
    \vdots &  & \Sigma_{W,W} \\
    \sigma_{w_ny}^{(-\lambda)} &  & \\
\end{bmatrix}\\ 
\end{aligned}
$$

Now, we will heavily exploit the fact that none of the relevant variables are descendants of $y$ to claim that the above determinant is 0
in the case of IVs (That is, we assume that $(z\indep y|W)$ in $G_{E-}$ with the ``goal'' edge $\lambda$ also removed. To do so, we look at the graph with $\lambda$ removed. In that case, we have a guarantee that $\delta^{(-\lambda)}_{zy}=0$,
so $\gamma^{(-\lambda)}_{zy} = \sigma^{(-\lambda)}_{zy}$. But this is the matrix for $\sigma_{zy.W}^{(-\lambda)}$, which we know is 0 by requirement, so:

\begin{equation}
    \label{matrixIV}
\det \Sigma_{zW,yW} = \lambda\det\Sigma_{zW,xW}
\end{equation}

since that is a requirement of the AV (when $\delta_{zy}=0$, $(z\indep y|W)$).
We therefore have, for our IV:

$$
\begin{aligned}
\frac{\sigma^*_{z^*y.W}}{\sigma^*_{z^*x.W}} &= \frac{\det \Sigma_{zW,yW} + \delta_{zy}\det\begin{bmatrix}
    \sum_i e_i \sigma_{zp_i} & \sigma_{zw_1}  & \dots & \sigma_{zw_n} \\
    \gamma^{(e)}_{zw_1} & &  & \\
    \vdots &  & \Sigma_{W,W} & \\
    \gamma^{(e)}_{zw_n} &  &  & \\
\end{bmatrix}}{\det\Sigma_{zW,xW} + \delta_{zx}\det\begin{bmatrix}
    \sum_i e_i \sigma_{zp_i} & \sigma_{zw_1}  & \dots & \sigma_{zw_n} \\
    \gamma^{(e)}_{zw_1} & &  & \\
    \vdots &  & \Sigma_{W,W} & \\
    \gamma^{(e)}_{zw_n} &  &  & \\
\end{bmatrix}} \\
&= \frac{\lambda\left(\det\Sigma_{zW,xW} + \delta_{zx}\det[\cdots] \right)}{\det\Sigma_{zW,xW} + \delta_{zx}\det[\cdots] } = \lambda
\end{aligned}
$$

As a side note, the above proof is also valid if $x=z$, so we can use $x$ as the AV itself.

\section{Additional Proofs}
\subsection{Conditional Edge Lemmas}
First we give 3 lemmas which are used extensively in the coming proofs. They are referred to as the \textit{Conditional Edge Lemmas}, or $CEL$. 

For convenience, we will use a shorthand notation of $\sigma_{xy.W}^G = \sigma(x,y|W)$ in the graph $G$.\footnote{Different graphs can have different covariances of the same variables. Since each graph is defined by SEMs, the effect of adding or removing variables to equations (edges) is well-defined in terms of the covariances.} 

\begin{elemma}
\label{thm:conditionalGraphParents}
Given variables $x,y$, a conditioning set $W$, and defining $p_i = Pa(x)_i$, then $\sigma_{xy.W}=\sum_i \alpha_i \sigma_{p_iy.W} + \sigma_{u_xy.W}$, where $\alpha_i$ as the structural parameter for the edge
between $p_i$ and $x$,
and $u_x$ is the error term of $x$.
\end{elemma}

\begin{proof}
Let $\{w_1,...,w_n\} = W$. By definition of conditional covariance,

$$
\sigma_{xy\cdot W} = \E[\eta_{x\cdot W}\eta_{y\cdot W}]
$$

where $\eta_{x\cdot W}$ is the residual:

$$
\eta_{x\cdot W} = x- \sum_i\beta_i w_i
$$

with the $\beta_i$ as regression coefficients. Note that by definition of the residual $\E[w_i\eta_{y\cdot W}] = 0$, i.e. the covariance of a residual with any of its subtracted variables is 0.

$$
\begin{aligned}
\sigma_{xy\cdot W} = \E[\eta_{x\cdot W}\eta_{y\cdot W}] &= \E\left[(x- \sum_i\beta_i w_i)\eta_{y\cdot W}\right]\\ &= \E\left[x\eta_{y\cdot W}\right]
\end{aligned}
$$

Expanding the definition of $x$:

$$
\begin{aligned}
\E\left[x\eta_{y\cdot W}\right] &= \E\left[\left(\sum_i \alpha_i p_i + u_x\right)\eta_{y\cdot W}\right] \\&= \sum_i \alpha_i \E\left[ p_i\eta_{y\cdot W}\right] + \E\left[ u_x \eta_{y\cdot W}\right] 
\end{aligned}
$$

We now subtract the regression coefficients for each variable, since we are subtracting 0 in the expectation (covariance of a residual with its subtracted variables is 0), turning the $p_i$ back into residuals.

$$
\begin{aligned}
&\sum_i \alpha_i \E\left[ p_i\eta_{y\cdot W}\right] + \E\left[ u_x \eta_{y\cdot W}\right]\\ &= \sum_i \alpha_i \E\left[ \eta_{p_i\cdot W}\eta_{y\cdot W}\right] + \E\left[ \eta_{u_x\cdot W} \eta_{y\cdot W}\right]\\ & = \sum_i \alpha_i \sigma_{p_iy.W} + \sigma_{u_xy.W}
\end{aligned}
$$
\end{proof}

\begin{elemma}
\label{thm:conditionalEdgeRemove}
Given a conditional covariance $\sigma_{xy.W}$ in graph $G$, labeled as $\sigma_{xy.W}^G$, and a set of directed edges $E$, where $G_{E-}$ is the graph $G$ with edges $E$ removed, 
if $(W\cup\{x,y\})\cap Desc(Head(E)) = \emptyset$, then $\sigma_{xy.W}^G = \sigma_{xy.W}^{G_{E-}}$.
\end{elemma}

\begin{proof}
As done in CEL \ref{thm:conditionalGraphParents}, we directly use the definition of conditional covariance in terms of regression:
$$
\sigma_{xy\cdot W} = \E[\eta_{x\cdot W}\eta_{y\cdot W}] \ \ \ \text{where}\ \ \  \eta_{x\cdot W} = x- \sum_i\beta_i w_i
$$

$\beta$ is computed by minimizing the squared residual:

$$
\begin{aligned}
\E[\eta_{x\cdot W}&\eta_{x\cdot W}] = \E\left[(x- \sum_i\beta_i w_i)^2\right] \\
&= \E[x^2] - \sum_i \beta_i \left(2\E[xw_i] - \sum_j \beta_j \E[w_iw_j] \right)
\end{aligned}
$$

This equation holds in all graphs. We will show that the expectation terms of the equation, and hence the resulting values of $\beta$ after performing regression
are the same in $G$ as they are in $G_{E-}$. 

Since $(W\cup\{x,y\})\cap Desc(Head(E)) = \emptyset$, we know that $x$ and $w_i$ are both non-descendants of the removed edges in $G_{E-}$, so
the $\E[xx]$, and all $\E[xw_i]$ and $\E[w_iw_j]$
terms can be directly expanded in terms of their ancestors, which are the same for both $G$ and $G_{E-}$, and have the same underlying error distribution and covariances\footnote{we are working in DAGs - non-recurrent models}. This means that
these terms must be equal in $G$ and $G_{E-}$.

Another way to reason about this is to use Wright's rules of path analysis. The terms $\E[xw_i]$ can be written in terms of paths between $x$ and $w_i$. For a path
to cross a removed edge, it would need to cross a collider in order to leave the descendants of the edge, and get to the goal node. This means that the valid paths are the same for both graphs, giving the equations
used to solve for $\beta$ identical expectation coefficients.

We can now expand out the value of $\sigma_{xy\cdot W}$ the same way in both graphs:

$$
\begin{aligned}
\sigma_{xy\cdot W} &= \E[\eta_{x\cdot W}\eta_{y\cdot W}] = \E[\eta_{x\cdot W}y] \\&= \E\left[(x- \sum_i\beta_i w_i)y\right] = \E[xy] - \sum_i \beta_i \E[w_iy]
\end{aligned}
$$

We have showed that $\beta$ are the same in both graphs, and we use the same reasoning to conclude that $\E[w_iy]$ and $\E[xy]$ must be equal in $G$ and $G_{E-}$. 
Therefore, since all terms in the equation are the same in both graphs, $\sigma_{xy\cdot W}^G = \sigma_{xy\cdot W}^{G_{E-}}$.

\end{proof}

\begin{elemma}
\label{thm:conditionalErrorRemove}
Given a conditional error covariance $\sigma_{u_xy.W}^G$, and a set of directed edges $E$,
if $(W\cup\{y\})\cap Desc(Head(E)) = \emptyset$, then $\sigma_{u_xy.W}^G = \sigma_{u_xy.W}^{G_{E-}}$.
\end{elemma}

The main difference between this and CEL \ref{thm:conditionalEdgeRemove}, is that we operate on $u_x$ (the error term of $x$), which allows $x$ to be a descendant of $Head(E)$.

\begin{proof}
We proceed in the same fashion as in CEL \ref{thm:conditionalEdgeRemove}. By the definition of conditional error covariance:

$$
\begin{aligned}
\sigma&_{u_xy.W} = \E\left[ \eta_{u_x\cdot W} \eta_{y\cdot W}\right] = \E\left[ u_x \eta_{y\cdot W}\right] \\&=  \E\left[ u_x \left(y - \sum_i \beta_i w_i\right)\right] 
= \E[u_xy] - \sum_i \beta_i \E[u_xw_i]
\end{aligned}
$$

Using the reasoning from CEL \ref{thm:conditionalEdgeRemove}, we know that $\beta_i$ are the same for $G$ and $G_{E-}$. Once again, expanding $y$ and $w_i$ to their ancestors, which have no edges removed,
we get the same distributions for both graphs, meaning that the expectations are also equal.

This can also be seen intuitively in terms of Wright's rules when $x$ is not an ancestor of $y$. In that case, $\E[u_xy]$ represents all paths from x to y starting with a bidirected edge (half-treks).
If such a path were to be different in the two graphs, it would need to cross a deleted edge. But to do that, it would have to cross a collider. If $x$ is an ancestor of $y$, then we will additionally have an $\E[u_xu_x]$ term in our expansion,
which is the same for both graphs.
\end{proof}

\subsection{Auxiliary and Quasi-Instrumental Sets}

% The General IV theorem is restated here
%\stepcounter{theorem}

\begin{restatable}{suppdefinition}{aiv}
\label{def:aiv}
Given a linear SEM with graph G, a set $E_Z$ of known coefficients, 
and a set of structural coefficients $\alpha = \{\alpha_1,\alpha_2,...,\alpha_k\}$, 
the set $Z=\{z_1,...,z_k\}$ generates an auxiliary instrumental set if there exist triples 
$(z_1,W_1,p_1),...,(z_j,W_j,p_k)$ such that:

\begin{enumerate}
\item For $i=1,...,k$, either:
\begin{enumerate}
\item the elements of $W_i$ are non-descendants of $y$, and $(z_i\indep y|W_i)_{G_{E}}$ where $G_{E}$ is the graph obtained by deleting the edges $E$ from $G$.
\item the elements of $W_i$ are non-descendants of $z_i$ and $y$, and $(z_i\indep y|W_i)_{G_{E\cup E_{z_i}}}$ where $G_{E\cup E_{z_i}}$ is the graph obtained by deleting the edges $E$, $E_Z\cap (Inc(z_i))$ from $G$.
\end{enumerate}
\item for $i=1,...,k$, $p_i$ is an unblocked path between $z_i$ and $x_i$, not blocked by $W_i$, where $x_i=He(\alpha_i)$, 
\item the set of paths $\{p_1,...,p_k\}$ has no sided intersection
\end{enumerate}
\end{restatable}

\begin{supptheorem}
\label{thm:auxid}
If there exists an auxiliary instrumental set for structural coefficients $\{\alpha_1,\alpha_2,...,\alpha_k\}$, then the coefficients are identifiable.
\end{supptheorem}

\begin{proof}
Here, we will do the same exact proof as for standard AV, using $\tau_z$ to represent the extra determinant with respect to $z$.

$$
\begin{aligned}
\sigma^*_{z^*y.W} &= \sigma_{zy.W} + \delta_{zy}\tau_z\\& = \sum_i \lambda_i \sigma_{zx_i.W} + \tau_z \sum_i \lambda_i \delta_{zx_i}\\ & = \sum_i \lambda_i \sigma^*_{z^*x_i.W}
\end{aligned}
$$

The above equation shows that the system of linear equations used for instrumental sets is also valid for AVs.
To show that this system can be solved, we modify \citet{brito:pea02a}'s proof of instrumental sets. The modifications span multiple lemmas, therefore the full proof is given as appendix \ref{sec:supptheorem} of this document (below).
\end{proof}

% Make it match the paper
\stepcounter{theorem}

\begin{restatable}{theorem}{aqivid}
\label{thm:aqivid}
If $Z^*$ is a quasi-instrumental set for $E$, then the coefficients $E$ are identifiable.
\end{restatable}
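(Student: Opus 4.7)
The plan is to extend the argument for Supplemental Theorem \ref{thm:auxid} (itself a modification of \citet{brito:pea02a}'s proof of generalized instrumental sets) with one additional piece of machinery: an auxiliary variable $y^*$ that absorbs the known incoming edges of $y$. Concretely, set $E_y = E_K \cap Inc(y)$ and, for each index $i$ that uses case (i)(b) of Definition \ref{def:aqiv}, set $E_{z_i} = E_K \cap Inc(z_i)$. Because every coefficient in $E_K$ is known by hypothesis, the quantities $y^* = y - \sum_{e \in E_y} e \cdot Ta(e)$ and $z_i^* = z_i - \sum_{e \in E_{z_i}} e \cdot Ta(e)$ are computable from the observational covariance matrix, and so is every partial covariance among them and $W_i$. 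Write $z_i^\dagger = z_i$ in case (a) and $z_i^\dagger = z_i^*$ in case (b).

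For each $i$, the next step is to apply Theorem \ref{thm:sep} once on the $y$-side and, in case (b), also on the $z$-side, rewriting the observable quantity $\sigma(z_i^\dagger, y^* | W_i)$ as $\sigma(z_i, y | W_i)$ evaluated in the reduced graph $G^\sharp$, where $G^\sharp = G_{E_y-}$ in case (a) and $G^\sharp = G_{E_y \cup E_{z_i}-}$ in case (b). The descendant hypotheses of Theorem \ref{thm:sep} are supplied by Definition \ref{def:aqiv}(i) together with the observation that $z_i$ itself cannot be a descendant of $y$: any directed chain $y \to \cdots \to z_i$ survives the removal of edges incoming to $y$ and could only be blocked by conditioning on a descendant of $y$, contradicting the stated independence. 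I then expand this partial covariance via CEL \ref{thm:conditionalGraphParents} applied to the structural equation of $y$, splitting its parents into those in $E = \{\alpha_1,\dots,\alpha_k\}$ and the rest, and using CELs \ref{thm:conditionalEdgeRemove} and \ref{thm:conditionalErrorRemove} to move the ``other parent'' and error contributions into $G^\sharp_{E-}$:
\[
\sigma(z_i^\dagger, y^* | W_i) \;=\; \sum_{j=1}^k \alpha_j\, \sigma(z_i, x_j | W_i)_{G^\sharp} \;+\; \sigma(z_i, y | W_i)_{G^\sharp_{E-}}.
\]
The second summand is precisely the quantity that Definition \ref{def:aqiv}(i) forces to vanish, so each $i$ yields one linear equation in $\alpha_1,\dots,\alpha_k$ whose coefficients are observable partial covariances.

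Collecting the $k$ equations into a matrix system $C\alpha = b$, the entries of $C$ decompose (by Wright's rules in $G^\sharp$) into sums over open paths from $z_i$ to $x_j$. The path $p_i$ guaranteed by Definition \ref{def:aqiv}(ii) contributes the distinguished summand to $c_{ii}$, and the no-sided-intersection condition (iii) on $\{p_1,\dots,p_k\}$ is exactly the combinatorial input required by the determinantal argument of \citet{brito:pea02a}, as adapted in the proof of Supplemental Theorem \ref{thm:auxid}: it forces $\det C$ to be a nonvanishing polynomial in the structural parameters. Hence $\alpha = C^{-1}b$ is recoverable almost everywhere, proving that $E$ is identified. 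The main obstacle will be the bookkeeping in the middle paragraph: one must track exactly which graph each Conditional Edge Lemma is being applied in as the two augmentations (at $y$ and, optionally, at $z_i$) are composed, so that the leftover covariance lands precisely in the reduced graph $G_{E \cup E_y-}$ or $G_{E \cup E_{zy}-}$ where Definition \ref{def:aqiv} promises it vanishes. Once this bookkeeping is done, the non-degeneracy of $\det C$ follows by the same determinantal argument that underlies the proof of Supplemental Theorem \ref{thm:auxid}, so no new combinatorial content is required.
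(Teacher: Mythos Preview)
Your overall strategy---set up a linear system via the Conditional Edge Lemmas, kill the remainder using condition~(i), and invoke the Brito--Pearl determinantal argument from Supplemental Theorem~\ref{thm:auxid} for full rank---is exactly what the paper does. But there is a genuine gap in the bookkeeping of your middle paragraph, precisely where you anticipated trouble.

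In case~(b) you apply Theorem~\ref{thm:sep} on the $z$-side to rewrite $\sigma(z_i^*, y^* \mid W_i)$ as $\sigma(z_i, y \mid W_i)_{G^\sharp}$ with $G^\sharp = G_{E_y\cup E_{z_i}-}$. Theorem~\ref{thm:sep} requires the second argument together with $W_i$ to be non-descendants of $z_i$, and you check only the reverse direction (that $z_i$ is not a descendant of $y$). But $y$ \emph{can} be a descendant of $z_i$: if some $x_j\in De(z_i)$, the chain $z_i\to\cdots\to x_j\to y$ survives the removal of $E_y$ and $E_{z_i}$ (these are edges \emph{into} $y$ and $z_i$), so the hypothesis of Theorem~\ref{thm:sep} fails. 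And the conclusion genuinely fails too: take $t\to z_i$, $t'\to z_i$, $z_i\to x_j\to y$, with $t\leftrightarrow t'$ and $E_{z_i}=\{(t,z_i)\}$. A direct computation gives $\sigma(z_i^*,y)_G - \sigma(z_i,y)_{G^\sharp}=\alpha\lambda e e'\,\mathrm{Cov}(t,t')\neq 0$. Consequently your displayed equation is false in this case, and the matrix entries $\sigma(z_i,x_j\mid W_i)_{G^\sharp}$ are \emph{not} observable from data on $G$.

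The fix, which is what the paper does, is to reverse the order: expand $\sigma(z_i^\dagger,y^*\mid W_i)$ in the augmented graph \emph{first} via CEL~\ref{thm:conditionalGraphParents} on $y^*$ and then $y$, obtaining
\[
\sigma(z_i^\dagger,y^*\mid W_i)\;=\;\sum_j\alpha_j\,\sigma(z_i^\dagger,x_j\mid W_i)\;+\;R_i,
\]
so that the matrix entries $\sigma(z_i^\dagger,x_j\mid W_i)$ are directly observable. \emph{Then} move only the remainder $R_i$ into $G_{E\cup E_y-}$ using CEL~\ref{thm:conditionalEdgeRemove}--\ref{thm:conditionalErrorRemove} (all variables involved are non-descendants of $y$), where it equals $\sigma(z_i^\dagger,y\mid W_i)_{G_{E\cup E_y-}}$. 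At \emph{this} point $y$ is no longer a descendant of $z_i$ (any surviving directed path $z_i\to\cdots\to y$ would lie in $G_{E\cup E_{zy}-}$ and could not be blocked by $W_i$, contradicting (i)(b)), so Theorem~\ref{thm:sep} now applies legitimately and the remainder vanishes. The full-rank argument then goes through exactly as you sketched.
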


\begin{proof}
Suppose we have a quasi-instrumental set for $E=\{e_1,...,e_k\}$ with $Z^* = \{z_1,...,z_k\}$ ($z_i$ is referring to the auxiliary variable itself rather than its generator). We know that this set is solvable in the graph $G_{E_y}$,
where the graph is obtained by deleting the edges $T = E_{Z}\cap Inc(y)$ from $G$, since it is an auxiliary instrumental set for the graph. 

Let the parameters connecting $t\in T$ to $y$ be $\gamma$. 
 Let $T'$ be all incident edges to $y$ that are not in $T$ or $E$.
That is, $T' = Inc(y)\setminus (E\cup T)$ (and let the associated structural parameters be $\gamma'$). Finally, let $X$ be $Tail(E)$.

We will show that there exists a solution by explicitly constructing the linear equations to be solved for the parameters. For each $z_i$, we generate an equation:

$$
\begin{aligned}
\sigma_{z_iy^*.W_i} &= \sigma_{z_iy.W_i} - \sum_j \gamma_j \sigma_{z_it_j.W_i}\\
&= \sum_j e_j\sigma_{z_ix_j.W_i} + \sum_j \gamma'_j \sigma_{z_it'_j.W_i} + \sigma_{z_iu_y.W_i}\\
\end{aligned}
$$
We will use the Conditional Edge Lemmas to move the last two terms into the graph $G_{E-\cup E_{y}}$, where these terms are equal to $\sigma_{yz_i.W_i}^{G_{E-\cup E_{y}}}$.
We notice that the second term in the resulting equation must be 0, since by definition of quasi-IV $(z\indep y|W)_{G_{E-\cup E_{y}}}$
$$
\begin{aligned}
\sigma_{z_iy^*.W_i} &= \sum_j e_j\sigma_{z_ix_j.W_i} + \sum_j \gamma'_j \sigma_{z_it'_j.W_i} + \sigma_{z_iu_y.W_i}\\
&= \sum_j e_j\sigma_{z_ix_j.W_i} + \sigma_{yz_i.W_i}^{G_{E-\cup E_{y}}}\\
&= \sum_j e_j\sigma_{z_ix_j.W_i}
\end{aligned}
$$

We now have a system of linear equations, one for each $z_i$, in terms of the $e_i$. The system is in the form $Ae=b$. The $A$ matrix is full rank, because by the Conditional Edge Lemmas, all terms in the matrix
are the same as their counterparts in $G_{E-\cup E_{y}}$. We know that if we find a quasi-instrumental set, then there exists at least one quasi-instrumental set $Z^*$ which makes this matrix full rank. We proved the existence of such a set in supplementary theorem \ref{thm:auxid}. That is, we showed that if one auxiliary set exists, we can always construct another for $E$, for which the above matrix is full rank, and thus invertible. For details, see proof of Supplemental Theorem 1.

\end{proof}

\setcounter{corollary}{0}

\begin{corollary}
\label{cor:aux}
Given a linear SEM with graph $G$, $z^*$ is a quasi-IV for $\alpha$ given $W$ if $W$ does not contain any descendants of $z$, and $z$ is an IV for $\alpha$ given $W$ in $G_{E_z \cup E_y-}$, where $E_z\subseteq Inc(z)$ and $E_y\subseteq Inc(y)$ are sets of edges whose coefficient values are known.

%Given a linear SEM with graph $G$ and a set of edges, $E=\{(t_1 , z), ..., (t_k, z)\}\subseteq Inc(z)$, whose coefficient values are known, $\alpha$, the coefficient from $x$ to $y$, is identified if $z$ is an IV given $W$ for $\alpha$ in $G_{E}$ and $W$ does not contain any descendants of $z$\footnote{Note that unlike instrumental variables, auxiliary instruments cannot condition on descendants of $Z$ to block paths between $Z$ and $Y$.}.  In this case, $z^*=z-t_1 c_1 - ... - t_k c_k$, where $c_i$ is the coefficient for $(t_i, z)$, is an auxiliary IV given $W$ for $\alpha$ and $\alpha = \frac{r_{yz^*.W}}{r_{xz^*.W}}$.
\end{corollary}

\begin{proof}
Let IV-(i), IV-(ii), and IV-(iii) denote conditions (i)-(iii) of Lemma 1 in \citet{pearl2011parameter} and let $\alpha$ be the coefficient of edge $(x,y)$.  We need to show that IV-(i), IV-(ii), and IV-(iii) hold in $G^{E+}$.  Since $z$ is an IV for $\alpha$ given $W$ in $G_E$, it must be the case that $z^*$ satisfies IV-(i) and IV-(iii) in $G^{E+}$.  Now, it remains to be shown that $(z^* \indep y|W)_{G^{E+}_\alpha}$.  Theorem \ref{thm:theoremOne} tells us that if $(z\indep y|W)_{G_{E\cup \{\alpha\}}}$ and $W\cup \{y\}$ does not contain descendants of $z$ in $G_{E\cup \{\alpha\}}$, then $(z^* \indep y|W)_{G^{E+}_\alpha}$.  By assumption, $W$ does not contain any descendants of $z$.  $y$ also cannot be a  descendant of $z$ in $G_{E\cup \{\alpha\}}$.  If $y$ were a descendant of $z$, then it would not be possible to block the path from $z$ to $y$ using $W$, which does not contain any descendants of $z$.  
\end{proof}

\begin{theorem}
Given an arbitrary linear causal model, if a set of coefficients is identifiable using the g-HT algorithm, then it is identifiable using $qID$.  Additionally, there are models that are not identified using the g-HT algorithm, but identified using $qID$.
\end{theorem}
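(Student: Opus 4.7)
The statement has two parts: a subsumption claim (every g-HT identifiable coefficient is $qID$ identifiable) and a strictness claim (there is a model $qID$ handles that g-HT does not). My plan is to handle them separately.

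For the subsumption direction, I would invoke the result of \citet{chen:etal15} already quoted in the discussion, namely that every coefficient identified by the generalized half-trek criterion can also be identified using auxiliary variables combined with \emph{simple} instrumental sets (instrumental sets with empty conditioning set). The key observation is that a simple auxiliary instrumental set is a special case of a quasi-instrumental set, taking each $W_i = \emptyset$ and using only clause (a) of condition (i) in Definition \ref{def:aqiv}. Hence the subroutine \textsc{FindQIS} called by Algorithm \ref{alg:ID} will, at each iteration, accept every simple auxiliary instrumental set found by the Chen et al.\ reduction. I then need to argue that the iterative structure of $qID$ faithfully reproduces the iterative structure of the Chen et al.\ construction: g-HT identifies coefficients in a topological order that matches an order in which auxiliary simple instrumental sets become available, and $qID$'s outer \textbf{repeat} loop keeps re-attempting unidentified connected edge sets until no progress is made, so any coefficient that becomes identifiable after some other coefficient has been identified will eventually be picked up. A short induction on the identification order of g-HT, using the set \textrm{IDEdges} as the inductive quantity, finishes this direction.

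For the strictness direction, I would use Figure \ref{fig:model_subsume}, which is already walked through informally in the text. My plan is to (i) verify that $qID$ identifies every edge of this model by running through the same iteration described in the paper: identify $a$ by $w_1$, identify $b$ using $x$ with conditioning set $\{w_1\}$, then identify $d$ via the auxiliary variable $y^* = y - bx$, then $e$ via $t_2^*$, and finally return to the connected edge set $\{b,c,f\}$ in the second iteration, which is now identified by the auxiliary IV set $\{x,w_1,t_3^*\}$; and (ii) show that g-HT fails on this model. For the latter, I would appeal again to the Chen et al.\ equivalence, which means it suffices to show that the model is not identifiable using auxiliary simple instrumental sets. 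Since simple instrumental sets forbid conditioning, any instrument used for $b$ must be d-separated from $y$ in the mutilated graph without any conditioning; the back-door paths through $w_1$ cannot be closed without conditioning on $w_1$, so $b$ is not identifiable by a simple auxiliary instrumental set. A case check over possible auxiliary variables (generated from any coefficient identifiable without $b$) shows no such AV creates the required separation either, so $b$ remains unidentified, and therefore $qID$ strictly exceeds g-HT.

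The subsumption part is essentially a bookkeeping argument once one quotes the Chen et al.\ reduction, so the primary obstacle is the strictness part, specifically the step of showing that no sequence of auxiliary simple instrumental sets identifies $b$ in Figure \ref{fig:model_subsume}. The difficulty is that auxiliary variables can in principle unlock identification in unexpected ways; I would organise this by observing that every AV one can form in this model is associated with a coefficient $c$ that is itself not identifiable without first identifying $b$, giving a circular dependence, and then verifying by direct inspection that none of the remaining candidate instruments for $b$ (the variables $w_1$, $t_2$, $t_3$, $x$, and their auxiliary counterparts built from edges not downstream of $b$) satisfy the simple (unconditioned) independence requirement with $y$ in the mutilated graph. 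This finite case analysis, rather than the algebra, is where the real work lies.
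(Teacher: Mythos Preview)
Your proposal is correct and follows essentially the same approach as the paper: the paper's entire proof is the paragraph immediately preceding the theorem, which (i) invokes the \citet{chen:etal15} result that g-HT identifiability implies auxiliary \emph{simple} instrumental set identifiability, together with the observation that $qID$ subsumes auxiliary simple instrumental sets, and (ii) uses Figure~\ref{fig:model_subsume} for strictness, arguing that $b$ cannot be identified without conditioning on $w_1$ so auxiliary simple IV sets (and hence, by contrapositive of the Chen et al.\ result, g-HT) fail there. Your plan is more careful than the paper in spelling out the finite case analysis needed to rule out every auxiliary simple IV for $b$; the paper simply asserts that step.
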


\begin{proof}
Proved in the paragraph preceding theorem statement in paper
\end{proof}

\begin{restatable}{theorem}{overid}
\label{thm:overid}
Let $Z$ be a quasi-IV set for structural coefficients $\alpha=\{\alpha_1,...,\alpha_k\}$ and $E$ be a set of known edges. If there exists a node $s$ satisfying the conditions listed below, then $\alpha$ is overidentified.
\begin{enumerate}
\item $s\notin Z$
\item There exists an unblocked path between $s$ and $y$ including an edge in $\alpha$
\item There exists a conditioning set $W$ that does not block the path $p$, such that either:
  \begin{enumerate}
  \item the elements of $W$ are non-descendants of $y$, and $(s\indep y|W)_{G_{\alpha\cup E_y}-}$, where $E_y = E \cap Inc(y))$
  \item the elements of $W$ are non-descendants of $s$ and $y$, and $(s\indep y|W)_{G_{\alpha \cup E_{s}\cup E_y-}}$ where $E_s = E \cap Inc(s)$.
  \end{enumerate}
\end{enumerate}
\end{restatable}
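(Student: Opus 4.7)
The plan is to extend the linear-system view of identification used in the proof of Theorem \ref{thm:aqivid}. By that theorem, the quasi-IV set $Z$ produces an invertible linear system $A\mathbf{e} = \mathbf{b}$ whose unique solution is the coefficient vector $\mathbf{e} = (\alpha_1,\dots,\alpha_k)^{t} = A^{-1}\mathbf{b}$. I will use the node $s$ to manufacture one more equation $a_s\mathbf{e} = b_s$, whose entries are expressible in terms of $\Sigma$. Because the true $\mathbf{e}$ must satisfy both systems simultaneously, the overidentifying relation $a_s A^{-1}\mathbf{b} = b_s$ follows, matching the constraint emitted by Algorithm \ref{alg:overid}.

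To construct this extra equation, I would replay the Conditional Edge Lemma manipulation that appears inside the proof of Theorem \ref{thm:aqivid}, but now for $s$ in place of the $z_i$. Writing $y^{\star} = y - \sum_j \gamma_j t_j$ for the edges in $E_y = E\cap Inc(y)$ (with coefficients $\gamma_j$ and tails $t_j$), I would expand $\sigma_{sy^{\star}.W}$ using CEL \ref{thm:conditionalGraphParents} into a sum over $Pa(y)$. The $E_y$-contributions cancel against $\sum_j \gamma_j t_j$; the edges in $\alpha$ contribute $\sum_{j} \alpha_j\, \sigma_{sx_j.W}$ with $x_j = Ta(\alpha_j)$; and the remaining parents together with the error term $u_y$ form a residual $R$. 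Applying CEL \ref{thm:conditionalEdgeRemove} and CEL \ref{thm:conditionalErrorRemove} relocates $R$ into $G_{\alpha\cup E_y-}$ under case (iii.a) or into $G_{\alpha\cup E_s\cup E_y-}$ under case (iii.b); the no-descendant restrictions on $W$ in the hypotheses of the theorem are exactly the preconditions that these lemmas need. In the target subgraph $R$ reassembles into $\sigma_{sy.W}$, which hypothesis (iii) forces to vanish, leaving
\[
\sigma_{sy^{\star}.W} \;=\; \sum_{j=1}^{k} \alpha_j\, \sigma_{sx_j.W}.
\]
Reading off $a_s = (\sigma_{sx_1.W},\dots,\sigma_{sx_k.W})$ and $b_s = \sigma_{sy^{\star}.W}$ gives the desired extra row.

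It remains to argue that this equation genuinely constrains $\alpha$ rather than reducing to a tautology. Condition (i) rules out $s$ being a duplicate of some $z_i$, and condition (ii) supplies an unblocked path from $s$ to $y$ that crosses some edge $\alpha_j$; truncating this path at $x_j = Ta(\alpha_j)$ yields an $s$-to-$x_j$ path that is still not blocked by $W$, so $\sigma_{sx_j.W}$ is a non-zero rational function of the structural parameters. Hence $a_s\mathbf{e} = b_s$ is a genuine additional linear constraint on $\alpha$, and the stated relation $a_s A^{-1}\mathbf{b} = b_s$ must hold in every $\Sigma$ generated by the model.

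The main obstacle will be the graph bookkeeping in the cancellation step: verifying that the non-descendant hypotheses on $W$ in (iii.a) and (iii.b) match the preconditions of CEL \ref{thm:conditionalEdgeRemove} and CEL \ref{thm:conditionalErrorRemove}, particularly because $y^{\star}$ is itself an auxiliary variable when $E_y\neq\emptyset$, so covariances involving $y^{\star}$ must first be reduced to covariances in the augmented graph before the lemmas apply. Keeping case (b) separate from case (a) is also delicate, since case (b) additionally deletes $E_s$ and one must check that this deletion does not destroy the path used in the non-triviality argument.
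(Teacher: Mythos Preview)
Your proposal is correct and takes essentially the same approach as the paper: reuse the full-rank linear system $Ae=b$ produced in the proof of Theorem~\ref{thm:aqivid}, generate one additional equation $a_s e = b_s$ from $s$ via the same Conditional-Edge-Lemma manipulation, and output the overidentifying constraint $a_s A^{-1}b = b_s$. The paper's own proof is considerably terser---it simply says ``similarly'' for the construction of the $s$-equation and does not explicitly argue non-triviality---so your expanded CEL derivation and your use of condition~(ii) to secure $a_s\neq 0$ go slightly beyond what the paper spells out, but the underlying route is identical.
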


\begin{proof}
In the proof of theorem \ref{thm:aqivid}, we generated a full-rank set of linear equations, where each equation had the form:
$$
\sigma_{z_iy^*.W_i} = \sum_j e_j\sigma_{z_ix_j.W_i}
$$
We can generate a set of linear equations of the form $Ae=b$, using the above.

Similarly, we can use the parameter $s$ to generate another single equation in the given form: $a_s e = b_s$. 
Now, if $Z_E$ is a full auxiliary set, then $A$ is invertible, so we get $e=A^{-1}b$, giving us the overidentifying constraint $a_sA^{-1}b = b_s$.
\end{proof}

\begin{theorem}
\label{thm:aCI-subsumed}
Let $z^*=z-e_1 t_1 - ... - e_k t_k$ and suppose there does not exist $W$ such that $(z\indep y|W)_G$. There exists $W$ such that $W\cap De(z) = \emptyset$ and $(z^*\indep y|W)$ is non-vacuous if and only if $y$ satisfies the conditions of Theorem \ref{thm:overid} for $E=\{e_1 ,... , e_k\}$.
\end{theorem}
 \begin{proof}
	($\implies$)
	First, we show that $y$ satisfies $(ii)$ and $(iii)$ of Theorem \ref{thm:overid}. Since $z\not\indep y|W$ but $z^* \indep y |W$ there must exist a path from $y$ to $z$ that goes through $E$ and $(ii)$ is satisfied. Next, $(z^* \indep y|W)$ implies that $(z\indep y|W)_{G_E-}$ so $(iii)$ is satisfied.
	
	Next, we show that there exists $T=\{t_1 , ..., t_k\}$, $y\notin T$, such that $T$ is an quasi-IV set for $E$ so that (i) is satisfied. Since $(z^*\indep y|W)$ is not vacuous, $E$ is identified in $G^{'}$, the graph where a directed edge from $y$ to $z$, called $e_{yz}$, is added. As a result, there exists $T$ such that $y\notin T$ and $T \cup \{y\}$ is a quasi-IV set for $E\cup \{e_{yz}\}$. It follows that $T$ is a quasi-IV set for $E$. 
	
	($\impliedby$)
	Let $T$ be the quasi-IV set for $E$ that does not include $y$. (iii) implies that there exists $W$ such that $(y\indep z|W)_{G_E-}$, and, since $E$ is identifiable using $T$, $(z^* \indep y|W)$. Finally, this independence cannot be vacuous since $T\cup \{y\}$ is a quasi-IV set for $E\cup \{e_{yz}\}$ in $G^{'}$. 
\end{proof}
    
\subsection{Identification and z-Identification Algorithm}

Two algorithms are given for finding Quasi-Instrumental Sets. The first version does not consider IVs that are conditioned
on descendants of $z$, whereas the second version is more computationally expensive (still polynomial if $k$ is bounded), but is able to find any quasi-instrumental set
if such exists.

In $FindQIS$, we make extensive use of $TestQIS$, which is a modification of $TestGeneralIVs(G,X,Y,Z)$ from \citet{zander:etal16}. Our version has 2 extra arguments,
and replaces the first 4 lines of $TestGeneralIVs$ such that we can search for both auxiliary instruments ($Aux=1$) and standard instrumental variables ($Aux=0$).

\begin{algorithm}[H]
\caption{Modified version of $TestGeneralIVs$ from \citet{zander:etal16} for use with $findAuxIS$ }
\begin{algorithmic}
\Function{TestQIS}{G,X,Y,Z,IDEdges,Aux}
  \For{$i$ in $1,...,|Z|$}
    \If{$Aux_i == 1$}
      \State $W_i \gets $ a nearest separator for $(Y,Z_i)$ 
      \State \ \ \ \ \ in $G_{E\cup E_{z_i}\cup E_y}$, where $E_{z_i}$ 
      \State \ \ \ \ \ is $IDEdges \cap Inc(z_i)$
      \If{$W_i = \bot \vee (W_i\cap De(Y)) \ne \emptyset 
      \vee (W_i\cap De(z_i)) \ne \emptyset$}
        \State \Return $\bot$
      \EndIf
    \Else
      \State $W_i \gets $ a nearest separator for $(Y,Z_i)$ in $G_{E\cup E_y}$
      \If{$W_i = \bot \vee (W_i\cap De(Y)) \ne \emptyset$}
        \State \Return $\bot$
      \EndIf
    \EndIf
  \EndFor
  \State continue algorithm $TestGeneralIVs$ starting 
  \State from second for loop. 
  \State Instead of returning $False$, return $\bot$, 
  \State and instead of returning $True$, return $W$.
\EndFunction
\end{algorithmic}
\end{algorithm}

\begin{algorithm}[H]
\caption{Finds a quasi-instrumental set (without conditioning on descendants in IVs)}
\begin{algorithmic}
\Function{FindQISBasic}{E,G,IDEdges}
  \ForAll{$Z\subset V\setminus \{y\}$ of size $|E|$}
    \State $W \gets $\Call{TestQIS}{G,$Ta(E)$,$Head(E)$,$Z$,1}
    \If{$W\ne \bot$}
    \State \Return $(Z,W)$
    \EndIf
  \EndFor
  \State \Return $\bot$
\EndFunction
\end{algorithmic}
\end{algorithm}

\begin{algorithm}[H]
\caption{Finds a quasi-instrumental set for $E$ in $G$, given a set IDEdges of identified edges.}
\label{alg:findqis}
\begin{algorithmic}
\Function{FindQIS}{E,G,IDEdges}
\ForAll{$Z\subset V\setminus \{y\}$ of size $|E|$}
    \ForAll{Aux $\in \{0,1\}^{|E|}$}
      \State $W \gets $\Call{TestQIS}{$G$,$Ta(E)$,$Head(E)$,$Z$,Aux}
      \If{$W\ne \bot$}
      \State \Return $(Z,W)$
      \EndIf
    \EndFor
  \EndFor
  \State \Return $\bot$
\EndFunction
\end{algorithmic}
\end{algorithm}

\begin{comment}
\begin{algorithm}[H]
\caption{aID($G,\Sigma, \mathrm{IDEdges}$)}
\label{alg:ID}
\begin{algorithmic}
\State {\bfseries Initialize:} $\mathrm{EdgeSets}\leftarrow$ all connected edge sets in $G$
\Repeat
	\ForAll{$ES$ in $\mathrm{EdgeSets}$ such that \\ \indent \indent$ES\not\subseteq \mathrm{IDEdges}$}
		\State $y\leftarrow He(ES)$
			\ForAll{$E\subseteq ES$ such that $E\not\subseteq \mathrm{IDEdges}$}
			\State $(Z,W)\leftarrow \mathrm{findAuxIS}(G,ES,\mathrm{IDEdges})$
				\If{$(Z, W)\neq \perp$}					
					\State Identify $ES$ using $Z^*$ as an auxiliary \\ \indent\indent\indent\indent instrumental set in $G^{(\mathrm{IDEdges}\cap Inc(Z))+}$
					\State $\mathrm{IDEdges}\leftarrow \mathrm{IDEdges} \cup ES$
				\EndIf
			\EndFor
	\EndFor
\Until{All coefficients have been identified or no coefficients have been identified in the last iteration}\\
\end{algorithmic}
\end{algorithm}
\end{comment}

\begin{algorithm}[H]
\label{alg:eiv}
\caption{Tests whether $w$ fits the conditions of theorem \ref{thm:aCI-subsumed}}
\begin{algorithmic}
\Function{IsEIV}{w,E,G,IDEdges}
	\State Let $G'$ be the graph $G$ modified such that $E$ are 
  	\State removed, and each node in $Tail(E)$ has an edge 
    \State added to a newly created node $n$, which has an 
    \State edge to $Head(E)$
	\ForAll{Aux $\in \{0,1\}$}
      \State $W \gets $\Call{TestQIS}{$G'$,$\{n\}$,$Head(E)$,$\{w\}$,Aux}
      \If{$W\ne \bot$}
      \State \Return $(Z,W)$
      \EndIf
    \EndFor
    \State \Return $\bot$
\EndFunction
\end{algorithmic}
\end{algorithm}

$$
$$

The function $IsEIV$, is a slight modification of $FindQIS$ that makes the subset a full auxiliary set in a graph modified so that the full set of $E$ has directed edges to a single node, instead of $y$, so that this node can be a new set $E'$ of size 1.

\label{sec:ivav}

\section{Proof of Supplemental Theorem \ref{thm:auxid}}
\label{sec:supptheorem}

We build upon the proof given in \citet{brito:pea02a} to show that auxiliary instrumental sets are identifiable.

\subsection{Generalized Instrumental Sets}

We will use the definition of generalized instrumental set directly from \citet{brito:pea02a}'s paper.

\begin{mydef}
\label{def:iv}
The set $Z$ is said to be an instrumental set relative to $X$ and $Y$ if we can find triples $(Z_1,W_1,p_1),...,(Z_n,W_n,p_n)$ such that for $i=1,...,n$
\begin{enumerate}
\item $Z_i$ and the elements of $W_i$ are non-descendants of $Y$; and $p_i$ is an unblocked path between $Z_i$ and $Y$ including edge $X_i\rightarrow Y$
\item Let $\bar G$ be the causal graph obtained from $G$ be deleting edges $X_1\rightarrow Y$, $X_n\rightarrow Y$. Then $W_i$ d-separates $Z_i$ from $Y$ in $\bar G$, but $W_i$ does
not block path $p_i$
\item For $1 \le i \le j \le n$, $Z_j$ does not appear in path $p_i$, and, if paths $p_i$ and $p_j$ have a common variable $V$, then both $p_i[V\sim Y]$ and $p_j[Z_j\sim V]$ point to $V$.
\end{enumerate}
\end{mydef}

The third property is written here in the same way it is written in \citet{brito:pea02a}. We used 
\textit{$p_i$ and $p_j$ do not have any sided intersection} instead. 
The two methods for writing the property are equivalent, 
meaning that there exists a set satisfying the \citet{brito:pea02a} definition iff there exists a set satisfying our definition 
(note that the two sets might be different). This is proved in Appendix \ref{sec:equivIV} of this document.

\subsection{Auxiliary Instrumental Sets}

We perform an equivalent translation to the definition of Auxiliary Instrumental Set:

\begin{mydef}
\label{def:aiv}
Given a linear SEM with graph G, a set $E_Z$ of known coefficients, 
and a set of structural coefficients $\alpha = \{\alpha_1,\alpha_2,...,\alpha_k\}$, 
the set $Z=\{z_1,...,z_k\}$ generates an auxiliary instrumental set if there exist triples 
$(z_1,W_1,p_1),...,(z_j,W_j,p_k)$ such that:

\begin{enumerate}
\item For $i=1,...,k$, either:
\begin{enumerate}
\item the elements of $W_i$ are non-descendants of $y$, and $(z_i\indep y|W_i)_{G_{E}}$ where $G_{E}$ is the graph obtained by deleting the edges $E$ from $G$.
\item the elements of $W_i$ are non-descendants of $z_i$ and $y$, and $(z_i\indep y|W_i)_{G_{E\cup E_{z_i}}}$ where $G_{E\cup E_{z_i}}$ is the graph obtained by deleting the edges $E$, $E_Z\cap (Inc(z_i))$ from $G$.
\end{enumerate}
\item for $i=1,...,k$, $p_i$ is an unblocked path between $z_i$ and $y$, not blocked by $W_i$, including the edge $(x_i,y)$
\item For $1 \le i \le j \le n$, $Z_j,Z'_j$ does not appear in path $p_i$, and, if paths $p_i$ and $p_j$ have a common variable $V$, then both $p_i[V\sim Y]$ and $p_j[Z_j\sim V]$ point to $V$.
\end{enumerate}
\end{mydef}

\subsection{Auxiliary Sets generate Generalized Instrumental Sets}

\begin{lemma}
\label{thm:auxis}
If there exists an auxiliary instrumental set for structural coefficients $\{\alpha_1,\alpha_2,...,\alpha_k\}$, then there exists a generalized instrumental set for the coefficients
in $G^{E+}$.
\end{lemma}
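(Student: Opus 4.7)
The plan is to lift the given auxiliary instrumental set in $G$ to a generalized instrumental set in the augmented graph $G^{E+}$, replacing each generator by its auxiliary counterpart precisely when the original witness requires it. Given witnessing triples $(z_1,W_1,p_1),\dots,(z_k,W_k,p_k)$ for the auxiliary IV set identifying $\alpha$, I would build triples $(z_i',W_i,p_i')$ in $G^{E+}$ as follows: when clause (1a) of Definition~\ref{def:aiv} is used for index $i$, keep $z_i' = z_i$ and $p_i' = p_i$; when clause (1b) is used, set $z_i' = z_i^*$ and extend the path by prepending the edge $z_i \to z_i^*$ to obtain $p_i' = (z_i^* \leftarrow z_i)\cdot p_i$, leaving $W_i$ unchanged in both cases.

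The main work is verifying the three defining properties of a generalized instrumental set (Definition~\ref{def:iv}) for these new triples in $G^{E+}$. For the separation requirement in case (1b), the key step is Theorem~\ref{thm:theoremOne}: applied inside $G_{\alpha-}$, and using that $W_i \cup \{y\}$ contains no descendants of $z_i$, it yields
\[
(z_i^* \indep y \,|\, W_i)_{(G_{\alpha-})^{E+}} \iff (z_i \indep y \,|\, W_i)_{G_{\alpha- \cup E_{z_i}-}},
\]
whose right-hand side is exactly the hypothesis of (1b), and whose left-hand side equals $(z_i^* \indep y \,|\, W_i)_{(G^{E+})_{\alpha-}}$ because removing $\alpha$-edges commutes with the augmentation (the edges of $\alpha$ are disjoint from those defining the $z_j^*$). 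In case (1a) no auxiliary is needed at all: the $z_j^*$ vertices added in forming $G^{E+}$ are sinks, so they cannot open any new d-connecting path into $y$, and the original separation $(z_i \indep y \,|\, W_i)_{G_{\alpha-}}$ transfers directly to $G^{E+}_{\alpha-}$.

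The remaining two properties are largely syntactic. Each $p_i'$ still contains the edge $(x_i,y)$ and is unblocked given $W_i$: in case (1b), the prepended edge $z_i \to z_i^*$ makes $z_i$ a non-collider on $p_i'$, and $z_i \notin W_i$ (since $W_i$ excludes descendants of $z_i$ and, by the standard convention on generators, $z_i$ itself). The no-sided-intersection property is inherited from the original set because each $z_i^*$ is a fresh node occurring only in $p_i'$, and the only node added to Left$(p_i')$ relative to Left$(p_i)$ is $z_i^*$ itself, so pairwise disjointness of Left- and Right-sets across distinct paths carries over. The main obstacle I anticipate is not a single hard step but the careful bookkeeping of which edges are removed in which graph, together with pinning down the commutation of augmentation and $\alpha$-edge removal; once that commutation is established, the argument reduces to an application of Theorem~\ref{thm:theoremOne} plus routine checks of the path conditions.
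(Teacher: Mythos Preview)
Your proposal is correct and follows essentially the same approach as the paper's own proof: both construct the new triples by keeping $z_i$ in case (1a) and replacing it by $z_i^*$ (prepending the edge $z_i^*\leftarrow z_i$) in case (1b), invoke Theorem~\ref{thm:theoremOne} for the separation in case (1b), use the fact that each $z_j^*$ is a sink/collider to handle case (1a), and then observe that the no-sided-intersection condition is preserved because the only new node on any path is the fresh $z_i^*$. Your treatment is in fact slightly more careful than the paper's in making explicit the commutation $(G_{\alpha-})^{E+}=(G^{E+})_{\alpha-}$, which the paper leaves implicit.
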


\begin{proof}
We will denote conditions 1 through 3  of Supplemental Definition \ref{def:aiv}
as AIV 1-3, respectively. We will denote the conditions of Definition \ref{def:iv} as GIV 1-3. This proof will proceed by showing that we can generate a generalized instrumental set in $G^{E+}$ using the auxiliary set.

We have defined $G^{E+}$ as the graph where all possible auxiliary variables have been added.  For each $z_i$ in $Z$:
\begin{enumerate}
\item if $z_i$ satisfies AIV 1a, then $(z_i \indep y|W)_{G^{E+}}$, because the added node $z_i^*$ is a collider for any possible paths going through it. If $z_i$ satisfies AIV 1b, then $(z^*_i\indep y |W)_{G^{E+}}$ using Theorem \ref{thm:theoremOne}. Therefore, GIV 1 is satisfied.
\item If AIV 2 is satisfied, then GIV 2 follows directly if AIV 1a is satisfied. If AIV 1b is satisfied, we can extend the path from AIV 2 with the edge $z_i^* \leftarrow z_i$. Since $z^*_i$ is unblocked, this new path will satisfy GIV 2.
\item If AIV 3 is satisfied, then the paths ($p_i$) constructed in the previous part will not have sided intersection We might have added the edge $z_i^* \leftarrow z_i$ which makes $z_i$ in $Left(p_i)$, but the original $z_i$ was in $Left(p_i)$ already by the definition of $Left$. Furthermore, $z^*_i$ is a collider, so it could not be part of any other variable's path. This means GIV 3 is satisfied.
\end{enumerate}

Since all of the conditions necessary for definition \ref{def:iv} are satisfied,
we have constructed a generalized instrumental set for $G^{E+}$.
\end{proof}

\subsection{Identifiability of Generalized IVs does NOT imply ID of Aux IVs}

In generalized IVs, it is assumed that all edges in the graph have independent structural parameters. 
When using auxiliary variables, the edges incoming to the auxiliary variable are repeating the structural
parameters found elsewhere in the graph. This invalidates the assumption of independence implicit in Definition \ref{def:iv}.

Furthermore, it turns out that in proving the identifiability of coefficients
from a generalized instrumental set, \citet{brito:pea02a} generated another instrumental set, with a special property. 
They argued that this new set still satisfied the conditions of Definition \ref{def:iv}. With auxiliary variables, it is not clear that it is possible to modify the auxiliary set, since the independence
properties of the variables are different, since $Z^*$ has coefficients cancel only after subtracting the auxiliary paths.

We will show that \citet{brito:pea02a}'s proof can be modified to show identifiability in auxiliary instrumental sets.

\subsubsection{Preliminaries}

First, we will quickly review the relevant portions of the proof of generalized IVs.

\begin{lemma} (Partial Correlation Lemma, \citet{brito:pea02a})
The partial correlation $\rho_{12.3...n}$ can be expressed as the ratio:

$$
\rho_{12.3...n} = \frac{\phi(1,2,...,n)}{\psi(1,3,...,n)\psi(2,3,...,n)}
$$

where $\phi$ and $\psi$ are functions satisfying the following conditions:
\begin{enumerate}
\item $\phi(1,2,...,n) = \phi(2,1,...,n)$
\item $\phi(1,2,...,n)$ is linear on correlations $\rho_{12},\rho_{32},...\rho_{n2}$ with no constant term
\item The coefficients of $\rho_{12},\rho_{32},...\rho_{n2}$ in $\phi(1,2,...,n)$ are polynomials on the correlations among $Z$, $W_i,...$. Furthermore, the coefficient of $\rho_{12}$
has its constant term $=1$, and the coefficients of $\rho_{32},...,\rho_{n2}$ are linear on the correlations $\rho_{13},\rho_{14},...,\rho_{1n}$ with no constant term
\item $(\psi(i_1,...,i_{n-1}))^2$ is a polynomial on the correlations among variables $Y_{i_1},...,Y_{i_{n-1}}$ with constant term $=1$.
\end{enumerate}

\end{lemma}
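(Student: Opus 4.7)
The plan is to prove the Partial Correlation Lemma by strong induction on the size $n$ of the conditioning set (more precisely, on the number of variables appearing in the partial correlation, so we induct on $n$ starting from $n=2$). The inductive machinery is the classical recursive formula for partial correlation:
\begin{equation*}
\rho_{12.3\dots n} \;=\; \frac{\rho_{12.3\dots(n-1)} \;-\; \rho_{1n.3\dots(n-1)}\,\rho_{2n.3\dots(n-1)}}{\sqrt{\bigl(1-\rho_{1n.3\dots(n-1)}^2\bigr)\bigl(1-\rho_{2n.3\dots(n-1)}^2\bigr)}}.
\end{equation*}
For the base case $n=2$, set $\phi(1,2)=\rho_{12}$ and $\psi(1)=\psi(2)=1$; all four properties hold trivially.

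For the inductive step, apply the inductive hypothesis to each of the three partial correlations on the right, writing them as $\rho_{12.3\dots(n-1)} = \phi_{12}/(\psi_1\psi_2)$, $\rho_{1n.3\dots(n-1)} = \phi_{1n}/(\psi_1\psi_n)$, and $\rho_{2n.3\dots(n-1)} = \phi_{2n}/(\psi_2\psi_n)$, where the $\psi$'s depend on the appropriate subsets of $\{3,\dots,n-1\}$ together with the indicated extra variable. Substituting and clearing denominators, the numerator becomes
\begin{equation*}
\phi(1,2,\dots,n) \;=\; \phi_{12}\,\psi_n^2 \;-\; \phi_{1n}\,\phi_{2n},
\end{equation*}
and the denominator becomes $\psi(1,3,\dots,n)\,\psi(2,3,\dots,n)$, where one defines $\psi(i,3,\dots,n)^2 := \psi_i^2\psi_n^2 - \phi_{in}^2$ for $i\in\{1,2\}$ (this is exactly $(\psi_i\psi_n)^2(1-\rho_{in.3\dots(n-1)}^2)$). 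After this definition, properties (1)--(4) can be checked directly: symmetry in $1,2$ is visible from the form of $\phi$; linearity of $\phi$ in $\rho_{j2}$ follows because $\phi_{12}$ and $\phi_{2n}$ are linear in such correlations by IH while $\psi_n^2$ and $\phi_{1n}$ contain none of them; the ``constant term $=1$'' claim for the coefficient of $\rho_{12}$ in $\phi$ follows because in $\phi_{12}$ that coefficient has constant term $1$ by IH and $\psi_n^2$ has constant term $1$; and the linearity of the other coefficients in $\rho_{13},\dots,\rho_{1n}$ comes from expanding $\phi_{1n}\phi_{2n}$ and noting that $\phi_{1n}$ is linear in $\rho_{1j}$'s with no constant term while $\phi_{2n}$'s coefficients of $\rho_{j2}$ have no $\rho_{1k}$ dependence in their constant terms. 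Property (4) follows because both $\psi_i^2\psi_n^2$ and $\phi_{in}^2$ are polynomials in correlations among $\{Y_i,Y_3,\dots,Y_n\}$ with the former having constant term $1$ and the latter $0$.

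The main obstacle I anticipate is the bookkeeping in the inductive step: one has to track which correlations appear in which factor and verify simultaneously that (i) $\phi$'s linearity structure is preserved, (ii) the ``coefficient-of-$\rho_{12}$'' identity carries a constant term of exactly $1$, and (iii) the sub-coefficients of the remaining $\rho_{j2}$'s are linear in $\rho_{1k}$'s with no constant term. This is not deep but requires a careful case split on which indices a given correlation in the expansion of $\phi_{12}\psi_n^2 - \phi_{1n}\phi_{2n}$ belongs to; once this is laid out, the verification is routine. The symmetry property (1) is the easiest, following from the symmetric roles of $1$ and $2$ in the recursive formula together with $\rho_{ij}=\rho_{ji}$, and property (4) is an immediate consequence of the definition of $\psi$ and the inductive hypothesis.
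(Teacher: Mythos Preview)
Your inductive argument via the standard recursion
\[
\rho_{12.3\dots n}=\frac{\rho_{12.3\dots(n-1)}-\rho_{1n.3\dots(n-1)}\rho_{2n.3\dots(n-1)}}{\sqrt{(1-\rho_{1n.3\dots(n-1)}^2)(1-\rho_{2n.3\dots(n-1)}^2)}}
\]
is correct, and the identifications $\phi(1,2,\dots,n)=\phi_{12}\psi_n^2-\phi_{1n}\phi_{2n}$ and $\psi(i,3,\dots,n)^2=\psi_i^2\psi_n^2-\phi_{in}^2$ are exactly the right ones; the verification of properties (1)--(4) that you sketch goes through as you describe. One small point worth making explicit in the write-up: when checking property~(3) for the coefficient of $\rho_{n2}$ (which arises only from the $\phi_{1n}\phi_{2n}$ term), you use that the coefficient of $\rho_{n2}$ in $\phi_{2n}=\phi(n,2,3,\dots,n-1)$ is a polynomial in correlations among $\{Y_n,Y_3,\dots,Y_{n-1}\}$ and hence contains no $\rho_{1k}$, so that multiplying by $\phi_{1n}$ (linear in the $\rho_{1k}$ with no constant term) preserves the required structure. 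You allude to this but it is the one place where the bookkeeping is easiest to get wrong.

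As for comparison with the paper: the paper does not supply its own proof of this lemma. It is quoted verbatim from \citet{brito:pea02a} and used as a black box in the proof of Supplemental Theorem~1. Your induction on the recursive partial-correlation formula is the natural (and, as far as I know, the only) route to this statement, and it is the argument one finds in the original source. So there is no divergence in approach to report; you have reconstructed the intended proof.
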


With this lemma in hand, we will outline how \citet{brito:pea02a} showed that IVs are identifiable by restating the lemmas, and giving 2 sentence descriptions of how they were proved.

\begin{lemma}
\label{lemma:nosharing}
(Lemma 2, \citet{brito:pea02a}) WLOG, we may assume that for $1\le i < j \le n$, paths $p_i$ and $p_j$ do not have any common variable other than (possibly) $Z_i$.
\end{lemma}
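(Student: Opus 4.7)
The plan is to argue by extremality. Consider the class of all witness collections $(Z_1,W_1,p_1),\dots,(Z_n,W_n,p_n)$ satisfying Definition~\ref{def:iv} for the given $Z$ and $X$. Among these, choose one that minimizes the total number of pairwise-shared vertices (summed over all pairs $i<j$), with ties broken by the total edge count $\sum_i|p_i|$. The claim is that any such minimizer automatically has the stated property: for every $i<j$, the paths $p_i$ and $p_j$ share no vertex other than possibly $Z_i$.

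Suppose for contradiction that in a minimizer some pair $p_i,p_j$ (with $i<j$) shares a vertex $V\neq Z_i$; without loss of generality take $V$ to be the shared vertex closest to $Z_i$ along $p_i$, so that the prefix $p_i[Z_i\sim V]$ has no vertex on $p_j$ besides $V$. By condition~3 of Definition~\ref{def:iv}, both sub-paths $p_i[V\sim Y]$ and $p_j[Z_j\sim V]$ point to $V$, so $V$ is a non-collider on each and serves as a ``fork'' at which edges leave on both sides. This lets us splice the paths at $V$:
\[
\pi'=p_j[Z_j\sim V]\,\Vert\,p_i[V\sim Y],\qquad \pi''=p_i[Z_i\sim V]\,\Vert\,p_j[V\sim Y],
\]
where $\Vert$ denotes concatenation at $V$. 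The path $\pi'$ ends with the edge $X_i\to Y$ and $\pi''$ with $X_j\to Y$, so after relabeling — pairing $Z_j$ with $\pi'$ and $Z_i$ with $\pi''$, keeping the original conditioning sets with their $Z$'s — we obtain a new witness collection. Unblockedness of each spliced path follows because its two constituent sub-paths were unblocked and $V$ is a non-collider; conditions~1 and~2 depend only on the pair $(Z_k,W_k)$ and not on the chosen path, so they are preserved. By the choice of $V$, the pair $(\pi'',p_j)$ no longer contains $V$ as a shared vertex, giving a strict decrease in the objective and contradicting minimality.

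The main obstacle is the bookkeeping for the ``other'' shared vertices and the preservation of condition~3 after splicing. One must verify that the exchange introduces no new overlap between $\pi',\pi''$ and any third path $p_k$ and does not violate the ``points to $V'$'' requirement at any remaining shared vertex $V'$. For the former, any vertex shared by $\pi'$ (or $\pi''$) with some $p_k$ must already have lain on $p_i$ or on $p_j$, hence was already counted in the original total; for the latter, orientations of edges incident to any vertex are unchanged under a segment exchange, so the ``points to'' property propagates verbatim. Once these routine verifications are carried out, the strict drop at $V$ yields the contradiction, establishing that the extremal witness has the disjointness property and proving the lemma.
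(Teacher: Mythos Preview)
Your splicing argument has a genuine gap at two independent points.

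First, the central decrease claim is false. After the swap, both spliced paths $\pi'=p_j[Z_j\sim V]\,\Vert\,p_i[V\sim Y]$ and $\pi''=p_i[Z_i\sim V]\,\Vert\,p_j[V\sim Y]$ still pass through $V$, so the pair $(\pi',\pi'')$ shares $V$ exactly as $(p_i,p_j)$ did. Splicing at a common vertex redistributes segments but never removes that vertex from either path, so your objective does not drop. (Your sentence compares $\pi''$ with $p_j$, but $p_j$ is no longer in the collection; the relevant pair is $(\pi',\pi'')$.) Worse, if $p_i$ and $p_j$ share a second vertex $U$ lying on opposite sides of $V$, the concatenation places $U$ twice on the same spliced walk, so $\pi'$ or $\pi''$ fails even to be a path.

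Second, your assertion that ``conditions 1 and 2 depend only on the pair $(Z_k,W_k)$'' is incorrect: condition~2 of Definition~\ref{def:iv} explicitly requires that $W_i$ not block $p_i$. After the swap, the triple with instrument $Z_j$ carries conditioning set $W_j$ but path $\pi'$, whose second half $p_i[V\sim Y]$ was never guaranteed to be open given $W_j$. Likewise $W_i$ may block $p_j[V\sim Y]$ inside $\pi''$. So the new triples need not satisfy the definition at all.

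The paper's (and Brito--Pearl's) argument avoids both problems by a different move: instead of splicing, it \emph{replaces the instrument}. Taking $V$ to be the shared vertex closest to $X_i$ on $p_i$, condition~3 forces $p_i[V\sim Y]$ to point to $V$, hence $p_i[Z_i\sim V]$ is a directed path from $V$ down to $Z_i$. One then substitutes the triple $(Z_i,W_i,p_i)$ by $(V,W_i,p_i[V\sim Y])$. The d-separation in condition~2 transfers because any active path from $V$ to $Y$ given $W_i$ could be prolonged along the directed segment to give an active path from $Z_i$; the non-blocking clause holds since $p_i[V\sim Y]$ is a subpath of $p_i$; and the new path is strictly shorter, so iterating terminates. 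This is the key idea your proposal is missing.
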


\begin{proof} (Outline)
Suppose not. That is, suppose that paths $p_i$ and $p_j$ have a variable in common other than $Z_i$. Call this variable $V$. We can now generate a new instrumental set using $V$ instead
of $Z_i$. That is, if there exists a common variable, we can generate a new instrumental set, where this variable is $Z_i$. This new instrumental set conforms to the definition \ref{def:iv}.
This is proved by showing that since $Z_i$ is independent of $Y$ given $W_i$, $V$ must also be independent, since there is a directed, unblocked, path from $V$ to $Z_i$.
\end{proof}

\begin{lemma}
\label{lemma:nounblocked}
For all $1\le i \le n$, there exists no unblocked path between $Z_i$ and $Y$, different from $p_i$, which includes edge $X_i\rightarrow Y$, and is composed only of edges from $p_1,...,p_i$.
\end{lemma}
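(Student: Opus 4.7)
I will argue by contradiction, combining Lemma 2 (paths share at most a single $Z$-type vertex) with condition~(3) of Brito--Pearl's Definition~\ref{def:iv} (shared vertices induce a specific collider-like orientation on the sub-paths).

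Suppose such an unblocked path $q$ from $Z_i$ to $Y$, distinct from $p_i$, contains the edge $X_i\to Y$ and uses only edges from $p_1,\dots,p_i$. By Lemma~2, for every $l<i$ the paths $p_l$ and $p_i$ share at most the vertex $Z_l$, and therefore no edges (two distinct simple paths that share only a single endpoint cannot share an edge); in particular each edge of $q$ is uniquely attributable to some $p_l$. Since $Z_i$ does not occur on $p_l$ for any $l<i$ (otherwise $Z_l=Z_i$, impossible), the first edge of $q$ at $Z_i$ lies on $p_i$. Because $q\neq p_i$, there is a first vertex $V$ (traversed from $Z_i$) at which $q$'s next edge departs $p_i$ for an edge of some $p_l$ with $l<i$; by Lemma~2 this forces $V=Z_l$.

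Next, apply condition~(3) to the shared pair $(p_l,p_i)$ at $V=Z_l$: both $p_l[V\sim Y]$ and $p_i[Z_i\sim V]$ point into $V$. The two edges of $q$ meeting at $V$ are precisely the terminal edge of $p_i[Z_i\sim V]$ (used by $q$ during its initial $p_i$-segment) and the initial edge of $p_l[V\sim Y]$ (the only way for $q$ to proceed toward $Y$ and eventually hit $X_i\to Y$ using edges drawn from $p_1,\dots,p_i$). Both edges are directed into $V$, so $V$ is a collider on $q$. For $q$ to be unblocked given $W_i$, either $V=Z_l$ or a descendant of $V$ must lie in $W_i$. But then concatenating the initial $W_i$-active sub-path $p_i[Z_i\sim V]$ with $p_l[V\sim Y]$, now activated at $V$, produces a path from $Z_i$ to $Y$ in $\bar G$ that is active given $W_i$, contradicting condition~(2) of Definition~\ref{def:iv}, namely $(Z_i\indep Y\mid W_i)_{\bar G}$.

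\textbf{Main obstacle.} The delicate step is justifying that the spliced path is genuinely $W_i$-active throughout, not merely at $V$: $p_l$ is guaranteed to be active only given its own conditioning set $W_l$, not $W_i$. Choosing $V$ as the first switching vertex along $q$ keeps the $p_i$-prefix unproblematic, but additional colliders encountered along $p_l[V\sim Y]$ that might require activation by $W_i$ need separate treatment. I would follow Brito--Pearl's inductive organization over $i$, trimming further switching vertices if they arise and invoking Lemma~2 at each such step, to complete the verification of $W_i$-activeness and close the contradiction.
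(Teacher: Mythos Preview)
Your core argument is essentially the paper's argument: pick the first vertex $V$ at which $q$ leaves $p_i$, use Lemma~2 to force $V=Z_l$ for some $l<i$, and use condition~(3) to conclude both edges of $q$ at $V$ point into $V$, so $V$ is a collider on $q$. Up to that point you and the paper coincide.

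The divergence is in what happens next, and it stems from a misreading of the word ``unblocked'' in the lemma. In this family of lemmas (and in Brito--Pearl's original argument) ``unblocked'' means unblocked given the \emph{empty} conditioning set, i.e.\ the path is a trek in the Wright-rule sense; these lemmas are used to control which monomials survive in the Wright expansion of the relevant covariances, not to reason about $W_i$-active paths directly. Under that reading, the moment $V$ is shown to be a collider on $q$, the path $q$ is blocked and the contradiction is complete. The paper's proof ends precisely there.

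Your ``Main obstacle''---whether the spliced path $p_i[Z_i\sim V]\cup p_l[V\sim Y]$ is $W_i$-active---is therefore a phantom difficulty created by reading the lemma as ``unblocked given $W_i$''. You do not need to invoke condition~(2) of Definition~\ref{def:iv}, you do not need $W_i$ to activate the collider, and you do not need any inductive trimming along $p_l$. Drop everything after ``Both edges are directed into $V$, so $V$ is a collider on $q$,'' replace it with ``hence $q$ is blocked, a contradiction,'' and the proof is done and matches the paper.
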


\begin{proof} (Outline)
By contradiction - suppose such a path exists, then since it is different from $p_i$, it must contain edges from $p_1,...,p_{i-1}$. But all such paths that intersect with $p_1$ will do so at a collider.
\end{proof}

\begin{lemma}
For all $1\le i \le n$, there exists no unblocked path between $Z_i$ and some $W_i$, composed only of edges from $p_1,...,p_i$.
\end{lemma}

\begin{lemma}
For all $1\le i \le n$, there exists no unblocked path between $Z_i$ and $Y$, including edge $X_j \rightarrow Y$, with $j< i$, composed only of edges from $p_1,...,p_i$.
\end{lemma}

These two lemmas use the same proof method as lemma \ref{lemma:nounblocked}, and the proofs are omitted. Using these 3 lemmas, \citet{brito:pea02a} proved that the determinant of the linear system
is a non-trivial polynomial, whose zeros have lebesgue measure zero.

\subsubsection{Proof Modification for Auxiliary Variables}

The above lemmas are the only thing which needs to be modified to work with Auxiliary Variables. Lemma \ref{lemma:nosharing} needs to be modified to take into account the fact that Auxiliary Variables
have different independence properties, whereas lemma \ref{lemma:nounblocked} and its siblings need to take into account that edges are repeated in our graph.

\begin{lemma}
\label{lemma:auxnosharing}
WLOG, we may assume that for $1\le i < j \le n$, paths $p_i$ and $p_j$ do not have any common variable other than (possibly) $Z_i$ or $Z'_i$ (parent of $Z_i$ if it is an auxiliary variable).
\end{lemma}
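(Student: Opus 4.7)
The plan is to mirror Brito and Pearl's proof of Lemma \ref{lemma:nosharing}, adapted for auxiliary variables. Suppose, toward contradiction, that in an auxiliary instrumental set with the minimum number of disallowed shared variables, some pair $p_i, p_j$ still shares a variable $V$ with $V \ne Z_i$ and $V \ne Z'_i$. I will construct a new auxiliary instrumental set in which $V$ (or an auxiliary variant $V^*$) replaces $z_i$, strictly reducing the count of disallowed shared variables, a contradiction.

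I first establish graphically that $p_i[V \sim Z_i]$ is a directed path $V \to U_1 \to \cdots \to Z_i$. The no-sided-intersection condition AIV 3 applied to the shared $V$ forces the $Y$-side edge of $p_i$ at $V$ to enter $V$; since $V \notin W_i$ (else $V$ would block $p_i$, contradicting AIV 2), the non-collider requirement then forces the $Z_i$-side edge of $p_i$ at $V$ to exit $V$; iterating this at each interior node (all outside $W_i$ because $p_i$ is $W_i$-unblocked) extends the argument into the claimed directed path. I then replace $(z_i, W_i, p_i)$ with $(V, W_i, p_i[V \sim Y])$ in the AIV 1a case, or with $(V^*, W_i, p_i[V \sim Y])$ in the AIV 1b case where $V^* = V - \sum_{e \in E_Z \cap Inc(V)} e \cdot Ta(e)$. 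Conditions AIV 2 and AIV 3 for the new triple are inherited from $p_i$. For AIV 1, I follow Brito and Pearl's contradiction argument: were the new instrument not independent of $y$ given $W_i$ in the relevant edge-deleted graph, prepending the reversed directed path $Z_i \leftarrow U_m \leftarrow \cdots \leftarrow V$ to a witnessing $d$-connecting path $\pi$ would yield a $W_i$-unblocked $Z_i$-to-$y$ path---$V$ is a fork or chain (non-collider) outside $W_i$ at the junction, and each $U_k$ is a chain non-collider outside $W_i$---contradicting the AIV 1 independence originally assumed for $(z_i, W_i)$.

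The main obstacle is the AIV 1b case, where the original independence lives in $G_{E \cup E_{z_i}}$ and the final edge $U_m \to z_i$ of the directed path may belong to $E_{z_i}$, so the directed path does not persist in $G_{E \cup E_{z_i}}$; this breaks the chained-path contradiction as stated. I plan to resolve this by stating the required new independence in $G_{E \cup E_V}$ (matching the auxiliary status of the replacement $V^*$); the directed path remains intact in $G_{E \cup E_V}$ because its edges are not in $E_V \subseteq Inc(V)$. A secondary wrinkle is ensuring that $W_i$ contains no descendants of $V$ (required by AIV 1b for the new instrument); this is arranged by pruning $W_i$ if necessary, which does not disrupt the $d$-separation, since the directed-path structure ensures the pruned elements were never essential for blocking alternative $V$-to-$y$ routes.
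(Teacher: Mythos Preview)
Your proposal diverges from the paper's argument at the crucial point, and the divergence leaves a real gap.

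In the AIV 1b case the paper does \emph{not} replace $Z_i$ by an auxiliary $V^*$; it replaces by the ordinary variable $V$ and shows that $V$ satisfies the \emph{non-auxiliary} condition 1(a), namely $(V\indep y\mid W_i)_{G_{E-}}$, where only the target edges $E$ (into $y$) are deleted. This choice is what makes the chained-path contradiction go through. If $(V\not\indep y\mid W_i)_{G_{E-}}$, the witnessing path $\pi$ lives in $G_{E-}$, and so does the directed subpath $p_i[Z'_i\sim V]$ from $V$ to $Z'_i=z_i$ (since $E\subseteq Inc(y)$). Concatenating gives $(z_i\not\indep y\mid W_i)_{G_{E-}}$, and then Theorem~\ref{thm:sep} (equivalently, the edge $z_i\to z_i^*$ in the augmented graph) transports this to $(z_i^*\not\indep y\mid W_i)$, contradicting the original 1(b) assumption. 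Because only $E$ is removed in the target graph, the troublesome possibility that the last edge $U_m\to z_i$ lies in $E_{z_i}$ is irrelevant: that edge is still present in $G_{E-}$.

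Your resolution does not close the gap you correctly identified. You move the \emph{target} independence to $G_{E\cup E_V-}$, observing that the directed path $V\to\cdots\to z_i$ survives there. But the independence you must contradict is $(z_i\indep y\mid W_i)_{G_{E\cup E_{z_i}-}}$. A d-connecting $V$-to-$y$ path found in $G_{E\cup E_V-}$ need not avoid edges of $E_{z_i}$, and the directed path's final edge into $z_i$ may itself lie in $E_{z_i}$, so the concatenated $z_i$-to-$y$ path need not exist in $G_{E\cup E_{z_i}-}$; no contradiction follows. Switching the target graph changes what you are proving, not what you are allowed to assume. The secondary ``pruning'' of descendants of $V$ from $W_i$ is also only sketched; in the paper's route this step simply disappears, since condition 1(a) imposes no non-descendant requirement relative to the new instrument $V$.
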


\begin{proof}
Assume that paths $p_i$ and $p_j$ have some variables in common, different from $Z_i$ (which might be an auxiliary variable). 
Let $V$ be the closest variable to $X_i$ in path $p_i$ which also belongs to path $p_j$. We show that after replacing $(Z_i,W_i,p_i)$ with $(V,W_i,p_i[V\sim Y])$, definition \ref{def:aiv} still holds.

From (3), changed to be in the format of GIVs, the subpath $p_i[V \sim Y]$ must point to $V$. Since $p_i$ is unblocked, subpath $p_i[Z_i\sim V]$ must be a directed path from $V$ to $Z_i$.
Furthermore, if $Z_i$ is an auxiliary variable, $p_i$ did not cross any of the subtracted edges, since the path was found in a graph with these edges removed.

At this point, if the variable $Z_i$ is not an auxiliary variable, the 3 conditions hold:
\begin{enumerate}
\item (a) is satisfied, since $p_i[Z_i\sim V]$ is a directed path from $V$ to $Z_i$, so if $V$ is descendant of $y$ then $Z_i$ is a descendant of $y$. Similarly, if $(v\not \indep y|W_i)_{G_{E}}$, 
then $(z_i\not \indep y|W_i)_{G_{E}}$, because if $W_i$ does not d-separate $V$ from $y$, then since $W_i$ are not blocking $p_i$, we can generate a path from $Z_i$ to $y$ through $V$.
\item Since the path from $V$ to $Y$ is a subpath of the path $Z_i \sim Y$, the path is unblocked.
\item The path from $Z_i$ to $y$ must have $V\in Left$, since $p_i[Z_i\sim V]$ is a directed path. Therefore, the new path has no sided intersection with any of the other paths in the set.
\end{enumerate}

If $Z_i$ is an auxiliary variable, we will call its parent $Z'_i$. Conditions 2 and 3 follow using the same proof as given for non-AVs above. The first condition, however, requires more care.
The case of $V= Z'_i$ is permitted by assumption.

Suppose $V\ne Z'_i$. That means that the path $p_i[Z_i\sim V]$ goes through one of $Z'_i$'s incoming edges (and does not go through the auxiliary edges). 
This path exists in the graph $G_{E-}$. If $V$ is descendant of $y$ then $Z_i$ is a descendant of $y$, since the directed path $p_i$ does not get cut in $G_{E-}$. 
Similarly, suppose $(v\not \indep y|W_i)_{G_{E}}$, then using the Conditional Edge Lemma 2, $(v\not \indep y|W_i)_{G_{E-}}$. Since there is a directed, unblocked path from $v$ to $Z'_i$, 
$(Z'_i\not \indep y|W_i)_{G_{E-}}$, so using Theorem 1, $(Z_i\not \indep y|W_i)_{G_{E+}}$ - a contradiction. Therefore $(v \indep y|W_i)_{G_{E}}$, so $v$ satisfies (a).

\end{proof}

For the next proof, we will assume that the conditions in lemma \ref{lemma:auxnosharing} hold.

\begin{lemma}
For all $1\le i \le n$, there exists no unblocked path between $Z_i$ and $Y$, different from $p_i$, which includes edge $X_i \rightarrow Y$ and is composed only by edges from $p_1,...,p_i$.
\end{lemma}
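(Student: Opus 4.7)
The plan is to proceed by contradiction, mirroring Brito--Pearl's Lemma 3 but substituting the auxiliary-variable version of the no-sharing result, Lemma \ref{lemma:auxnosharing}, at the appropriate points.

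Suppose for contradiction that an unblocked path $q \neq p_i$ from $Z_i$ to $Y$ exists, uses edge $X_i \to Y$, and is composed only of edges from $p_1,\ldots,p_i$. Since $q$ and $p_i$ share at least the edge $X_i \to Y$ but differ elsewhere, walking along $q$ from $X_i$ back toward $Z_i$ must reveal a first vertex $V$ at which $q$ leaves $p_i$ and begins following edges of some $p_j$ with $j < i$. By Lemma \ref{lemma:auxnosharing}, the only variables shared between $p_i$ and an earlier $p_j$ are $Z_j$ or $Z'_j$, so $V \in \{Z_j, Z'_j\}$.

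I would then invoke the no-sided-intersection requirement of Supplemental Definition \ref{def:aiv}(3), which tells us that at any such shared variable, both $p_i[Z_i \sim V]$ and $p_j[V \sim Y]$ point into $V$. Since these are exactly the $p_i$- and $p_j$-edges of $q$ incident to $V$ (the former from the portion of $q$ before the transition, the latter from the portion after), $V$ is a collider on $q$. For $q$ to be unblocked, this collider must be activated: either $V$ itself or one of its descendants must lie in $W_i$.

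The main obstacle is turning this activation into a contradiction, because both the collider analysis and the conditioning sets must be compared across the original and augmented graphs. I expect the resolution to be that the activated collider yields an unblocked connection between $Z_i$ and $Y$ in $G^{E+}$ with the $\alpha$-edges (and, in the auxiliary case, the edges of $E_Z\cap Inc(Z_i)$) removed, which directly contradicts the conditional independence $(Z_i \indep Y \mid W_i)$ guaranteed by Supplemental Definition \ref{def:aiv}(1). The auxiliary subcase, where $Z_i$ is an augmentation node and the transition may occur at $Z'_i$, requires applying Theorem \ref{thm:theoremOne} to translate d-separation statements between $G^{E+}$ and $G_{E-}$, and using the Conditional Edge Lemmas so that the relevant partial correlations agree across the two graphs. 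Once those translations are in place, the activated collider produces precisely the forbidden connection, closing the argument.
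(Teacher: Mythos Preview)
Your argument tracks the paper's exactly through the identification of the collider: assume such a path $q$ exists, locate the first departure point $V$ where $q$ leaves $p_i$ onto some $p_j$ with $j<i$, invoke Lemma~\ref{lemma:auxnosharing} to conclude $V\in\{Z_j,Z'_j\}$, and use condition~(3) of Supplemental Definition~\ref{def:aiv} to see that both edges of $q$ incident to $V$ point into $V$, so $V$ is a collider on $q$.

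Where you diverge is in what happens next, and this is the gap. You read ``unblocked'' as ``unblocked given $W_i$'' and then set out to show that activating the collider contradicts the conditional independence in clause~(1), bringing in Theorem~\ref{thm:theoremOne}, the Conditional Edge Lemmas, and translations between $G^{E+}$ and $G_{E-}$. None of that is needed, because in this lemma---as throughout the Brito--Pearl argument and this paper's Wright-rules setup (see the Preliminaries, where ``unblocked'' is explicitly glossed as ``unblocked given the empty set'')---an \emph{unblocked path} is simply a path with no colliders. The moment you exhibit a collider at $V$ on $q$, you already have the contradiction: $q$ is not unblocked. The paper's proof ends right there: ``This implies that $p$ is blocked by $V_1$ (collider), a contradiction.'' Everything you outlined after the collider step---activation, $W_i$, graph translations---should be dropped, and with it the portion of the proposal you yourself flagged as the ``main obstacle'' disappears.
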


\begin{proof}
Let $p$ be an unblocked path between $Z_i$ and $Y$, different from $p_i$, and assume that $p$ is composed only by edges from $p_1,...,p_i$. According to the ordering condition, 
if $Z_i$ or $Z'_i$ appears in some path $p_j$, with $j\ne i$, then $j>i$. Therefore, $p$ must start at $Z_i$, and take a non-auxiliary edge from $Z'_i$. Since $p$ is different from $p_i$,
it must contain at least one edge from $p_1,...,p_{i-1}$. Let $(v_1,V_2)$ denote the first edge in $p$ which does not belong to $p_i$. From lemma \label{lemma:auxnosharing},
it follows that $V_1$ must be a $z_k$ or $z'_k$ for some $k<i$, and the subpath $p_i[Z_i\sim V_1]$ and $(V_1,V_2)$ must point to $V_1$. This implies that $p$ is blocked by $V_1$ (collider), a contradiction.
\end{proof}

Using the same proof, we also get:

\begin{lemma}
For all $1\le i \le n$, there exists no unblocked path between $Z_i$ and some $W_i$, composed only of edges from $p_1,...,p_i$.
\end{lemma}

\begin{lemma}
For all $1\le i \le n$, there exists no unblocked path between $Z_i$ and $Y$, including edge $X_j \rightarrow Y$, with $j< i$, composed only of edges from $p_1,...,p_i$.
\end{lemma}

To finish the proof, we add a comment about auxiliary variables to Brito's Lemma 7:

\begin{lemma}
The coefficients of edges incident to $y$ are 0 unless they are part of the instrumental set.
\end{lemma}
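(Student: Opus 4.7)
The plan is to show that when we set up the linear system $Ae=b$ used in the proof of Supplemental Theorem~\ref{thm:auxid}/Theorem~\ref{thm:aqivid}, the only unknowns that survive on the left-hand side are the coefficients of the instrumental-set edges $\alpha=\{e_1,\dots,e_k\}$; all other parameters on edges incident to $y$ appear with coefficient $0$. The method is to expand each row of the system using CEL~\ref{thm:conditionalGraphParents} and then apply CEL~\ref{thm:conditionalEdgeRemove} and CEL~\ref{thm:conditionalErrorRemove} to collapse the ``extraneous'' terms via the independence property guaranteed by the quasi-IV definition.

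First I would partition $Pa(y)$ into three disjoint groups: $X=Ta(\alpha)$ (the tails of edges being identified), $T=Ta(E_y)$ where $E_y=E_K\cap Inc(y)$ (the tails of known incident edges, which are subtracted to form $y^*$), and $T'=Pa(y)\setminus(X\cup T)$ (all remaining parents of $y$). Let $e_j$, $\gamma_j$, $\gamma'_j$ be the associated structural coefficients. Applying CEL~\ref{thm:conditionalGraphParents} to each row $\sigma_{z_iy\cdot W_i}$ (in the augmented graph $G^{E_K+}$) and subtracting out the auxiliary contributions defining $y^*$, the row becomes
$$\sigma_{z_iy^*\cdot W_i} \;=\; \sum_{x_j\in X} e_j\,\sigma_{z_ix_j\cdot W_i} \;+\; \sum_{t'_j\in T'} \gamma'_j\,\sigma_{z_it'_j\cdot W_i} \;+\; \sigma_{z_iu_y\cdot W_i},$$
since the $T$-terms cancel against the auxiliary subtraction.

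Next I would use CEL~\ref{thm:conditionalEdgeRemove} and CEL~\ref{thm:conditionalErrorRemove} to rewrite the last two sums in the graph $G_{\alpha\cup E_y-}$. This is permissible because, by the quasi-IV definition, $W_i$ contains no descendants of $y$ (and, in the AV case for $z_i$, no descendants of $z_i$ either), and neither the $t'_j$ nor $u_y$ are descendants of the removed heads. Once in $G_{\alpha\cup E_y-}$, CEL~\ref{thm:conditionalGraphParents} (run in reverse) re-assembles the last two sums back into $\sigma_{z_iy\cdot W_i}^{G_{\alpha\cup E_y-}}$, which equals $0$ by the quasi-IV independence condition. Hence every coefficient $\gamma'_j$ drops out of the equation and only $\{e_1,\dots,e_k\}$ appear as unknowns.

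The main obstacle will be making rigorous the claim that this cancellation eliminates the $\gamma'_j$ as free symbols, not merely numerically: because the $\gamma'_j$ and the entries of $\Omega$ are algebraically independent free parameters of the SEM, the identity $\sum_j\gamma'_j\sigma_{z_it'_j\cdot W_i}^{G_{\alpha\cup E_y-}}+\sigma_{z_iu_y\cdot W_i}^{G_{\alpha\cup E_y-}}=0$ must hold as a polynomial identity in those parameters, forcing each $\sigma_{z_it'_j\cdot W_i}^{G_{\alpha\cup E_y-}}$ and the error-covariance term to vanish identically. A secondary subtlety is the case where $z_i$ is itself an auxiliary variable: one must check that applying Theorem~\ref{thm:theoremOne} to swap $z_i^*$ for $z_i$ in $G_{\alpha\cup E_y\cup E_{z_i}-}$ still delivers the same cancellation, which follows since the subtracted tails of $z_i^*$ are also non-descendants of the removed heads and hence the CELs apply symmetrically on both sides.
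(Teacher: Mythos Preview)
Your proposal is correct and follows essentially the same route as the paper: expand each row with CEL~\ref{thm:conditionalGraphParents}, transfer the extraneous parent and error terms to the edge-deleted graph via CEL~\ref{thm:conditionalEdgeRemove} and CEL~\ref{thm:conditionalErrorRemove}, and then kill them with the independence condition of the (quasi-)instrumental set. The only difference is in how the individual vanishing is obtained: the paper observes directly that each $\sigma_{z_it'_j\cdot W_i}$ equals its value in $G_{E-}$ and must be zero there by d-separation (a nonzero value would, via the edge $t'_j\to y$ still present in $G_{E-}$, produce an unblocked $z_i$--$y$ path contradicting $(z_i\indep y\mid W_i)_{G_{E-}}$), whereas you first reassemble the sum into $\sigma_{z_iy\cdot W_i}^{G_{E-}}=0$ and then invoke algebraic independence of the $\gamma'_j$ to force termwise vanishing. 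Both arguments are valid; the paper's d-separation step is a bit more direct and spares you the generic-parameter detour you flag as your ``main obstacle.''
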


\begin{proof}
Using CEL1, we can see that the coefficients are $\sigma_{zp_i.W}$. But these are the same in graph $G$ and $G_{E-}$ by CEL 2.
If the coefficient were non-zero in $G_{E-}$, then $\sigma_{zy.W}$ would be non-zero by d-separation (there is a directed edge from each $p_i$ to $y$), meaning that conditional independence would be violated.
\end{proof}

This completes the necessary proof modifications. We were able to sidestep issues of same-value structural parameters by ensuring that all intersections that might move across the auxiliary edges
happen with $i<j$, and are not relevant to the proof.

\section{Equivalence of IV Definitions}
\label{sec:equivIV}
For convenience, Definition \ref{def:iv} is restated here in its original (theorem) form:

\begin{theorem}
\label{thm:brito}
\citep{brito:pea02a} Given a linear model with graph $G$, the coefficients for a set of edges $E = \{(x_1, y), ..., (x_k, y)\}$ are identified if there exists triplets $(z_1, W_1, p_1), ..., (z_k, W_k, p_k)$ such that for $i=1, ..., k$,
\begin{enumerate}
\item $(z_i \indep y |W_i)_{G_{E-}}$, where $W_i$ does not contain any descendants of $y$ and $G_{E-}$ is the graph obtained by deleting the edges, $E$ from $G$,
\item $p_i$ is a path between $z_i$ and $x_i$ that is not blocked by $W_i$, and
\item if $1 \le i < j \le n$ the variable $z_j$ does not appear in path $p_i$; and, if paths $p_i$ and $p_j$ have a common
variable $V$, then both $p_i[V\sim Y]$ and $p_j[Z_j\sim V]$ point to $V$.
\end{enumerate}
If the above conditions are satisfied, we say that $Z$ is a \emph{generalized instrumental set} for $E$ or simply an \emph{instrumental set} for $E$.\footnote{Note that when $k=1$, $z_1$ is a conditional IV for $(x_1, y)$.  Further, if $z_1 = x_1$, then $x_1$ satisfies the single-door criterion for $(x_1 , y)$. The converse is not true, however (see appendix \ref{sec:cIVvsgIS}).}
\end{theorem}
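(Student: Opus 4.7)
The plan is to build, from the hypotheses, a square linear system in the unknowns $\alpha = (\alpha_1, \ldots, \alpha_k)$ (the coefficients of the edges in $E$) whose coefficient matrix is generically invertible, so that $\alpha$ is uniquely recovered from $\Sigma$. The key identity comes from decomposing $\sigma_{z_i y \cdot W_i}$ via the parents of $y$. Writing $\mathrm{Pa}(y) = \{x_1, \ldots, x_k\} \cup T'$ with edges from $T'$ to $y$ carrying coefficients $\gamma'$, an application of CEL \ref{thm:conditionalGraphParents} gives
$$\sigma_{z_i y \cdot W_i} \;=\; \sum_{j=1}^k \alpha_j\,\sigma_{z_i x_j \cdot W_i} \;+\; \sum_{t} \gamma'_t\, \sigma_{z_i t \cdot W_i} \;+\; \sigma_{z_i u_y \cdot W_i}.$$
In $G_{E-}$ the only parents of $y$ are $T'$, so the same decomposition yields $\sigma_{z_i y \cdot W_i}^{G_{E-}} = \sum_t \gamma'_t \sigma_{z_i t \cdot W_i}^{G_{E-}} + \sigma_{z_i u_y \cdot W_i}^{G_{E-}}$. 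Because $W_i \cup \{z_i\}$ contains no descendants of $y$ by hypothesis (i), CEL \ref{thm:conditionalEdgeRemove} and CEL \ref{thm:conditionalErrorRemove} let me replace each of the last two partial covariances in the first display by its counterpart in $G_{E-}$. Combined with (i), which forces $\sigma_{z_i y \cdot W_i}^{G_{E-}} = 0$, this collapses to
$$\sigma_{z_i y \cdot W_i} \;=\; \sum_{j=1}^k \alpha_j\,\sigma_{z_i x_j \cdot W_i},$$
for every $i$, a linear system $A\alpha = b$ with $A_{ij} = \sigma_{z_i x_j \cdot W_i}$ and $b_i = \sigma_{z_i y \cdot W_i}$, all computable from $\Sigma$.

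Next I argue that $A$ is generically invertible. By the partial correlation lemma, each entry $A_{ij}$ is a ratio whose numerator is a polynomial in the model's free structural parameters, so it suffices to show that $\det(A)$, viewed as a rational function of those parameters, is not identically zero. To this end I will adapt the proof scheme of Brito and Pearl. First, use condition (iii) together with the argument of Lemma \ref{lemma:nosharing} — applied with the paths $p_i$ taken between $z_i$ and $x_i$ rather than between $z_i$ and $y$ — to replace the triples, without loss of generality, by ones in which distinct $p_i, p_j$ share no vertex other than possibly $z_i$. Here the justification is that whenever $p_i$ and $p_j$ share a variable $V$, both $p_i[V\sim y]$ and $p_j[z_j \sim V]$ point to $V$, so the subpath $p_i[z_i\sim V]$ is directed from $V$ to $z_i$ and $V$ itself satisfies (i) by d-separation inheritance along that directed subpath.

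Once the paths are disentangled, I invoke the inductive argument of Lemma \ref{lemma:nounblocked} and its two siblings: any unblocked path from $z_i$ to $x_j$ using only edges of $p_1, \ldots, p_i$ that differs from $p_i$ must cross a collider at some earlier $z_\ell$, and hence cannot contribute a competing monomial to $\det(A)$. The outcome is that the Leibniz expansion of $\det(A)$ contains the distinguished monomial formed by the product of structural parameters along $p_1, \ldots, p_k$ (multiplied by the appropriate ``covariance envelope'' from the partial correlation lemma), with no other term of the same monomial type able to cancel it. Hence $\det(A) \not\equiv 0$, so $A$ is invertible almost everywhere and $\alpha = A^{-1}b$ is identified.

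The main obstacle is the last step: showing that $\det(A)$ does not vanish identically. Linearity in $\alpha$ is easy, but each $A_{ij}$ is itself polynomial in the remaining structural parameters, so the $k!$ signed products in the Leibniz expansion could in principle cancel. The no-sided-intersection/ordering condition (iii) is precisely what rules out such cancellations, by forcing every alternative route from $z_i$ to $x_j$ through earlier paths to terminate at a collider and therefore contribute zero. This is the same algebraic obstruction handled in Brito and Pearl's original proof and in the proof of Supplemental Theorem \ref{thm:auxid}, and the argument transfers verbatim here since the $z_i$ are genuine model variables with no repeated coefficients to worry about.
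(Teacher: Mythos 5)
Your proposal is correct and follows essentially the same route the paper takes: the paper itself only cites this result from Brito and Pearl, but your construction of the system $A\alpha=b$ via the Conditional Edge Lemmas is exactly the argument the paper gives for Theorem \ref{thm:aqivid} (specialized to ordinary instruments), and your generic-invertibility step defers to the same Brito--Pearl lemmas (the partial correlation lemma, the no-sharing lemma, and the no-alternative-unblocked-path lemmas) that the paper outlines in its appendix before modifying them for auxiliary variables. Your closing remark is also the right diagnosis: the determinant argument transfers verbatim precisely because the $z_i$ here are genuine model variables, which is why the paper's own work is concentrated on the auxiliary-variable case.
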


We will show that the third condition in this theorem can be replaced with an assertion that the paths have no sided intersection. That is, the following theorem is equivalent:

\begin{theorem}
\label{thm:IS}
Given a linear model with graph $G$, the coefficients for a set of edges $E = \{(x_1, y), ..., (x_k, y)\}$ are identified if there exists triplets $(z_1, W_1, p_1), ..., (z_k, W_k, p_k)$ such that for $i=1, ..., k$,
\begin{enumerate}
\item $(z_i \indep y |W_i)_{G_{E-}}$, where $W$ does not contain any descendants of $y$ and $G_{E-}$ is the graph obtained by deleting the edges, $E$ from $G$,
\item $p_i$ is a path between $z_i$ and $x_i$ that is not blocked by $W_i$, and
\item the set of paths, $\{p_1, ..., p_k\}$ has no sided intersection.
\end{enumerate}
\end{theorem}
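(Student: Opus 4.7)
The plan is to adapt Brito and Pearl's original identification argument for generalized instrumental sets, recasting the combinatorial core of that argument so that it rests on the ``no sided intersection'' property of the paths $p_1, \ldots, p_k$ rather than on their original ordering condition. Throughout, let $\lambda_1, \ldots, \lambda_k$ be the unknown structural coefficients attached to the edges $(x_1, y), \ldots, (x_k, y) \in E$, and let $G_{E-}$ denote $G$ with these edges deleted.

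First, I would convert each triplet $(z_i, W_i, p_i)$ into a linear equation in the $\lambda_j$. Applying the Conditional Edge Lemma for parents to $y$, then using the CELs to push conditional covariances that do not involve edges of $E$ from $G$ into $G_{E-}$, one obtains
\begin{equation*}
\sigma(z_i, y \mid W_i) \;=\; \sum_{j=1}^{k} \lambda_j \, \sigma(z_i, x_j \mid W_i)_{G_{E-}} \;+\; \sigma(z_i, y \mid W_i)_{G_{E-}}.
\end{equation*}
Condition (i) makes the last term vanish, so after stacking the $k$ equations I get a matrix equation $A\lambda = b$, where $A_{ij} = \sigma(z_i, x_j \mid W_i)_{G_{E-}}$ and $b_i = \sigma(z_i, y \mid W_i)$. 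Identification of $\lambda$ from $\Sigma$ (generically) then reduces to showing that $\det A$ is not the zero polynomial in the structural parameters.

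Next, I would expose a distinguished monomial in $\det A$ coming from the paths $p_1, \ldots, p_k$. Each diagonal entry $A_{ii}$ expands, via Wright's rules in $G_{E-}$, into a sum of products of structural parameters over unblocked paths from $z_i$ to $x_i$ given $W_i$; by condition (ii), the path $p_i$ contributes a particular nonzero monomial $m_i$. The product $m_1 m_2 \cdots m_k$ thus appears in the identity-permutation term of the Leibniz expansion of $\det A$.

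The main obstacle will be ruling out cancellation, that is, showing no other permutation $\pi \neq \mathrm{id}$ in the expansion of $\det A$ produces the same monomial $m_1 \cdots m_k$. A non-identity $\pi$ contributes via $k$-tuples $(q_1, \ldots, q_k)$, where $q_i$ is an unblocked $z_i$-to-$x_{\pi(i)}$ path in $G_{E-}$. If any such tuple yielded the same monomial as $(p_1, \ldots, p_k)$, the two tuples would use the same multiset of edges with the same orientations; but then the no-sided-intersection condition (iii), which fixes which side of each shared vertex the paths enter and leave on, would force the edge-to-path assignment to recover the identity pairing, contradicting $\pi \neq \mathrm{id}$. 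This is the same combinatorial obstruction that drives Brito and Pearl's proof, and the equivalence argument in Appendix \ref{sec:equivIV} lets me import their chain of auxiliary lemmas (no shared variable other than endpoints, no alternate unblocked path, no collider-free detour) almost verbatim, with ``no sided intersection'' playing the role of their ordering hypothesis. Concluding $\det A \not\equiv 0$ gives generic invertibility of $A$, hence identification of $\lambda_1, \ldots, \lambda_k$ almost everywhere in parameter space.
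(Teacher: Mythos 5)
Your reduction to the linear system $A\lambda=b$, with $\sigma(z_i,y\mid W_i)_{G_{E-}}$ vanishing by condition (i), is sound and matches the machinery the paper actually builds (the Conditional Edge Lemmas, deployed the same way in the proof of Theorem~\ref{thm:aqivid}). The gap is in the step where you rule out cancellation in $\det A$. The paper never attempts that step directly: it takes the identification claim from \citet{brito:pea02a} in their original form with the \emph{ordered} condition (iii) (Theorem~\ref{thm:brito}), and the entire content of its own proof is Appendix~\ref{sec:equivIV}, which shows that a set of triples with no sided intersection can be \emph{transformed} into a generally different set satisfying the ordering condition --- by truncating the paths to half-treks (Lemma~\ref{lemma:halftrek}), swapping paths between instruments to remove pairwise double-intersections (Lemma~\ref{lemma:switchorder}), unwinding intersection loops (Lemma~\ref{lemma:iloop}), and then topologically sorting the resulting intersection graph. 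Your claim that condition (iii) ``would force the edge-to-path assignment to recover the identity pairing,'' so that Brito and Pearl's lemmas can be imported ``almost verbatim, with no sided intersection playing the role of their ordering hypothesis,'' is precisely the step that fails: those lemmas are indexed by the ordering (``no unblocked path \ldots composed only of edges from $p_1,\dots,p_i$,'' ``$Z_j$ does not appear in $p_i$ for $i<j$''), and no sided intersection is a symmetric condition that does not by itself supply any such ordering. The paper's Figure~\ref{fig:iloop3} exhibits a no-sided-intersection set containing an intersection loop for which \emph{no} admissible ordering of the three instruments exists, so the lemmas cannot be applied to the original triples at all.

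Note also that the paper's fix changes the object being analyzed: after the transformation the instruments, conditioning sets, and paths are different, so one establishes invertibility of a \emph{different} matrix $A'$ (which still yields identification of the same coefficients), rather than showing that your original $\det A$ is a nonzero polynomial. If you want a direct proof for the original triples, you need a genuinely new combinatorial uniqueness argument for path systems without sided intersection (in the spirit of Lindstr\"om--Gessel--Viennot / trek-system arguments), which your proposal asserts but does not supply; otherwise you should make the transformation of the triple set an explicit first step and only then invoke Brito and Pearl's chain of lemmas.
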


We will perform several reversible steps to show that whenever an instrumental set of one type exists, a set of the other must also exist.

\begin{lemma}
\label{lemma:halftrek}
There exist triples satisfying the conditions of theorem \ref{thm:IS}, iff there exist triples satisfying the theorem with the additional condition that
each $p_i$ can be decomposed into $p_i^0[z_i\sim a_1],p_i^1[a_1\sim a_2],...,p_i^m[a_m\sim y]$ and each $p_i^j$ is a half-trek.
\end{lemma}

\begin{proof}
$\Leftarrow$ follows directly, since any set of triples satisfying lemma \ref{lemma:halftrek} automatically satisfies theorem \ref{thm:IS}.

$\Rightarrow$ Suppose we have a set of triples satisfying theorem \ref{thm:IS}. Consider the $i$th triple $(z_i,W_i,p_i)$ from theorem \ref{thm:IS}. 
Decompose $p_i$ into $p_i^j...$, splitting at each collider $\rightarrow a_j \leftarrow$. Suppose $p_i^j$ is not a half-trek, and it is closest to $y$ (i.e. $p_i^h,h>j$ are all half-treks).
We define $z'_i$ as the last variable in $Left(p_i^j)$ from $a_i$ along $p_i^j$\footnote{Remember that since $p_i^j$ is an unblocked path from $a_j$ to $a_{j+1}$, 
it is a trek starting with one or more nodes in $Left$, and ending with nodes in $Right$}. We replace the triple $(z_i,W_i,p_i)$ with the new triple $(z_i',W_i,p'_i[z_i'\sim y])$. 

We now show that this new set of triples satisfies the definition of lemma \ref{lemma:halftrek}.

\begin{enumerate}
\item Suppose $(z'_i \notindep y |W_i)_{G_{E-}}$. This means that $z_i$ and $y$ are not d-separated given $W_i$, and as such there exists a path from $y$ to $z_i'$. 
But there is path from $z'_i$ to $z_i$ starting with $z'_i\rightarrow$, which is also unblocked by $W_i$ (reverse $p_i$). Combining those two paths gives a path between $y$ and $z_i$, meaning $(z_i \notindep y |W_i)_{G_{E-}}$,
a contradiction.
\item Since $p'_i$ is a subpath of $p_i[z'_i\sim x_i]$, it is a path between $z'_i$ and $x_i$ that is not blocked by $W_i$.
\item By construction, $p'_i$ can be decomposed into a set of half-treks $p_i^j$. Furthermore, since $\{p_1,p_2,..., p_k\}$ had no sided intersection, and $\{p'_1,p'_2,..., p'_k\}$
are subpaths of these original paths, and by the fact that $z'_i$ must have already been in $Left(p_i)$, we have $Right(p'_i) = Right(p_i)$, and $Left(p'_i) \subseteq Left(p_i)$ for all i.
Therefore, $\{p'_1,p'_2,..., p'_k\}$ must not have sided intersection, since if it did, $\{p_1,p_2,..., p_k\}$ would have also had this intersection.
\end{enumerate}
\end{proof}

\begin{corollary}
\label{corollary:nov}
If there exist triples satisfying lemma \ref{lemma:halftrek}, then the set of paths $\{p_1,...,p_k\}$ can only intersect at $z_1,...,z_k$, where $z_i$ is the instrumental variable.
\end{corollary}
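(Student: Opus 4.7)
The plan is to derive the conclusion almost immediately from two structural observations: the geometry of a half-trek from $z_i$ to $x_i$, and the no-sided-intersection condition already guaranteed by Lemma \ref{lemma:halftrek}. So the proof is really a one-step unpacking of the hypothesis.

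First, I would unpack the definition of a half-trek from $z_i$ to $x_i$. By definition such a $p_i$ either has the form $z_i \to v_1 \to \cdots \to x_i$ (a directed path) or the form $z_i \leftrightarrow v_1 \to v_2 \to \cdots \to x_i$ (a bidirected edge followed by a directed path). In both cases every node $v$ on $p_i$ other than $z_i$ has an outgoing directed edge along $p_i$ pointing toward $x_i$, so by the definition of Right we have $v \in \text{Right}(p_i)$. Equivalently, $\text{Left}(p_i) = \{z_i\}$ for every $i$.

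Next I would argue by contradiction. Suppose some node $v$ lies on two distinct paths $p_i$ and $p_j$ (with $i \neq j$) and that $v \notin \{z_1, \ldots, z_k\}$. Then in particular $v \neq z_i$ and $v \neq z_j$, so the observation above gives $v \in \text{Right}(p_i) \cap \text{Right}(p_j)$. This contradicts the no-sided-intersection condition from Lemma \ref{lemma:halftrek}, which forces $\text{Right}(p_i) \cap \text{Right}(p_j) = \emptyset$. Hence every shared node between two paths must coincide with some $z_\ell$, which is exactly the claim.

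There is no real obstacle: all of the difficulty was already absorbed by Lemma \ref{lemma:halftrek}, where the triples were modified so that each path starts at the last vertex of Left and is therefore a half-trek with $\text{Left}(p_i)=\{z_i\}$. The present corollary is then a one-line consequence of that normalization together with the no-sided-intersection clause.
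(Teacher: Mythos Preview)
Your proof is correct and is essentially the same argument as the paper's: both observe that for a half-trek $p_i$ the only node in $\text{Left}(p_i)$ is $z_i$, so by the no-sided-intersection condition any node shared between two paths must be one of the $z_\ell$. The only cosmetic difference is that the paper phrases this directly (a shared node must lie in Left of one path, hence is some $z_\ell$) while you phrase it as a contradiction via $\text{Right}(p_i)\cap\text{Right}(p_j)\neq\emptyset$.
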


\begin{proof}
If two paths have no sided intersection, then any node that is in both paths must be in $Right$ of one path, and in $Left$ of the other.
Since each path $p_i$ is composed of a set of half-treks, the only variable in $Left$ of $p_i$ is $z_i$, with the rest of the variables in $Right$ (all colliders $a_i$ are in both $Right$ and $Left$). Thus any intersection must happen at $z_i$, the instrumental variable.
\end{proof}

\begin{figure}
\centering
\includegraphics[width=.6\linewidth]{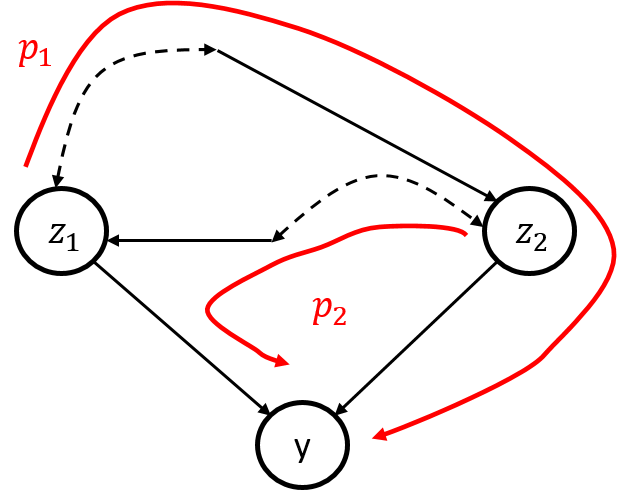}
\caption{The structure of intersecting paths in lemma \ref{lemma:switchorder}}
\label{fig:2intersect}
\end{figure}

\begin{lemma}
\label{lemma:switchorder}
There exist triples satisfying lemma \ref{lemma:halftrek} iff there exist triples satisfying the lemma AND $\forall z_i,z_j$, if $z_j$ is on path $p_i$, then $z_i$ is not on path $p_j$.
\end{lemma}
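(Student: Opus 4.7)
The plan is to handle the two directions of the ``iff'' separately. The $(\Leftarrow)$ direction is immediate since the enriched statement is strictly stronger than lemma \ref{lemma:halftrek}. For the $(\Rightarrow)$ direction, I would start from a witness $(z_1,W_1,p_1),\ldots,(z_k,W_k,p_k)$ of lemma \ref{lemma:halftrek} and produce a (possibly different) witness that additionally rules out the mutual crossing ``$z_j \in p_i$ and $z_i \in p_j$.''

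My approach is an extremal argument: among all lemma \ref{lemma:halftrek} witnesses, pick one minimizing the total path length $\sum_\ell |p_\ell|$, and show that this minimizer cannot contain a mutual crossing. Suppose for contradiction it does, at some pair $(i,j)$. Because each $p_\ell$ is a half-trek, at most its first edge is bidirected, so both $p_i[z_j \sim y]$ and $p_j[z_i \sim y]$ are purely directed sub-paths through $x_i$ and $x_j$ respectively. (Incidentally, the DAG assumption rules out the case where neither $p_i$ nor $p_j$ starts with a bidirected edge, since then we would have directed paths $z_i \to \cdots \to z_j$ and $z_j \to \cdots \to z_i$ in a DAG; but the construction below does not depend on which of the two starts bidirectedly.) I would then build a candidate witness by swapping the assignment of instruments to target edges: keep $Z=\{z_1,\ldots,z_k\}$ as the same set, but replace the triple for $x_i$ with $(z_j, W_j, p_i[z_j \sim y])$ and the triple for $x_j$ with $(z_i, W_i, p_j[z_i \sim y])$, leaving the other $k-2$ triples untouched. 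Both new paths are strictly shorter than the originals, so the total path length decreases.

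To conclude by contradiction, the swapped collection must itself be a valid lemma \ref{lemma:halftrek} witness. The no-sided-intersection check is easy: because $p_i[z_j\sim y]$ is a directed sub-path of the half-trek $p_i$ starting at the interior node $z_j$, one has $\mathrm{Left}(p_i[z_j\sim y])=\{z_j\}$ and $\mathrm{Right}(p_i[z_j\sim y])\subseteq \mathrm{Right}(p_i)$, and symmetrically on the $j$-side, so the new Lefts and Rights inherit disjointness from the originals. The $d$-separation condition $z_j \indep y \mid W_j$ in $G_{E-}$ is inherited from the original $j$-th triple, and likewise for the new $j$-th triple. The only non-trivial check is that $W_j$ does not block the new directed path $p_i[z_j \sim y]$ (and symmetrically for $W_i$ and $p_j[z_i \sim y]$), which is the step I expect to be hardest because $W_j$ was originally chosen only to avoid blocking $p_j$, not $p_i$. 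The plan for this step is to exploit the minimality of the chosen witness: if some $w \in W_j$ lay on the interior of $p_i[z_j \sim y]$, then $w$ would be a non-collider on a directed path to $y$, and I would argue that one can construct a strictly shorter valid witness by replacing $W_j$ with a nearest separator of $(z_j,y)$ in $G_{E-}$ in the sense of \citet{zander:etal16}, whose support lies in $\mathrm{Anc}(\{z_j,y\})$ and can be arranged to miss the interior of $p_i[z_j\sim y]$. Making this separator-replacement argument fully rigorous, and verifying that the no-descendants-of-$y$ constraint continues to hold after the replacement, is where the bulk of the technical work will live; once it is settled, the swapped witness contradicts the minimum choice and the lemma follows.
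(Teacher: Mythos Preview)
Your path-swap construction is exactly the one the paper uses, and your verification of the no-sided-intersection and $d$-separation inheritance is correct. The minimality framing is a harmless variant; the paper simply iterates the swap over all offending pairs and notes that the new (directed) sub-paths cannot create fresh mutual crossings.

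The genuine gap is the step you flag yourself: ensuring that the retained conditioning set does not block the new directed sub-path. Your proposed fix---replace $W_j$ by a nearest separator of $(z_j,y)$ in $G_{E-}$ and argue it ``can be arranged to miss the interior of $p_i[z_j\sim y]$''---does not work as stated. Every interior node of $p_i[z_j\sim y]$ lies in $\mathrm{Anc}(y)$, which is precisely where a nearest separator is allowed to live, so there is no general reason it avoids this path; and changing the conditioning set does not shorten any path, so your minimality hypothesis gives you no extra leverage at this point.

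The paper resolves this by a direct two-case construction of a \emph{new} conditioning set $W'_i$ for the triple $(z_i,\,p_j[z_i\sim y],\,W'_i)$, splitting on whether $W_j\cap De(z_i)_{G_{E-}}$ is empty. If it is nonempty, take $W'_i=W_j$: then $p_j[z_i\sim y]$ is unblocked because it is a sub-path of $p_j$, and any hypothetical unblocked $z_i$--$y$ path can be extended through the now-activated collider at $z_i$ along $p_j[z_j\sim z_i]$ to contradict $(z_j\indep y\mid W_j)_{G_{E-}}$. If it is empty, first observe $y\notin De(z_i)_{G_{E-}}$ (else concatenating with $p_j[z_j\sim z_i]$ again contradicts $(z_j\indep y\mid W_j)$), and take $W'_i=W_i\setminus De(z_i)$: the directed path $p_j[z_i\sim y]$ lives entirely in $De(z_i)$ so $W'_i$ cannot block it, and a short DAG argument shows that removing descendants of $z_i$ from $W_i$ cannot open any new $z_i$--$y$ path. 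The symmetric construction gives $W'_j$. This case split is what you are missing; once you have it, the minimality wrapper is no longer needed.
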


\begin{proof}
Using lemma \ref{lemma:halftrek}, we can generate a set of triples where all paths are composed of successive half-treks. 
Suppose that $\exists i,j$ s.t. $z_i$ is on path $p_j$ and $z_j$ is on path $p_i$.
We know that $p_i$ and $p_j$ must start with bidirected edges, making $z_i$ in Left in $p_i$ and $z_j$ in Left in $p_j$ (otherwise $z_i$ or $z_j$ would be in both Left and Right of its path, giving a sided intersection).
Similarly, we know that $p_i[... z_j \rightarrow ...]$ and $p_j[...z_i\rightarrow ...]$, since no sided intersection requires that the variables be in $Right$ of the other path.

We claim that there exist new triplets that no longer intersect each other: $(z_i,p_j[z_i\sim y],W_i')$ and $(z_j,p_i[z_j\sim y],W_j')$.
For example, in figure \ref{fig:2intersect}, $p_1$ and $p_2$ intersect at $z_1$ and $z_2$. We can modify the paths to be $z_1\rightarrow y$ and $z_2\rightarrow y$, which no longer intersect and form an equivalent valid set.

We now prove that in the switched triples, there exist $W_i'$ and $W_j'$ such that:

\begin{enumerate}
\item The modified paths have no sided intersection with each other, nor with other variables in the resulting instrumental set,
\item $W_i'$ and $W_j'$ do not block $p_j[z_i\sim y]$ and $p_i[z_j\sim y]$ respectively, and
\item $(z_i\indep y|W_i')_{G_{E-}}$ and $(z_j\indep y|W_j')_{G_{E-}}$.
\end{enumerate}

Notice that if these conditions are satisfied, the resulting set satisfies theorem \ref{thm:IS}.

We first show that there is no sided intersection. Note that $p'_i$ and $p'_j$ are sub-paths of the original $p_j$ and $p_i$, 
which by assumption have no sided intersection with any other paths in the instrumental set. The only modification now is that the paths
start at $z_i$ and $z_j$ respectively. No path intersects with $z_i$ or $z_j$ in the new triples, because originally $z_i$ and $z_j$ were the intersection
of two paths, one in $Left$ and one in $Right$, meaning that no other path could go through them - and now this intersection no longer exists, and all other variables are unchanged.
Thus the modified paths have no sided intersection with any other variable.

Finally, we show that there exist conditioning sets that satisfy the second and third requirements. We focus on $W'_i$, and $W'_j$ will hold by symmetry.

We divide into two possible cases: $W_j\cap Desc(z_i)_{G_{E-}}\ne \emptyset$ and $W_j\cap Desc(z_i)_{G_{E-}}= \emptyset$.

\begin{itemize}
\item $W_j\cap Desc(z_i)_{G_{E-}}\ne \emptyset$ - Note that $W_j$ does not block $p'_i$, since it doesn't block $p_j$.
Now, suppose for the sake of contradiction $(z_i\notindep y|W_j)_{G_{E-}}$. This means that $z_i$ is not d-separated from $y$ in $G_{E-}$, so there exists
an unblocked path from $y$ to $z_i$. But since $W_j$ conditions on a descendant of $z_i$, no matter how the path gets to $z_i$, it can cross a collider at $z_i$, and be extended by $p_j[z_j\sim z_i]$,
meaning that $(z_j\notindep y|W_j)_{G_{E-}}$, a contradiction. Finally, $W_j$ does not contain descendants of $y$. Therefore, we can use $W'_i = W_j$.
\item $W_j\cap Desc(z_i)_{G_{E-}}= \emptyset$ - In this case, we know that $y\notin Desc(z_i)_{G_{E-}}$, because if it were, we could create a path from $y$ to $z_j$ through $z_i$, 
by combining $p_j[z_j\sim z_i]$ with the directed path from $z_i$ to $y$ to show that $(z_j\notindep y|W_j)_{G_{E-}}$, a contradiction.

Consider $W'_i = W_i\setminus Desc(z_i)$. $W'_i$ does not block $p'_i$, since $p'_i$ is a directed path to the descendants of $z_i$ (there is no collider remaining). 
Finally, we need to show that $(z_i\indep y|W'_i)_{G_{E-}}$. Suppose not. This means that there exists a path $p_v$ from $y$ to $z_i$ which is not blocked by $W'_i$. 
We know that this path is blocked by $W_i$, so the blocking variable $v$ must be a descendant of $z_i$. 
Since the path starts at $y$, which is not a descendant of $z_i$ and goes to a descendant of $z_i$, it must come into the descendants of $z_i$ through an incoming edge.
This path must now get to $z_i$, but $W'_i$ has no conditioning in the descendants of $z_i$, so $p_v$ cannot cross a collider. The graph is acyclic, so $p_v$ cannot get to $z_i$ by following a directed path in $z_i$'s descendants. This is a contradiction. Therefore, $(z_i\indep y|W'_i)_{G_{E-}}$.
\end{itemize}

Since the conditions of theorem \ref{thm:IS} are satisfied for the new set, we can perform this procedure for all pairs of variables which intersect. 
The procedure will only need to be done at most once per pair of variables, since the resulting paths cannot increase the number of double-intersections.
The result is a set where $\forall z_i,z_j$, if $z_j$ is on path $p_i$, then $z_i$ is not on path $p_j$.
\end{proof}

Theorem \ref{thm:brito} requires a valid ordering of the variables. We showed that there are orderings of size 2, but in order 
to prove the theorem in general, we must show that there is a full ordering of all of the variables. To show this, we will first show that we can generate a set without intersection loops.

\begin{mydef}
An intersection loop is a sequence of paths $p_1,...,p_j$ where $\forall i$, $p_i$'s $Right$ intersects with $p_{i+1}$'s $Left$, and $p_j$'s $Right$ intersects with $p_1$'s $Left$.
\end{mydef}

\begin{figure}
\centering
\includegraphics[width=.6\linewidth]{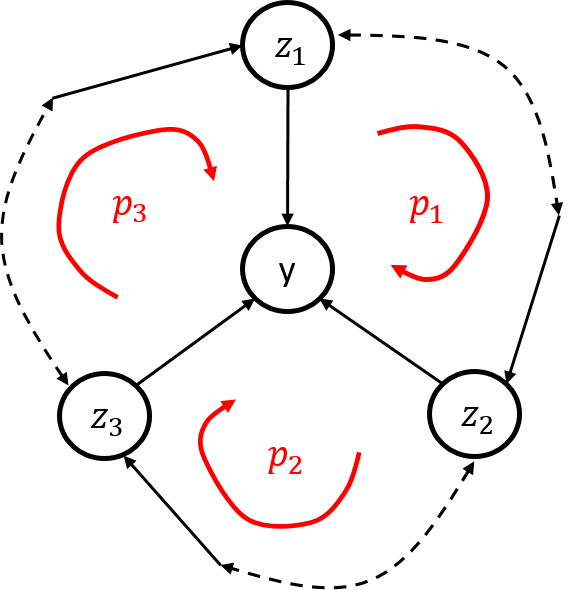}
\caption{An example of an intersection loop of size 3. Note that in this case there is no ordering of all 3 variables $i < j$ s.t. $z_j$ does not appear in path $p_i$.}
\label{fig:iloop3}
\end{figure}

An example of an intersection loop of size 3 is given in figure \ref{fig:iloop3}. Remember that each path can be decomposed into sets of half-treks WLOG,
so intersection loops are the only type of loop possible (corollary \ref{corollary:nov}).
Thankfully, the next lemma shows that any instrumental set can be modified such that there is no intersection loop.

\begin{lemma}
\label{lemma:iloop}
There exist triples satisfying lemma \ref{lemma:halftrek} iff there exist triples satisfying the lemma, AND there are no intersection loops between $\{p_1,...,p_k\}$.
\end{lemma}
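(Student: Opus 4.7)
The $(\Leftarrow)$ direction is immediate: any collection of triples with no intersection loops automatically satisfies the conditions of lemma \ref{lemma:halftrek}. For the $(\Rightarrow)$ direction, my plan is an induction on a suitable complexity measure, showing that any intersection loop can be broken by a local rewriting that preserves the lemma \ref{lemma:halftrek} conditions.

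I would start from triples satisfying lemma \ref{lemma:halftrek} and, assuming an intersection loop exists, pick one that is minimal (say, of shortest length $m$, with minimum total path length breaking ties). Call this loop $p_{i_1}, \dots, p_{i_m}$. Because each $p_{i_\ell}$ is a half-trek, $Left(p_{i_\ell}) = \{z_{i_\ell}\}$, so the loop condition forces $z_{i_{\ell+1}} \in Right(p_{i_\ell})$ for every $\ell$ (with cyclic indexing). Consequently, the subpath $q_{\ell+1} := p_{i_\ell}[z_{i_{\ell+1}} \sim y]$ is a directed path from $z_{i_{\ell+1}}$ to $y$. I would then cyclically reassign each instrumental variable in the loop to its new path by setting $p'_{i_{\ell+1}} \leftarrow q_{\ell+1}$, leaving the other triples untouched. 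The edge in $E$ that $z_{i_{\ell+1}}$ now ``identifies'' is the terminal edge of $q_{\ell+1}$, namely the edge that was terminal in $p_{i_\ell}$; thus the full edge set $E$ is still covered because the cyclic permutation is a bijection on the loop's target edges.

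Next I would verify that the rewritten collection still satisfies lemma \ref{lemma:halftrek}. The no-sided-intersection property is inherited because each $q_{\ell+1}$ is a subpath of an original $p_{i_\ell}$, with $Right(q_{\ell+1}) \subseteq Right(p_{i_\ell})$ and $Left(q_{\ell+1}) = \{z_{i_{\ell+1}}\}$; any new sided intersection with a path outside the loop would already have been a sided intersection among the original paths. For the conditioning sets, I would mirror the case analysis of lemma \ref{lemma:switchorder}: if $W_{i_\ell} \cap Desc(z_{i_{\ell+1}}) \ne \emptyset$, take $W'_{i_{\ell+1}} := W_{i_\ell}$; otherwise use $W'_{i_{\ell+1}} := W_{i_{\ell+1}} \setminus Desc(z_{i_{\ell+1}})$. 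In each case, the same collider-based argument already deployed in lemma \ref{lemma:switchorder} shows that $z_{i_{\ell+1}} \indep y \mid W'_{i_{\ell+1}}$ in $G_{E-}$ while $W'_{i_{\ell+1}}$ leaves $q_{\ell+1}$ unblocked, and $W'_{i_{\ell+1}}$ remains free of descendants of $y$.

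Finally, I would drive the induction with a monovariant: the total number of bidirected-edge occurrences across the chosen paths. Each $p_{i_\ell}$ in a loop is a genuine half-trek beginning with a bidirected edge, but its replacement $q_{\ell+1}$ is fully directed, so this count strictly decreases each time a loop is removed; a secondary tie-breaker on total path length handles loops among already-directed paths. The main obstacle I foresee is showing that the cyclic reassignment does not inadvertently spawn a brand-new intersection loop with paths outside the original one. I would address this by contradiction: any such new loop would have to traverse one of the $q_{\ell+1}$'s, but projecting back onto its parent path $p_{i_\ell}$ would then witness a sided intersection that must have already been present in the original collection, contradicting the lemma \ref{lemma:halftrek} hypothesis and completing the inductive step.
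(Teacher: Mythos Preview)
Your approach is essentially the paper's: pick an intersection loop, cyclically reassign each $z_{i_{\ell+1}}$ to the directed tail $p_{i_\ell}[z_{i_{\ell+1}}\sim y]$, and reuse the two-case conditioning-set argument from Lemma~\ref{lemma:switchorder} verbatim. The no-sided-intersection check goes through for the reasons you give.

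Where your write-up diverges is termination, and here the paper is considerably simpler than your induction. You flag as your ``main obstacle'' the possibility that the rewriting spawns a fresh loop, and you introduce a monovariant (bidirected-edge count) plus a secondary tie-breaker for ``loops among already-directed paths.'' That last worry is vacuous: if $q$ is a fully directed half-trek starting at $z$, then $z\in Left(q)\cap Right(q)$, so any other path having $z$ in its $Right$ (which is what participation in a loop through $q$ would require) already creates a Right--Right sided intersection with $q$. The paper exploits exactly this: the replacement paths are all directed, hence none of them can sit in \emph{any} intersection loop of the new collection, and since the unchanged paths are unchanged, no new loops appear at all. This turns the argument into a single pass over the finitely many loops of the original collection, with no need for a monovariant or your projection-back contradiction. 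Your proof is not wrong, just heavier than necessary; the key observation you are missing is that directed half-treks are automatically loop-free once no-sided-intersection holds.
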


\begin{proof}
We will generalize the proof of lemma \ref{lemma:switchorder} to work with an arbitrary amount of nodes. 
Using the same arguments as given in lemma \ref{lemma:switchorder}, the paths must all start with bidirected edges, and the only intersection allowed
is between the first element of each path, and the directed portion of other paths.

Suppose there is an intersection loop of size $n$, consisting of $p_1,...,p_n$, with corresponding triples $(z_1,p_1,W_1),(z_2,p_2,W_2),...,(z_n,p_n,W_n)$.
We claim that these triples can be replaced a new set: $(z_1,p_n[z_1 \sim y],W'_1),(z_2,p_1[z_2 \sim y],W'_2),...,(z_n,p_{n-1}[z_n \sim y],W'_n)$.

First note that each of the new paths is valid (since the original paths intersected from $Right$, meaning that $p_i[z_{i+1} \sim y]$ starts with $z_{i+1}\rightarrow$).
These new paths have no sided intersection (see lemma \ref{lemma:switchorder}). Furthermore, these new paths cannot be part of any intersection loop, since none of them start with bidirected edges.
This means that we only need to do one pass through all the loops in the original instrumental set to remove them all.

Finally, we mirror the arguments given in the proof of lemma \ref{lemma:switchorder} to show that there exist new conditioning sets for each $p_i$ that satisfy the conditions of lemma \ref{lemma:halftrek}.
Consider $W'_i$, for all $i=1...n$. We divide into two possible cases:

\begin{itemize}
\item $W_{i-1} \cap Desc(z_i)_{G_{E-}}\ne \emptyset$ - Using the same argument as in lemma \ref{lemma:switchorder}, $W'_i=W_{i-1}$ satisfies the requirements.
\item $W_{i-1} \cap Desc(z_i)_{G_{E-}}= \emptyset$ - Using the same argument as in lemma \ref{lemma:switchorder}, $W'_i=W_i\setminus Desc(z_i)$ satisfies the requirements.
\end{itemize}

Since the new set satisfies the requirements of lemma \ref{lemma:halftrek}, and the loop no longer exists, we can iteratively repeat the procedure for all intersection loops remaining
in the instrumental set, taking apart at most $\frac{k}{2}$ loops (if all paths are part of a loop of size 2). We are then left with a graph with no intersection loops.

\end{proof}

\begin{theorem}
There exists a set of triples satisfying theorem \ref{thm:brito} if and only if there exists a set of triples satisfying theorem \ref{thm:IS}.
\end{theorem}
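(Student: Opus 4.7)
The plan is to establish both directions separately. The $(\Leftarrow)$ direction, that theorem \ref{thm:brito} implies theorem \ref{thm:IS}, is a direct local check at each shared vertex. The $(\Rightarrow)$ direction leverages the transformation lemmas \ref{lemma:halftrek}, \ref{lemma:switchorder}, and \ref{lemma:iloop} already established earlier in this appendix.

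For the $(\Leftarrow)$ direction, take any triples satisfying theorem \ref{thm:brito} with indices $i<j$. For a common variable $V$ of $p_i$ and $p_j$, Brito's collider clause requires both $p_i[V\sim Y]$ and $p_j[Z_j\sim V]$ to end with an arrowhead at $V$. The first forces the edge of $p_i$ adjacent to $V$ on its $Y$-side to be incoming to $V$, so $V\notin \mathrm{Right}(p_i)$; symmetrically $V\notin \mathrm{Left}(p_j)$. Hence neither $\mathrm{Right}(p_i)\cap \mathrm{Right}(p_j)$ nor $\mathrm{Left}(p_i)\cap \mathrm{Left}(p_j)$ contains $V$, yielding the no-sided-intersection clause (iii) of theorem \ref{thm:IS}. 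Conditions (i) and (ii) are verbatim in both theorems.

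For the $(\Rightarrow)$ direction, start from a set satisfying theorem \ref{thm:IS} and apply lemma \ref{lemma:halftrek} to assume all paths are half-treks, then lemma \ref{lemma:switchorder} so that no two distinct $z_i, z_j$ appear on each other's paths, then lemma \ref{lemma:iloop} to remove intersection loops. Define the relation $z_j \prec z_i$ iff $z_j\ne z_i$ and $z_j$ lies on $p_i$; lemma \ref{lemma:switchorder} yields antisymmetry and lemma \ref{lemma:iloop} yields acyclicity, so $\prec$ admits a total linear extension. Relabel the triples along this extension, so $z_j\prec z_i$ implies $j<i$; Brito's ordering clause ``for $i<j$, $z_j\notin p_i$'' is then immediate. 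To obtain the collider clause, note that in a half-trek $p_i$ one has $\mathrm{Left}(p_i)=\{z_i\}$, so for any common $V$ of $p_i,p_j$ with $i<j$, no sided intersection forces $V\in \mathrm{Right}$ of at most one of the two paths. Two cases arise: (a) $V\ne z_i, z_j$, where $V$ must lie on the directed tail of one path (say $p_j$) so $p_j[z_j\sim V]$ plainly points to $V$, while $V\notin \mathrm{Right}(p_i)$ forces $p_i[V\sim Y]$ to point to $V$; (b) $V=z_i$ sits on $p_j$, in which case $z_i\notin \mathrm{Right}(p_i)$ forces the first edge of $p_i$ to be bidirected, giving $p_i[V\sim Y]=p_i$ an arrowhead at $z_i$, and the same directed-tail reasoning gives $p_j[z_j\sim z_i]$ pointing to $z_i$.

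The main obstacle will be the degenerate subcase (b), where the earlier instrumental variable $z_i$ itself appears as a common node on the later path $p_j$. Verifying Brito's arrowhead condition there requires combining the half-trek normalization from lemma \ref{lemma:halftrek} with no sided intersection to force the first edge of $p_i$ to be bidirected; once that structural forcing is in hand, the remaining arrowhead checks are routine given the preceding lemmas.
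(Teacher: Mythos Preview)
Your proposal is correct and follows essentially the same route as the paper: the easy direction is the local arrowhead check at shared vertices, and the harder direction normalizes to half-treks, uses acyclicity of the intersection relation (lemma~\ref{lemma:iloop}) to produce a topological ordering, and then verifies Brito's collider clause at the remaining intersections. Two small redundancies compared to the paper: your case~(a) is vacuous once the half-trek normalization is in place (corollary~\ref{corollary:nov} gives that all shared vertices are already some $z_i$), and lemma~\ref{lemma:switchorder} is subsumed by lemma~\ref{lemma:iloop} (2-cycles are intersection loops), so you can drop both and go straight from lemma~\ref{lemma:halftrek} to lemma~\ref{lemma:iloop} as the paper does.
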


\begin{proof}
$\Rightarrow$ The first two conditions are identical. The only difference is the third condition. Suppose we have a set satisfying theorem \ref{thm:brito}. If there is no intersection, then we have automatic satisfaction of lemma \ref{lemma:halftrek} and this theorem. If there is an intersection between two paths, then they share a variable
$V$, and both $p_i[V\sim Y]$ and $p_j[Z_j\sim V]$ point to $V$. 
Since $p_i[V\sim Y]$ points to $V$, $V\in  Left(p_i)$, and since the path is unblocked, it must point on to $z_i$, so $V \not\in Right(p_i)$.

Similarly, $p_j[Z_j\sim V]$ points to $V$, meaning that $V\in Right(p_j)$, and the path is unblocked, so it must go from $V$ to $x_j$, so $V\not\in Left(p_j)$. Therefore the two paths have no sided intersection.

$\Leftarrow$ The first two conditions are identical. We will focus on condition 3. Using lemma \ref{lemma:halftrek} and lemma \ref{lemma:iloop}, we can generate a set of triples which have no intersection except at the instrumental variables $z$. 
Since $z_i$ is in $Left$, any intersection must be in $Right$ of the intersecting path. 
This means that both $p_i[z_i \sim y]$ and $p_j[z_j\sim z_i]$ point to $z_i$, satisfying the second part of the third condition.

We generate an ordering for the variables by generating a directed intersection graph, where there is a directed arrow between $p_i$ and $p_j$ if $z_j$ appears in path $p_i$. 
Note that $z_j$ appears in path $p_i$ iff $p_j$'s Left intersects with $p_i$'s Right. By lemma \ref{lemma:iloop}, this graph is acyclic. We therefore can put the nodes in topological
order, giving us an ordering satisfying theorem \ref{thm:brito}.

\end{proof}

\section{Conditional IV vs Generalized Instrumental Set}
\label{sec:cIVvsgIS}

\begin{figure}[H]
\centering
\includegraphics[width=.6\linewidth]{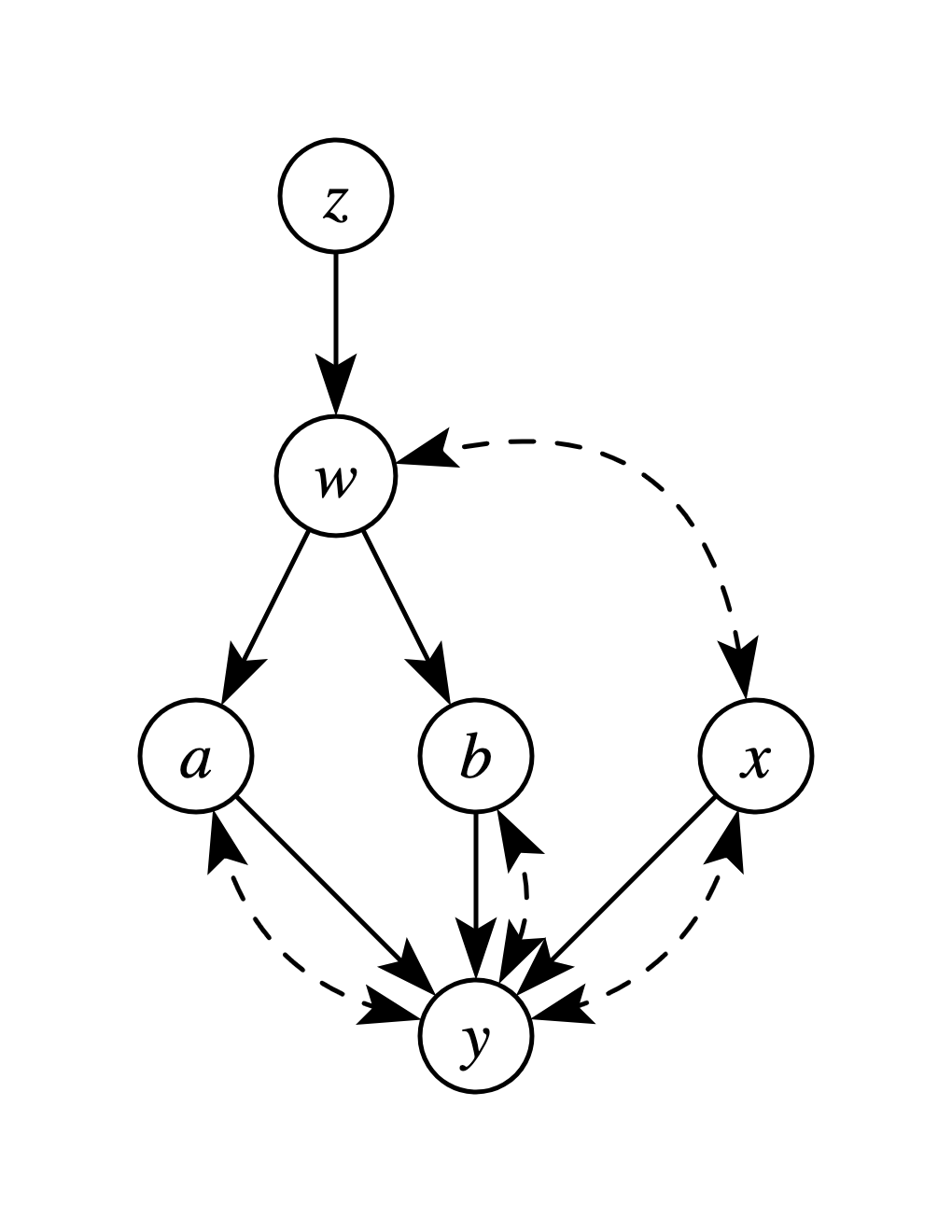}
\caption{A sample graph where $x\rightarrow y$ is not identifiable with generalized instrumental sets nor with quasi-AV sets, but is identifiable with a conditional IV. We can use $z|w$ as a single conditional instrumental variable.}
\label{fig:cIVex}
\end{figure}

While it might seem like generalized instrumental sets and quasi-AV sets are strictly more powerful than single conditional IVs, they have the requirement that the path from each instrument to its matched parent of $y$ contains no colliders (i.e., is unblocked given an empty conditioning set).

This requirement is not there for single conditional IVs. An example which takes advantage of this difference is given in figure \ref{fig:cIVex}.

\bibliography{book}  
\bibliographystyle{ims}

\end{document}